\pgfplotsset{compat=1.16}
\newcommand{\bs}[1]{{\bm{#1}}}
\LetLtxMacro{\originaleqref}{\eqref}
\LetLtxMacro{\originalref}{\ref}
\definecolor{colorhkust}{HTML}{142B8C}
\definecolor{colorshanghaitech}{HTML}{A20005}
\definecolor{colortsinghua}{HTML}{743481}
\definecolor{colordark}{RGB}{184,134,11}
\definecolor{colorRed}{RGB}{128, 0, 0}
\definecolor{colorGreen}{RGB}{0, 64, 0}
\definecolor{colorBlue}{RGB}{0, 0, 128}
\renewcommand{\eqref}{\originaleqref}
\newcommand{\sat}{\mathsf{S}}
\newcommand{\isl}{\mathsf{ISL}}
\newcommand{\gsl}{\mathsf{GSL}}
\newcommand{\bz}{\bs{z}}
\newcommand{\E}{\mathcal{E}}
\newcommand{\fedmega}{\textsc{FedMega}\xspace}
\newcommand{\fedavg}{\textsc{FedAvg}\xspace}
\newcommand{\fedisl}{\textsc{FedISL}\xspace}
\newcommand{\hlsgd}{\textsc{HL-SGD}\xspace}
\renewcommand{\mod}{\%}
\newtheorem{thm}{Theorem}
\newtheorem{lem}{Lemma}
\newtheorem{ass}{Assumption}
\newtheorem{cor}{Corollary}
\newtheorem{Rem}{Remark}
\tikzset{
	->-/.style={decoration={markings, mark=at position 0.5 with {\arrow{stealth}}}, postaction={decorate}}
}
\begin{document}

	\title
	{Satellite Federated Edge Learning: Architecture Design and Convergence Analysis}
	
        \author{Yuanming~Shi,~\textit{Senior Member, IEEE},~Li~Zeng,~\textit{Graduate Student Member, IEEE},~Jingyang~Zhu,~\textit{Graduate Student Member, IEEE},~Yong~Zhou,~\textit{Senior Member, IEEE},~Chunxiao~Jiang,~\textit{Senior Member, IEEE},\\~and Khaled~B.~Letaief,~\textit{Fellow, IEEE}
\thanks{Yuanming Shi, Li Zeng, Jingyang Zhu, and Yong Zhou are with the School of Information Science and Technology, ShanghaiTech University, Shanghai 201210, China (e-mail: $\{$shiym, zengli, zhujy2, zhouyong$\}$@shanghaitech.edu.cn). 
%\textit{(Corresponding author: Yong Zhou.)}
	
			Chunxiao Jiang is with the Tsinghua Space Center, Tsinghua University, Beijing,
			100084, China. (e-mail: jchx@tsinghua.edu.cn).
                
			Khaled B. Letaief is with the Department of Electronic and Computer Engineering, Hong Kong University of Science and Technology, Clear Water Bay, Hong Kong (e-mail: eekhaled@ust.hk).
}
}
	
	\maketitle
	%	\setlength\abovedisplayskip{2pt}
	%	\setlength\belowdisplayskip{2pt}
	%	\vspace{-1.5cm}
	\IEEEpeerreviewmaketitle
	
	%This is the Abstract and Keywords.

\begin{abstract}
    The proliferation of low-earth-orbit (LEO) satellite networks leads to the generation of vast volumes of remote sensing data which is traditionally transferred to the ground server for centralized processing, raising privacy and bandwidth concerns. Federated edge learning (FEEL), as a distributed machine learning approach, has the potential to address these challenges by sharing only model parameters instead of raw data.
    Although promising, the dynamics of LEO networks, characterized by the high mobility of satellites and short ground-to-satellite link (GSL) duration, pose unique challenges for FEEL. Notably, frequent model transmission between the satellites and ground incurs prolonged waiting time and large transmission latency. This paper introduces a novel FEEL algorithm, named \fedmega, tailored to LEO mega-constellation networks. By integrating inter-satellite links (ISL) for intra-orbit model aggregation, the proposed algorithm significantly reduces the usage of low data-rate and intermittent 
    GSL. Our proposed method includes a ring all-reduce based intra-orbit aggregation mechanism, coupled with a network flow-based transmission scheme for global model aggregation, which enhances transmission efficiency. Theoretical convergence analysis is provided to characterize the algorithm performance. Extensive simulations show that our \fedmega algorithm outperforms existing satellite FEEL algorithms, exhibiting an approximate 30\% improvement in convergence rate.
\end{abstract}

%\tableofcontents
\begin{IEEEkeywords}
	Satellite communication, federated edge learning, low-earth-orbit mega-constellation, inter-satellite link.
\end{IEEEkeywords}
	\section{Introduction}

As the 5G communication technology has matured into its commercialization phase and witnessed widespread global deployments \cite{shafi20175g}, both the academic and industrial communities have turned their attention towards the conceptualization and foundational groundwork for the impending 6G communication systems \cite{shi2023task}.
% \cite{Theroadmap, shi2023task}. 
The seamless integration of artificial intelligence (AI) with communication networks, as well as the emergence of non-terrestrial networks, are widely considered as the most salient features of the upcoming 6G era \cite{EdgeAI6G, giordani2020non}. Thanks to the abundance of data, advancements in machine learning (ML) algorithms, and exponential growth in computing power, AI has shown unparalleled proficiency across diverse tasks, such as image analytics and natural language processing. Recent progresses on large language models, e.g., ChatGPT and Bard, have further unleashed the wave of foundation models including general large language models and diverse industry-specific models such as medical models and remote sensing models \cite{bommasani2021opportunities}.
Meanwhile, low-earth-orbit (LEO) satellite mega-constellations have been in active deployment around the world, e.g., Starlink \cite{starlink2023}, Kuiper \cite{kuiper2023}, as well as China's recently announced LEO constellation project for integrated communication and remote sensing \cite{spacequip2023}. Modern LEO mega-constellations typically encompass hundreds to tens of thousands of satellites flying at altitudes ranging from $300$ to $2000$ km, with an objective to achieve ubiquitous coverage in space, air, ground and sea, and further provide an array of services from communication and remote sensing to navigation and computation \cite{xie2021leo}. 

In addition to communication capabilities, modern LEO satellites are often equipped with various payloads such as computation and remote sensing. Recently, there has been a rapid development in various satellite on-board computing hardwares, such as on-board CPU \cite{geist2019spacecube}, GPU \cite{kosmidis2020gpu4s}, and other emerging in-orbit computing chips, leading to consistent advancements in on-board computation capabilities. These progresses make the in-orbit edge computing a reality \cite{bhattacherjee2020orbit} and on-board ML a promising trend \cite{izzo2022selected, zhang2022progress}. Meanwhile, benefiting from the low altitude of LEO satellites and maturity of remote sensing techniques like Synthetic Aperture Radar (SAR) and multi-spectral sensing technologies, coupled with the explosive growth in the number of satellites, an immense volume of high-resolution remote sensing data is being incessantly collected at the satellite edge \cite{guo2016big}. Behind these voluminous and diverse data lies significant potential for information extraction and utilization. ML techniques, as the most prominent data mining methods in the current AI era, have been extensively explored in remote sensing data processing and offer various applications ranging from land cover classification \cite{tong2020land}, cloud detection \cite{li2020accurate}, and precipitation estimation \cite{chen2019machine}, to oil spill detection \cite{diana2021oil}, and wildfire detection \cite{hawkins2022creating}, etc. 

The traditional way of training ML models on such massive remote sensing data involves transmitting the raw data to ground stations (GS) and further processing these data in cloud computing centers \cite{Li2021}. However, with the dramatic increase in data volumes and the limited and expensive spectral resources between the satellites and the ground, this method incurs unbearable and increasing communication costs. In addition, the utilization of raw data involves severe data privacy issues, especially for high-resolution satellite images, the use of which is stringently regulated by relevant legislation \cite{maniadaki2021reconciling, coffer2020balancing}. In light of the two challenges on communication bottlenecks and data privacy, employing the computing capabilities at the satellite edge for in-orbit ML model training is emerging as a major trend \cite{izzo2022selected, ruuvzivcka2023fast}. More specifically, we propose to leverage the federated edge learning (FEEL) paradigm \cite{tak2020federated} for satellites to tackle the challenge of data mining on massive distributed satellite data. In FEEL, we aim to collaboratively train a global model with data from multiple edge devices without any raw data leaving their devices. Specifically, the edge devices first employ various optimization methods such as stochastic gradient descent (SGD) to train a local model. The edge devices then transmit the local model parameters to the parameter server (PS) which aggregates the received model parameters and broadcast them back to all the edge devices, creating a iterative model training process. In FEEL, only model parameters are shared instead of massive raw data. This not only mitigates substantial communication overheads but also addresses potential privacy concerns associated with raw data sharing, effectively tackling the aforementioned two challenges.

\subsection{Challenges and Related Works}
\subsubsection{Challenges for On-Board FEEL}
Although the deployment of FEEL in terrestrial networks has been extensively studied and the solutions are relatively mature \cite{yang2020federated,wang2021federated,shi2023vertical}, implementing FEEL in LEO mega-constellation networks presents unique challenges due to the high mobility of satellites, i.e., extremely short communication windows and sporadic satellite-ground connections \cite{matthiesen2023federated}. One of the significant characteristics of FEEL is the frequent model exchange between edge devices and the PS. However, in LEO networks, the communication window between a satellite, acting as the edge device, and the GS is exceptionally short. Missing this window necessitates a prolonged waiting period before re-establishing the connection, during which the FEEL process has to be suspended. Consequently, the ground PS often has to wait an extended time to collect all local models from the satellites, resulting in high latency.
\subsubsection{Recent Studies for On-Board FEEL}
Many existing studies focused on proposing scheduling schemes to address challenges posed by ground-to-satellite link (GSL). For instance, the authors in \cite{elmahallawy2022asyncfleo, so2022fedspace,razmi2022ground ,lin2023fedsn} focused on designing asynchronous FEEL mechanisms to reduce waiting time for model aggregation.
The authors in \cite{razmi2022scheduling} optimally scheduled the transmission-receiving time of the model parameters between the satellites and the GS.
However, this approach might result in stale model parameters at the PS, causing a certain performance loss.
Leveraging the mature inter-satellite link (ISL) technology for reducing the waiting time during GSL interruptions is an another main research direction \cite{razmi2022board, elmahallawy2023optimizing,razmi2024onboard}. 
The authors in \cite{razmi2022board} first proposed to use the intra-orbit ISLs to assist the model downloading from satellites to the ground and thus save the waiting time. 
Furthermore, the authors in \cite{elmahallawy2023optimizing} designed a scheme to choose the better sink satellite which acts as the relay node between other satellites and the ground.
Moreover, inter-orbit ISL was also considered for the model aggregation across different orbits \cite{zhai2023fedleo,wu2022DSFL}. 
\subsubsection{Recent Studies for Decentralized FEEL}
Different from the client-server architecture \cite{mcmahan2017communication}, decentralized (gossip) SGD that utilizes the device-to-device communications has been considered in decentralized FEEL. For instance, a general framework for decentralized SGD with dynamic topology is proposed in \cite{koloskova2020unified}.
The authors in \cite{wu2022DSFL} focused on decentralized satellite FEEL without the coordination of a ground PS, which substitutes the aggregation function of the ground PS for model aggregation both within the same satellite orbit and between different satellite orbits via ISLs.
In addition, the authors in \cite{guo2022hybrid} proposed to leverage both device-to-device and device-to-server communications to facilitate an hybrid architecture that combines the client-server and decentralized FEEL architectures to accelerate the model convergence.
\subsubsection{Limitations of Existing Studies}
However, existing works on deploying FEEL in LEO satellite networks with ISLs have not addressed several critical issues. Primarily, the existing literature only focuses on systems with a single GS, while in real systems there are usually multiple GSs communicating with satellites. Utilizing multiple GSs simultaneously can efficiently accelerate the model transmission between space and ground. 
Moreover, given that multiple satellites can concurrently establish links with a single GS, there is an absence of a well-designed cooperative transmission scheme that allows satellites to collaboratively accomplish a transmission task in a short period of time.
Furthermore, existing literature overlooks the unique characteristics inherent to both ISL and GSL when tailoring rapid FEEL algorithms. Notably, ISL offers a significantly enhanced energy efficiency and data rate compared to GSL. This oversight leads to overuse of GSL in existing FEEL algorithms, resulting in excessive energy consumption and latency. By astutely integrating these characteristics and custom-adjusting the usage frequency of various links in FEEL implementation, we can achieve improved performance with respect to the convergence speed of FEEL algorithms.

To this end, we consider the deployment of FEEL in LEO mega-constellation systems with multiple GSs and propose the \fedmega algorithm through a synergistic integration of architecture design and transmission scheme design.
% \subsection{Motivations}
% Introduce the motivation of the paper.
\subsection{Contributions}
In this paper, we develop an efficient FEEL framework tailored for the LEO mega-constellation network. Leveraging the superior data rate and stability of intra-orbit ISL over GSL, we introduce multiple intra-orbit training rounds per global round to reduce the GSL usage. Utilizing the orbit's ring topology, we propose an efficient communication scheme for intra-orbit aggregation based on the ring all-reduce algorithm. For rapid global model aggregation, we also propose an efficient model transmission mechanism based on the network flow algorithm, which further mitigates the delay. Furthermore, to characterize the convergence performance of the proposed algorithm, we theoretically analyze its convergence rate, taking into account non-convex loss functions and non-IID data distributions.

The major contributions are summarized as follows:
\begin{enumerate}
    \item We propose a novel and fast convergent FEEL algorithm tailored for the LEO network. This approach leverages the inherent stability and ultra-high transmission rates of intra-orbit ISL compared to GSL. Specifically, we increase the model aggregation frequency within each orbit, which solely relies on ISL, while reducing communication frequency between orbiting satellites and the ground PS, which relies on GSL. Within each global iteration, multiple intra-orbit cycles occur, wherein each satellite undergoes local training and subsequently attains the intra-orbit averaged model. At the end of each global iteration, a global aggregation transpires, where the ground PS retrieves models from space via GSL. This approach adeptly minimizes GSL dependence, thereby mitigating communication latency.
    \item To quickly accomplish the intra-orbit model aggregation step, we leverage the ring topology within each orbit and the full-duplex capability of laser ISL and propose an efficient ring all-reduce-based intra-orbit model aggregation scheme. This scheme ensures that every satellite within an orbit can acquire the intra-orbit averaged model, with the execution time upper bounded by a constant irrespective of the number of satellites in the orbit.
    \item To accelerate the global model aggregation step, we propose an efficient network flow-based model transmission scheme. This scheme can maximize the number of model parameters received per time slot by the ground PS. Since the total amount of model parameters to transmit is fixed, the latency is thereby minimized.
    \item We provide a comprehensive convergence analysis of the proposed \fedmega algorithm under non-convex settings and non-IID data distribution. We demonstrate that the proposed \fedmega algorithm attains linear speedup in terms of the number of local updates, the number of LEO satellites, and the number of intra-orbit aggregations. 
    % We also show that by exploiting ring all-reduce for intra-orbit aggregations, the proposed \fedmega achieves a convergence rate as the hierarchical distributed SGD with a simpler topology structure.
\end{enumerate}

In realistic LEO network settings, extensive simulations on both synthetic datasets and real datasets are conducted. The results demonstrate that our proposed \fedmega can achieve a better prediction accuracy and faster convergence rate than existing benchmarks, which verifies the superiority of the proposed \fedmega algorithm. 

\subsection{Organization and Notations}
For the remainder, we begin by introducing the system model of satellite FEEL in Section \ref{sec: system model}. Next, we present the proposed \fedmega framework, which includes the algorithm workflow as well as the efficient model transmission scheme in Section \ref{sec:flsys_design}. Then, we analyze the convergence of the proposed algorithm in Section \ref{sec: convergence}. Subsequently, we provide the simulation results in Section \ref{sec: simulations}. Lastly, we summarize the conclusions and the future research directions.

\emph{Notations:} Italic, bold lower-case, and bold upper-case letters represent scalars, column vectors, and matrices, respectively. Operators $(\cdot)^\top$ and $\text{diag}(\cdot)$ denote transpose and diagonal matrix, respectively. Also, the operator $|\cdot|$ indicates the cardinality of a set or the absolute value of a scalar, while $||\cdot||$ denotes the Euclidean norm. The notation $\cdot \% \cdot$ represents the modulo operation, i.e., $m \% n$ calculates the remainder when dividing $m$ by $n$. The notation $\angle(\cdot,\cdot)$ is used to denote the angle between two vectors, i.e., $\angle ({\bs{a}},{\bs{b}})$ represents the angle between vectors $\bs{a}$ and $\bs{b}$. Throughout the paper, all angles are expressed in degrees. The notations $R$, $T$, $E$, $r$, $t$, and
$e$ denote the number of communication rounds, the number of intra-orbit rounds, the number
of local updates, each communication round, each intra-orbit round, and each local update,
respectively.

	%This is the System Model.

\section{System Model}\label{sec: system model}
In this section, we first introduce the mathematical model of FEEL over LEO satellites and then provide a detailed characterization of the satellite FEEL system.
\subsection{FEEL over LEO Mega-Constellation Network}
% FEEL aims to utilize data collected from edge devices for distributed machine learning model training without data sharing to protect privacy. Currently, with the rapid advancement of satellite networks and remote sensing technologies, massive data is incessantly generated at the satellite edge. The machine learning models trained with this data find applications in numerous tasks such as cloud detection \cite{li2020accurate, zhang2019cloud}, sea ice sensing \cite{yan2018sea}, detection of oil spills \cite{diana2021oil}, and precipitation estimation \cite{chen2019machine}, etc.. Moreover, with the continuous improvement in computing power at the satellite edge, specifically owing to the progress in the development of on-board satellite GPUs such as GPU4S project \cite{kosmidis2020gpu4s, bruhn2020enabling}, high-performance chips such as SpaceCube project \cite{geist2019spacecube} and on-board FPGA \cite{losekamm2023reliable}, in-orbit machine learning has not only become a possibility but also an emerging trend \cite{izzo2022selected, denby2020orbital, kothari2020final, zhang2022progress}.

We consider the deployment of FEEL in an LEO mega-constellation network, as shown in Fig. \ref{fig:spacepart-sysmdl9}, wherein a ground PS coordinates satellites from $ M $ circular orbits indexed by $ \mathcal{M} := \left\{1,2,\dots,M\right\} $ with the help of $ G $ GSs indexed by $ \mathcal{G} := \left\{1,2,\dots,G\right\} $ to collaboratively train a global model. These GSs, dispersed across different geographic regions, primarily serve as relays between the ground PS and satellites. 
% Satellites from multiple orbits are capable of collecting data across various geographical locations from a multitude of distinct sensing perspectives, which enhances the data diversity \cite{kamp2021federated}. 
Specifically, each orbit $ m \in \mathcal{M}$ comprises $ K_m $ satellites denoted as $ \mathcal{K}_m := \left\{\sat_{m,1}, \sat_{m,2}, \dots, \sat_{m,K_m}\right\} $. Each satellite $ \sat_{m,k} $ retains an individual $ D_{m,k} $-sized dataset $ \mathcal{D}_{m,k} = \left\{\left(\bs{x}_{m,k}(i), y_{m,k}(i)\right) \mid i = 1,2,\dots,D_{m,k}\right\} $, where $ \bs{x}_{m,k}(i) $ is the $ i $-th sample's feature and $  y_{m,k}(i) $ is its ground-truth label. The local loss of each satellite $ \sat_{m,k} $ with respect to a $ d $-dimensional model parameter $ \bs{z} \in \mathbb{R}^{d} $ is
\begin{equation}
	F_{m,k}(\bs{z}) := \frac{1}{D_{m,k}} \sum_{i=1}^{D_{m,k}} \ell(\bz;\bs{x}_{m,k}(i), y_{m,k}(i)),
\end{equation}
where $ \ell(\bz;\bs{x}_{m,k}(i), y_{m,k}(i)) $ represents the sample-wise loss function. The primary objective of FEEL is to pursue an optimal $ \bz^{\star} $ that minimizes the global loss function $ F(\bz) $, i.e.,
\begin{equation}
	\bz^{\star} = \arg\min_{\bz\in\mathbb{R}^{d}}  F(\bz) := \sum_{m=1}^{M}\sum_{k=1}^{K_m} w_{m,k}F_{m,k}(\bz),
\end{equation}
where $ w_{m,k} := \frac{D_{m,k}}{\sum_{m=1}^{M}\sum_{k=1}^{K_m}D_{m,k}} $ is the model weight associated with satellite $ \sat_{m,k} $. A common strategy for efficiently solving this problem is to employ distributed Stochastic Gradient Descent (SGD), wherein each satellite updates its local model via SGD and shares its local model with the PS \cite{mcmahan2017communication}, i.e., the ground PS periodically downloads the satellites' local models, aggregates local models to update the global model, and then uploads the updated global model to all satellites to initiate the next round's on-board training, as shown in Fig. \ref{fig:spacepart-sysmdl9}. However, this approach can hardly be directly applied in LEO networks due to the unique challenges. In the following, we introduce the network model of modern LEO mega-constellations and identify these challenges.
\begin{figure}[t]
	\centering
	\includegraphics[width=1\linewidth]{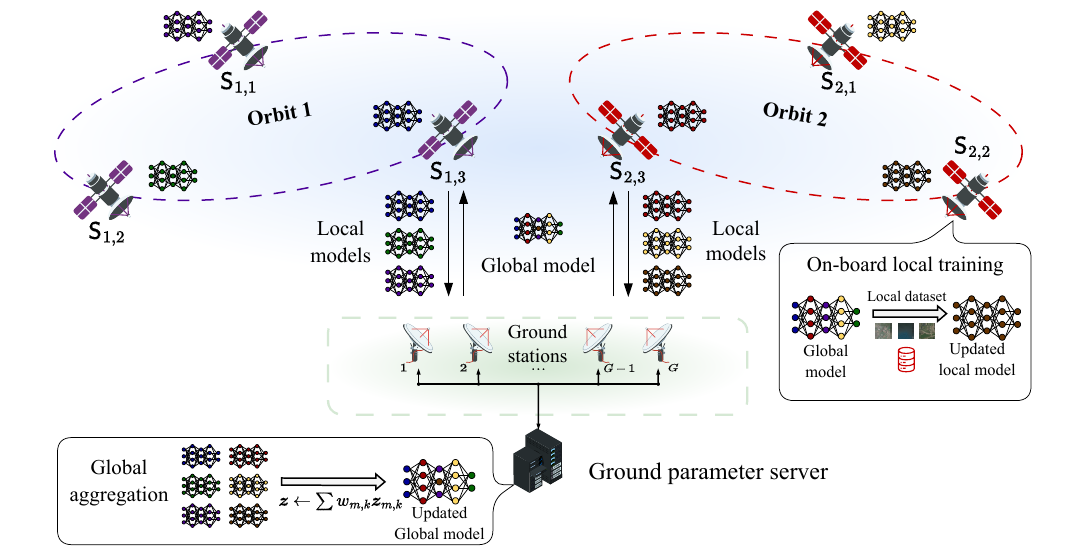}
	\caption[System model of FEEL over satellite CPN]{System model of satellite FEEL.}
	\label{fig:spacepart-sysmdl9}
	\vspace{-0.5cm}
\end{figure}

\subsection{Network Model of LEO Mega-Constellations}
In this subsection, we establish a network model for a LEO mega-constellation, with a focus on describing the network topology as well as the link characteristics. The entire network consists two sub-networks, i.e., the ground sub-network including the ground PS and GSs, and the space sub-network including all satellites. We start by introducing the nodes and connectivity inside each sub-network, followed by a description of the connectivity between two sub-networks. 
% From this, the challenges of designing an FEEL framework in this system are revealed. Moreover, a comparison of various types of links is provided to further guide the design of an efficient satellite FEEL framework.

\subsubsection{Ground Sub-Network}
The ground sub-network consists of a central PS and $ G $ GSs. The ground PS is the top control unit that is responsible for global model maintenance and coordinating the operations of all nodes. The $ G $ GSs at different geographic locations serve for establishing connectivity between the ground PS and the orbiting satellites. High-speed ground dedicated lines (GDL) are deployed between the PS and each GS \cite{Li2021}.
Note that due to the high construction costs of GSs and GDLs, the number of GSs, i.e., $ G $, is much smaller than the number of satellites in a mega-constellation, i.e., $ G \ll \sum_{m=1}^{M}K_m$. The ground sub-network exhibits a star topology as shown in Fig. \ref{fig:ground_sub_network}.

\begin{figure}[t]
	\centering
	\begin{subfigure}{0.32\linewidth}
		\centering
		\includegraphics[width=0.95\linewidth]{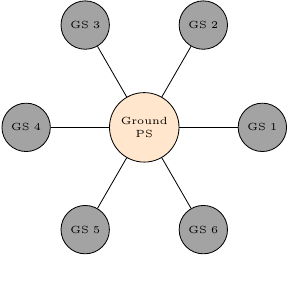}
		\caption{Ground sub-network: Star topology.}
		\label{fig:ground_sub_network}
	\end{subfigure}
	\hfill
	\begin{subfigure}{0.32\linewidth}
		\centering
		\includegraphics[width=0.95\linewidth]{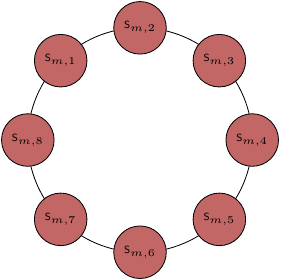}
		\caption{Each orbit in space sub-network: Ring topology.}
		\label{fig:space_sub_network}
	\end{subfigure}
	\hfill
	\begin{subfigure}{0.32\linewidth}
		\centering
		\includegraphics[width=0.95\linewidth]{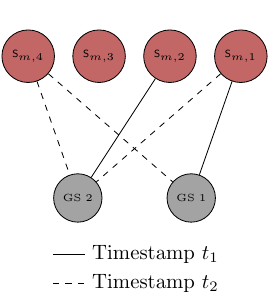}
		\caption{Ground-space connection: Time-varying topology.}
		\label{fig:gs_connection}
	\end{subfigure}
	\caption{Topology structures of network components.}
	\label{fig:topology}
	\vspace{-0.5cm}
\end{figure}

\subsubsection{Space Sub-Network}
The space sub-network is composed of all orbiting satellites, i.e., $ \cup_{m=1}^{M}\mathcal{K}_m $, as well as the ISL. Without loss of generality, we assume that each orbit $ m $ is circular and the $ K_m $ satellites are evenly spaced along the orbit. 
The satellite numbering scheme is as follows: we first arbitrarily choose a satellite and designate it as $ \sat_{m,1} $, then assign the remaining satellites in the ascending order according to their locations along the orbit in the clockwise direction, i.e., each satellite $ \sat_{m,k} \in \mathcal{K}_{m} $ is located between $ 1 $-hop neighbors $ \sat_{m,(k-1) \% K_m} $ and $ \sat_{m,(k+1) \% K_m}  $. 
Since satellites within the same constellation system often have similar hardware configurations, we reasonably assume that the computation and communication hardwares on different satellites share equivalent performance \cite{trevor2023starlink}. 
%In other words, to accomplish the same computation or transmission tasks, different satellites are expected to cost equivalent time and energy. 
Typically, the space sub-network exhibits the following features:
\paragraph{Laser-based link implementation and its characteristics}
As laser has been broadly recognized as the primary implementation method for ISL \cite{abdelsadek2022future} and the latest LEO projects, e.g., Telesat, Starlink and Kaskilo, have all started to deploy laser ISLs \cite{erwin2021thales, chaudhry2021laser, toyoshima2020recent}, it is reasonable and practical to incorporate the laser ISL into the system model. Specifically, we outline some of the key characteristics of laser ISL as follows.
\begin{itemize}
	\item \textbf{High data rate and low power consumption.} Laser ISL, operating within the unlicensed spectrum, consistently attains data transmission rates up to 100 gigabits per second, surpassing traditional RF links in both speed and energy efficiency due to its highly focused beams \cite{rob2021why}.
	\item \textbf{Full-duplex capability.} Laser ISLs supports full-duplex communication \cite{heine2010optical}, allowing simultaneous transmitting and receiving.
	\item \textbf{Free of interference.} The focused beam of light used in laser ISL ensures that signals would not be affected by electromagnetic interference or other signal interference.
\end{itemize}
We remark that the above features render laser ISL a particularly attractive option for the transmission of large-sized data.

\paragraph{Stable ring topology inside each orbit}
In each orbit $m \in \mathcal{M}$, each satellite $ \sat_{m,k} \in \mathcal{K}_m $ maintains two laser ISLs to connect its immediately adjacent satellites, i.e., one laser ISL for connecting $ \sat_{m,(k+1) \% K_m} $ and the other for $ \sat_{m,(k-1) \% K_m} $, thus each orbit can be viewed as a ring. 
In dense mega-constellations, the line-of-sight path between adjacent satellites generally is not obstructed by the Earth's surface.
%The laser ISL setup technically requires that the line-of-sight path between two satellites must not be obstructed by the Earth's surface. 
%However, when it comes to mega-constellations where satellites are densely spaced, this condition is typically not an issue between two adjacent satellites.
Moreover, due to the fact that satellites in the same orbit are relatively stationary with respect to each other, the intra-orbit laser ISLs are permanently stable \cite{abdelsadek2022future, chaudhry2021laser}.
% , implying once established, they exist continuously.
%\paragraph{No inter-orbit ISL consideration}
%Since the optical beam is quite narrow, establishing a laser ISL requires precise acquisition, tracking, and pointing (ATP) processes to keep alignment between the transmitter and receiver \cite{kaymak2018survey}, which is quite challenging when creating inter-orbit laser ISLs due to high relative velocities between satellites in different orbits. Moreover, inter-orbit laser ISLs often face issues like severe Doppler shifts deteriorating link quality \cite{kaushal2016optical}. As recent studies \cite{chaudhry2022temporary} indicate, inter-orbit laser ISLs are generally unavailable, so they are not considered in this work.
%Although projects like Starlink employ up to four laser ISLs on satellites, two for intra-orbit and two for inter-orbit connections, the existence of inter-orbit laser ISLs is highly dependent on specific constellation configurations and orbit placements, such as dense grid topologies. As we aim to develop a general SatFL framework applicable to various constellation configurations, we do not assume specific orbit placements. Therefore, in this work, we do not consider inter-orbit laser ISLs.

In addition, inter-orbit laser ISLs are not considered in this work due to issues like severe Doppler shifts deteriorating link quality \cite{kaushal2016optical}. Overall, the space sub-network, as illustrated in Fig. \ref{fig:space_sub_network}, is modeled as a collection of stable rings. For notation ease, $ \isl_{m}(k_1,k_2) $ is used to represent the laser ISL between two satellites $ \sat_{m,k_1} \in \mathcal{K}_m $ and $ \sat_{m,k_2} = \sat_{m,(k_1\pm1)\%K_m} $.
\subsubsection{Ground-Space Connection}
The connections between the ground and space sub-networks rely on the GSL. We use $ \gsl_{g}^{m}(k), g\in\mathcal{G}, m\in\mathcal{M}, k\in\mathcal{K}_m $ to denote the GSL between $ \sat_{m,k} $ and GS $ g $. Typically, GSL exhibits the following features:
\paragraph{RF-based GSL}
Unlike laser ISLs, GSLs are mainly based on radio frequency (RF) technology. Besides, to establish a link with GS $ g $, satellite $ \sat_{m,k} $ must maintain a minimum elevation angle $ \phi_{\mathrm{e}} $, i.e., 
\begin{equation}\label{eqn:gsl_feasibility}
	\angle \left ({\bs{r}}_{\sat_{m,k}} - {\bs{r}}_{g},{\bs{r}}_{g}\right ) \leq 90^\circ - \phi_{\mathrm{e}},
\end{equation}
where $ \bs{r}_{\sat_{m,k}} $ and $ \bs{r}_{g}$ denote the coordinate vectors of $ \sat_{m,k} $ and GS $ g $ in the geocentric coordinate system, respectively. This is mainly applied for preventing severe signal attenuation caused by thick atmosphere.
% \begin{figure}[t]
% 	\centering
% 	\includegraphics[width=1\linewidth]{pics/SpacePart-ele.drawio}
% 	\caption{Minimum elevation angle for establishing a GSL.}
% 	\label{fig:elevation angle}
% 	\vspace{-0.5cm}
% \end{figure}
In addition, GSL in LEO networks exhibits very different characteristics compared to the terrestrial links and laser ISL presented as follows:
\begin{itemize}
	\item \textbf{Low data rate and high power consumption.}  
        Due to a variety of factors such as limited bandwidth, strong interference, severe Doppler effect, significant atmospheric attenuation, and even unstable weather conditions, GSL typically offers a much lower data rate but significantly higher energy cost than laser ISL \cite{cao2022network, kaushal2016optical}. 
	\item \textbf{Short link duration.} Due to the low altitude and high mobility of LEO satellites, the GSL establishment condition \eqref{eqn:gsl_feasibility} can rarely be satisfied. Thus,  each GSL's duration would be extremely short. In modern LEO constellations, the duration of a GSL is only around $ 0.5\sim5 $ minutes \cite{al2021session}. This induces a highly dynamic topology between the ground and space sub-networks, which complicates routing and link scheduling for data transmission \cite{al2021session}.
\end{itemize}

\paragraph{Multi-point to multi-point communication}
With the help of mature multi-beam and multiple access technologies, once the link establishment conditions are satisfied, a single satellite can communicate with multiple GSs at the same time, while a single GS can also serve multiple satellites simultaneously, as shown in Fig. \ref{fig:gs_connection}.

{
	%	\color{blue}
	\paragraph{Sporadic ground-space connections}
	Due to the short duration and limited number of GSs, the connectivity between the GSs and satellites tends to display sporadic characteristics \cite{matthiesen2023federated, so2022fedspace}. That is, at any given timestamp, only a small proportion or possibly none of the satellites are capable of establishing feasible GSLs.
%	This has two implications. Firstly, a satellite probably may not be able to establish any GSL during the majority of its orbiting time. On the other hand, at any given timestamp, only a minuscule fraction or potentially no satellites are capable of setting up feasible GSLs.
}

\begin{figure}[tbp]
	\centering
	\begin{subfigure}{0.9\linewidth}
		\centering
		\includegraphics[width=1\linewidth]{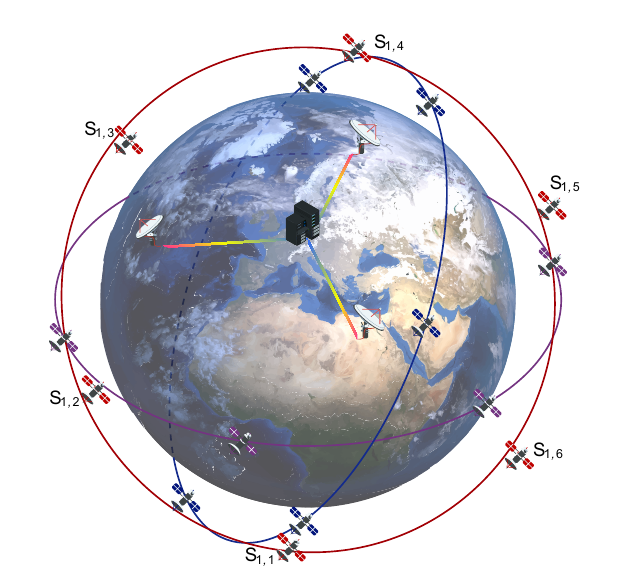}
		% \caption{Satellites travel in various orbits around the Earth. A ground PS establishes connections with multiple GSs located at different geographical locations.}
		\label{fig:systemmodel1}
	\end{subfigure}
	% \hfill
	% \begin{subfigure}{0.49\linewidth}
	% 	\centering
	% 	\includegraphics[width=1\linewidth]{pics/SpacePart4.drawio}
	% 	% \caption{Each orbit demonstrates a persistent ring topology, maintaining by laser ISLs. Each GS connects to the ground PS via GDLs, exhibiting a star topology. Satellites connects to GSs via short-lived GSLs, where the connectivity is time-varying.}
	% 	\label{fig:systemmodel2}
	% \end{subfigure}
	\caption{Overview of the LEO satellite network. Satellites travel in various orbits around the Earth. A ground PS establishes connections with multiple GSs located at different geographical locations.}
	\label{fig:sysmdl}
	\vspace{-0.5cm}
\end{figure}

In summary, the ground sub-network with a star topology and the space sub-network with a multi-ring topology both exhibit stable topological structures. However, the ground-space connection exhibits sporadic feature, highly dynamic topology, and low data rate. The whole system model is depicted in Fig. \ref{fig:sysmdl}. 
%In the next section, we shall focus on how these network characteristics bring challenges to satellite FEEL, as well as main research directions for developing an efficient satellite FEEL framework.

\subsection{Challenges for Efficient Satellite FEEL System Design}
{
	%	\color{blue}
	In this subsection, we discuss the challenges for efficient FEEL system design. FEEL, as a distributed learning paradigm, typically relies on efficient local model training and frequent model sharing. However, model sharing within integrated space-ground networks exhibits substantial differences compared to ground-only networks. To be specific, short-lived and low-speed GSLs pose two primary challenges:
	
	\begin{enumerate}
		\item \textbf{Long waiting time for feasible GSLs}: 
		To perform FEEL training, the ground PS typically needs to aggregate local models from satellites as well as send the updated global model back to satellites, which requires feasible GSLs for model transmission between the space and the ground. 
		However, given the sporadic nature of ground-space connectivity, such GSLs may not be instantly accessible. Hence, it would cause a long waiting time for feasible GSLs \cite{so2022fedspace,razmi2022scheduling,razmi2022ground} in the global model aggregation and broadcasting phase. 
%		With the help of laser ISL, since a satellite can begin transmission as soon as any satellite within the same orbital plane establishes a GSL, the effect can be to some extent mitigated \cite{matthiesen2023federated, razmi2022board}, depending on the specific orbital parameters and number and locations of GSs. Nonetheless, even in case of considering laser ISL \cite{matthiesen2023federated}, the average waiting time, e.g., about 5 minutes, could still be considerably long compared to the local computation time, e.g., about 1 minute. 
%		Hence, a large fraction of time would be wasted on waiting the feasible GSL to transmit local models.
		\item \textbf{High latency during GSL transmission}: 
		In the global model aggregation step, after waiting for a feasible GSL, model transmission between the space and the ground starts. However, the low data rate of GSLs may cause a high latency for the model transmission. Meanwhile, due to the short duration of GSL, the model transmission may not be able to be completed in one GSL connectivity session and have to experience another long waiting process during which no GSL is feasible. Moreover, both the size of the learning model on each satellite and the total number of satellites in a mega-constellation may be considerably large, resulting in a huge amount of model parameters to be downloaded. This further increases the transmission latency for local model downloading via GSL. 
		%	Existing literature \cite{razmi2022board, razmi2022scheduling, elmahallawy2023optimizing} 
		%	Existing literature \cite{razmi2022board} has assumed that data can always be successfully transmitted via a single GSL, and hence only considered the selection of an appropriate GSL for data transmission. However, it is crucial to take into account the possibility that data may still be in transmission on a GSL until it fails, resulting in incomplete model transmission.
	\end{enumerate}
}
In view of the challenges delineated above, in the following, we shall propose the \fedmega algorithm.

%In view of the challenges delineated above, it is imperative to design a novel SatFL framework that tackles both the straggler problem and the transmission failure problem, thereby enhancing efficiency and robustness. The framework should aim to achieve the following objectives:
%
%\begin{enumerate}
%	\item \textbf{Minimal GSL Utilization and Feasibility Requirement}: To alleviate the straggler problem, the new SatFL framework should minimize the reliance on specific GSLs and, ideally, not require the feasibility of any particular GSL.
%	\item \textbf{Reduced GSL Burden and Adaptive Transmission Schemes}: To address the transmission failure problem, the data volume transmitted over each utilized GSL should be minimized. Furthermore, an adaptive transmission scheme should be developed to handle scenarios in which GSLs become infeasible during the transmission process.
%\end{enumerate}
%
%Incorporating the aforementioned objectives, a novel SatFL framework is meticulously devised, wherein the FedMega algorithm is proposed in subsequent sections. This innovative algorithm endeavors to tackle the challenges of stragglers and transmission failures by reducing the reliance on GSLs through minimizing GSL utilization frequency, adopting distributed downloading via GSLs, and integrating adaptive transmission schemes.

\section{Topology-Aware Efficient satellite FEEL Architecture}\label{sec:flsys_design}

% In the previous sections, we introduced the system model and revealed the main challenges in designing a satellite FEEL system. 
In this section, we first present general guidelines of architecture design based on the system features, followed by proposing the \fedmega algorithm which exploits the characteristics of the network topology.
% Subsequently, we summarize the advantages of proposed algorithm.

\subsection{Design Principles}
%The core of FL algorithms lies in the interplay between distributed model training and periodic model aggregation. Distributed model training leverages the distributed computing power and data to improve the quality of the model, while model aggregation enhances the performance and generalization capabilities of a global model by incorporating knowledge from diverse local models trained on different devices. This process allows the global model to benefit from a broader range of information, resulting in improved performance and adaptability to various data distributions.
As presented in Section \ref{sec: system model}, efficiently deploying satellite FEEL faces two critical challenges, i.e., long waiting time for feasible GSLs and high GSL transmission latency. A well-designed algorithm should address these two challenges by mitigating the long waiting time and minimizing the transmission latency. 
%Considering the characteristics of these two links, i.e., laser ISL has more stable, high-speed, and low-energy consumption features compared to GSL, one can draw the preliminary insight that laser ISL is more preferable than GSL, which implies a {\color{red}'more laser ISL usage, less GSL usage'} approach.
In light of this, we shall propose two general principles for satellite FEEL system design:
\begin{enumerate}
    \item \textbf{Minimizing GSL utilization}: Recognizing that any utilization of GSL might incur waiting time, the proposed FEEL framework should minimize the frequency of GSL utilization. This approach is intended to alleviate the challenge of excessive waiting time.
    \item \textbf{Reducing GSL transmission load}: To minimize the latency incurred by transmitting models via GSL, the volume of data sent over each GSL should be reduced. This objective can be realized by distributing the total transmission workload across multiple GSLs and enabling collaboration among them. 
  % to operate distributed transmission. Primary strategies to achieve this include minimizing the total transmission load and arranging multiple GSLs to conduct cooperative and distributed transmission.
\end{enumerate}
%By reducing the frequency of GSL usage and the requirement for GSL feasibility, we can significantly minimize system waiting time at this stage, thus mitigating the impact of the straggler problem. Additionally, by reducing GSL burden and designing adaptive transmission schemes, we can effectively handle transmission failure issues.
In the following, we shall present the \fedmega algorithm, which has been developed based on the above design guidelines. 

\begin{figure}[tbp]
	\centering
	\includegraphics[width=1\linewidth]{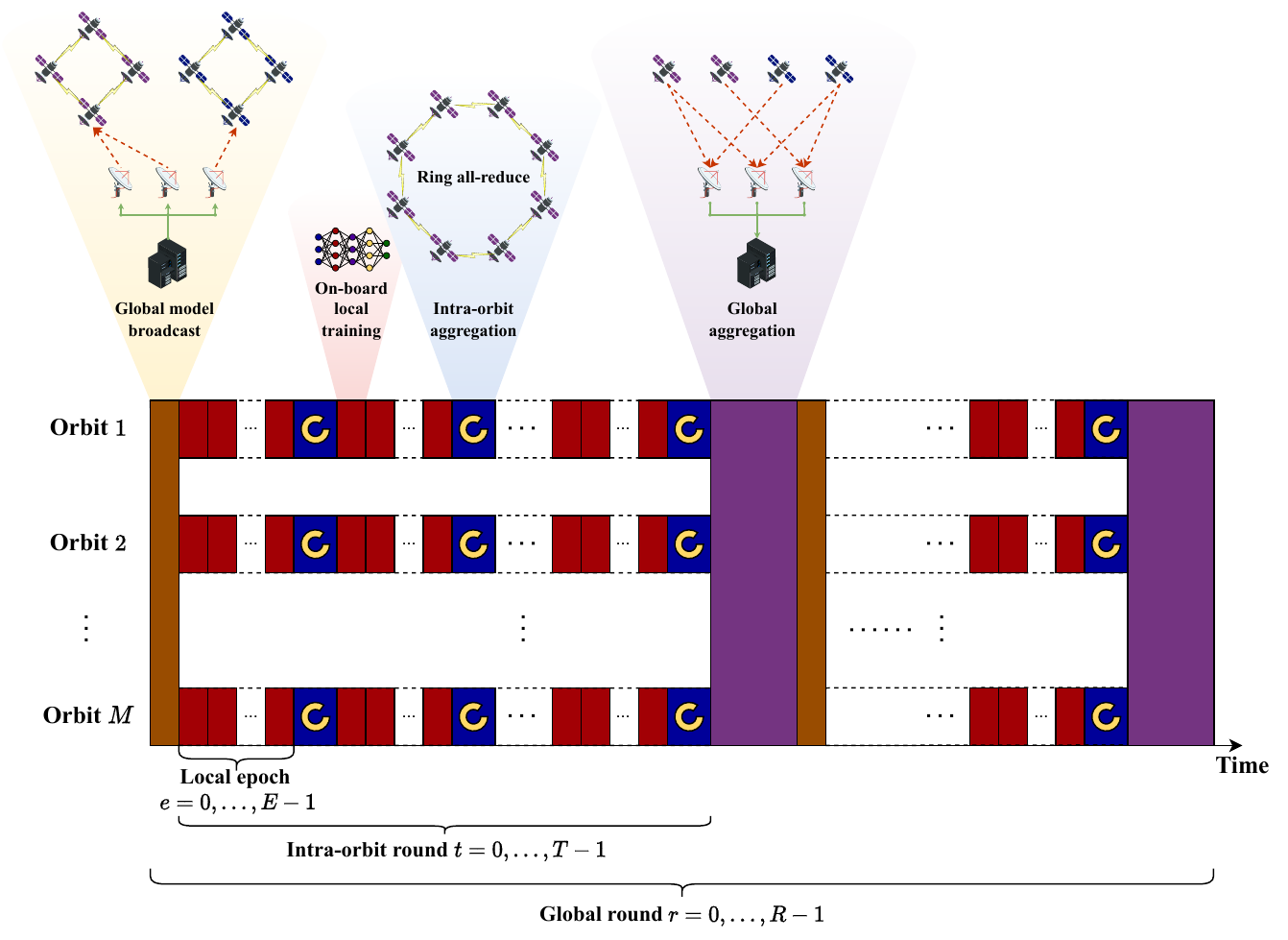}
	\caption{Workflow of \fedmega.}
	\label{fig:fedmega}
	\vspace{-0.5cm}
\end{figure}

\subsection{Proposed \fedmega Framework}
{
	%	\color{blue} 
	\fedmega is specifically tailored to fit the unique network topology and link characteristics of the LEO network. 
 % The architecture well addresses the challenges of efficient satellite FEEL deployment by employing an iterative process with multiple intra-orbit rounds for frequently enhancing generalization capabilities while reducing the frequency of GSL utilization, as well as design a distributed and adaptive model transmission scheme to minimize the transmission latency between satellites and ground PS. 
The general workflow of \fedmega is presented in Fig. \ref{fig:fedmega}. Specifically, in each global round, the algorithm procedures are divided into three phases: on-board local training, intra-orbit aggregation, and global aggregation and broadcasting, as elaborated below.
}
\subsubsection{On-Board Local Training}
In this phase, each satellite $\sat_{m,k}$ undergoes local training by performing $E$ steps of mini-batch SGD on a randomly sampled subset of its local dataset, represented by $\xi_{m,k}^{r, t, e} \subseteq D_{m,k}$. At each step $e = 0,1,\dots,E-1$, the local model update is computed as follows:
\begin{equation}
	\bs{z}_{m,k}^{r, t, e+1} = \bs{z}_{m,k}^{r, t, e} - \eta^{r, t, e} \nabla F_{m,k}^{r,t,e}(\bs{z}_{m,k}^{r, t, e}).
\end{equation}
Here, $\eta^{r, t, e}$ represents the stepsize, and $\nabla F_{m,k}^{r,t,e}(\bs{z}_{m,k}^{r, t, e})$ denotes the stochastic gradient of the local loss function $F_{m,k}(\cdot)$ at point $\bs{z}_{m,k}^{r, t, e}$ with respect to the mini-batch $\xi_{m,k}^{r, t, e}$.

\subsubsection{Intra-Orbit Aggregation}
\begin{figure*}[tbp]
	\centering
	\includegraphics[width=1.0\linewidth]{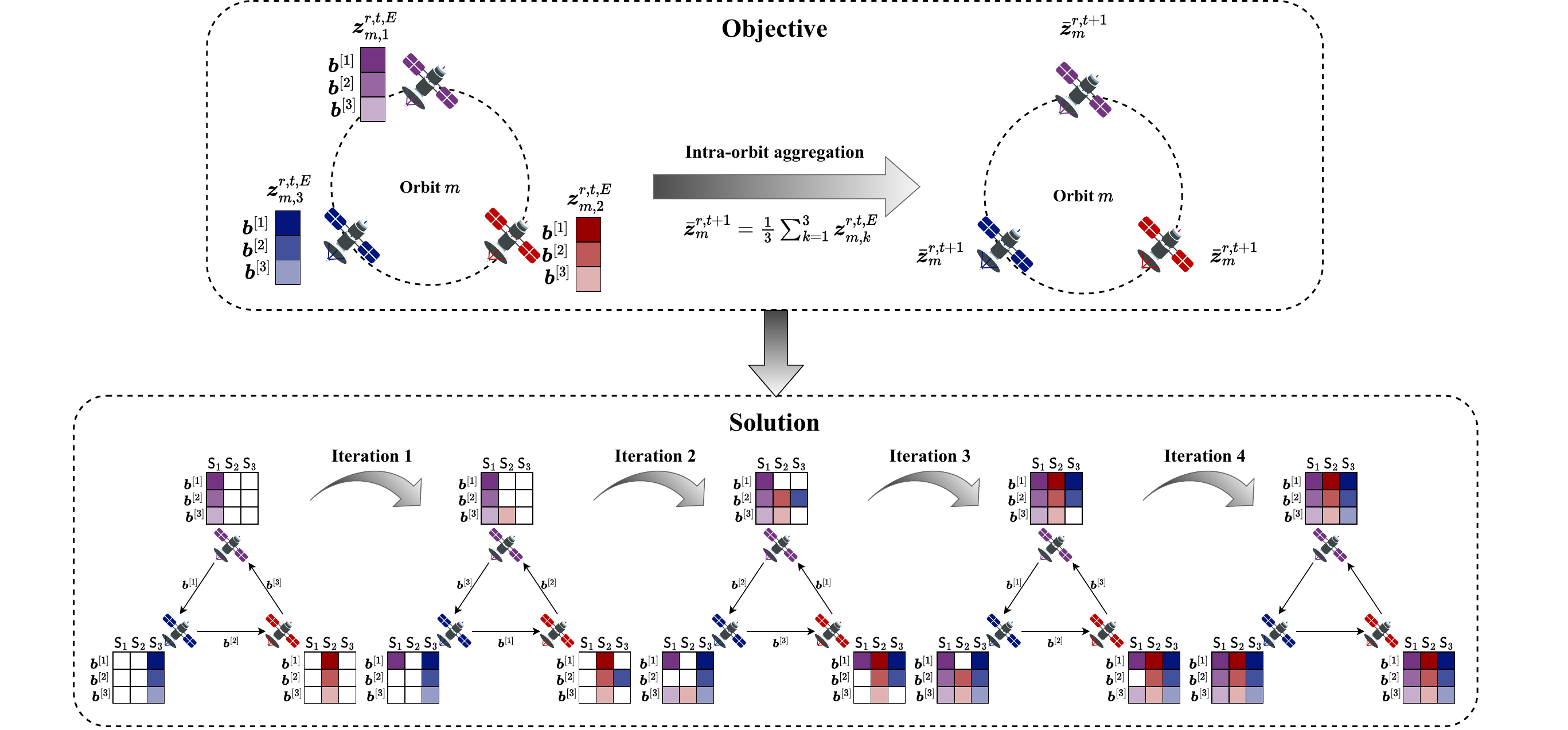}
	\caption{Basic principle of intra-orbit aggregation via ring all-reduce. Consider the orbit $m$ containing $K_m=3$ satellites, i.e., $\sat_{m,1}, \sat_{m,2},\sat_{m,3}$. For notation ease, we omit subscript $m$ and denote them as $\sat_{1}, \sat_{2},\sat_{3}$ in the figure. Each satellite splits its local model into $3$ equal-sized pieces, i.e., $ \bs{b}^{[1]}, \bs{b}^{[2]}, \bs{b}^{[3]}$. In each iteration, all satellites perform simultaneously, and each satellite only transmits one of the 3 pieces to the neighboring satellite, thus requiring only $1/3$ of the time compared to transmitting the whole model. There are a total of $4$ iterations. Note that, for ease of understanding, the workflow in this figure does not utilize the full-duplex feature of laser ISL. Once the full-duplex feature is utilized, as stated in Algorithm \ref{alg:ring-all-reduce}, each satellite splits its local model into 6 pieces: 3 for clockwise transmission and 3 for anti-clockwise transmission, thereby further saving $1/2$ of the time. Each satellite is actually sending and receiving six blocks simultaneously.}
	\label{fig:ringallreduce}
	\vspace{-0.5cm}
\end{figure*}

	%	\color{blue} 
After performing $E$ rounds of on-board local training, our objective is to attain a periodic averaged intra-orbit global model with the intention of enhancing the performance and generalization abilities of a global model. Specifically, each satellite should attain the intra-orbit model average $\bar{\bm{z}}_{m}^{r,t+1}=\frac{\sum_{k^{\prime} = 1}^{{K}_m}w_{m,k^{\prime}}\bs{z}_{m,k^{\prime}}^{r, t, E}}{\sum_{k^{\prime}=1}^{{K}_m}w_{m,k^{\prime}}} $ as the new start point of local training. 
{
	\SetNlSty{textbf}{}{:}
	\IncMargin{1em}
	\begin{algorithm}[t]
		\setstretch{0.95}
		\textbf{Parameters:} $r$, $t$, $m$.\\
		%			\ForPar{orbit $m\in\mathcal{M}$}
		%			{
			%			\ForPar{satellite $\mathsf{S}_{m,k}\in\mathcal{K}_m$}
			%			{
				%				\tcp{Splitting model into $2K_m$ chunks}
				%				\For{$ u = 0,\dots,K_m-1$}
				%				{
					%					$
					%					\bm{z}_{m,k}^{+,[u]} \leftarrow \frac{w_{m,k}}{\sum_{k^{\prime}=1}^{K_m}w_{m,k^{\prime}}}\bm{z}_{m,k}^{r,t,E}\left[\frac{ud}{K_m}+1:\frac{(u+1/2)d}{K_m}\right]
					%					$;\\
					%					$
					%					\bm{z}_{m,k}^{-,[u]} \leftarrow \frac{w_{m,k}}{\sum_{k^{\prime}=1}^{K_m}w_{m,k^{\prime}}}\bm{z}_{m,k}^{r,t,E}\left[\frac{(u+1/2)d}{K_m}+1:\frac{(u+1)d}{K_m}\right]
					%					$;\\
					%					%						$w_{m,k}^{\pm,[u]} \leftarrow w_{m,k}$;\\
					%				}
				%			}
			\ForEach{iteration $i = 0,\cdots,2K_m-3$}
			{					
				\ForPar{satellite $\mathsf{S}_{m,k}\in\mathcal{K}_m$}
				{
					With full-duplex laser ISL, simultaneously do:\\
					$\;\;\;\;$
					Transmit $\bm{b}_{m,k}^{+,[(k-i)\mod K_m]}$ to $\mathsf{S}_{m,(k+1)\mod K_m}$;\\
					$\;\;\;\;$
					Transmit $\bm{b}_{m,k}^{-,[(k+i)\mod K_m]}$ to $\mathsf{S}_{m,(k-1)\mod K_m}$;\\
					%						$\;\;\;\;$
					%						Receive $\bm{z}_{m,(k-1)\mod K_m}^{+,[(k-1-i)\mod K_m]}$ from $\mathsf{S}_{m,(k-1)\mod K_m}$;\\
					%						$\;\;\;\;$
					%						Receive $\bm{z}_{m,(k+1)\mod K_m}^{-,[(k+1+i)\mod K_m]}$ from $\mathsf{S}_{m,(k+1)\mod K_m}$;\\
					\uIf{$i < K_m - 1$}
					{
						$\bm{b}_{m,k}^{+,[(k-1-i)\mod K_m]} \leftarrow \bm{b}_{m,k}^{+,[(k-1-i)\mod K_m]} + \bm{b}_{m,(k-1)\mod K_m}^{+,[(k-1-i)\mod K_m]}$;\\
						$\bm{b}_{m,k}^{-,[(k+1+i)\mod K_m]} \leftarrow \bm{b}_{m,k}^{-,[(k+1+i)\mod K_m]} + \bm{b}_{m,(k+1)\mod K_m}^{-,[(k+1+i)\mod K_m]}$;\\
					}
					\ElseIf{$i \geq K_m - 1$}
					{
						$\bm{b}_{m,k}^{+,[(k-1-i)\mod K_m]} \leftarrow  \bm{b}_{m,(k-1)\mod K_m}^{+,[(k-1-i)\mod K_m]}$;\\
						$\bm{b}_{m,k}^{-,[(k+1+i)\mod K_m]} \leftarrow  \bm{b}_{m,(k+1)\mod K_m}^{-,[(k+1+i)\mod K_m]}$;\\
					}	
				}
			}
			\ForPar{satellite $\mathsf{S}_{m,k}\in\mathcal{K}_m$}
			{
				%				\tcp{$\mathsf{S}_{m,k}$ updates local model via concatenating chunks}
				$\bm{z}_{m,k}^{r,t+1,0} \leftarrow \mathrm{Concatenate}\left (\bm{b}_{m,k}^{+,[0]},\bm{b}_{m,k}^{-,[0]},\bm{b}_{m,k}^{+,[1]}, \dots, \bm{b}_{m,k}^{-,[K_m-1]}\right )$;
			}
			%			}
		% \KwResult{Each satellite $\sat_{m,k}$ obtains $\bm{z}_{m,k}^{r,t+1,0} =  \frac{\sum_{k^{\prime} = 1}^{{K}_m}w_{m,k^{\prime}}\bs{z}_{m,k^{\prime}}^{r, t, E}}{\sum_{k^{\prime}=1}^{{K}_m}w_{m,k^{\prime}}}$.}
		\caption{Ring All-Reduce for Intra-Orbit Aggregation with Full-Duplex Laser ISL.}
		\label{alg:ring-all-reduce}
	\end{algorithm}
}	
	%	To obtain an average model in the client-server topology, the clients are required to transmit their local models to the server, the server takes an average of all the local models and broadcast the updated model to all the clients, which results in huge communication burden. In addition, in the decentralized topology, there have been no efficient and scalable design for each client to achieve model averaging without the coordination of the server.
%As a crucial component in FL, periodic model aggregation aims at improving the performance and generalization capabilities of a global model by integrating knowledge from diverse local models trained on various devices. 
However, in the LEO network, the frequent transmission of all local models to the ground PS for global model aggregation introduces significant waiting time and transmission latency due to the usage of GSLs. Nevertheless, as detailed in the system model in Section \ref{sec: system model}, laser ISLs provide stable and rapid connections among satellites within the same orbit, resulting in a ring topology. This advantageous characteristic inspires us to explore the potential of utilizing laser ISLs for efficient intra-orbit aggregation within the orbit, consequently attaining a certain level of aggregation effectiveness. This intra-orbit aggregation leverages the stable, high-speed laser ISLs within the orbit, eliminating the significant transmission latency and conserving energy. Following several rounds of intra-orbit aggregation, a global aggregation is performed. This design significantly diminishes the waiting time for GSLs during the aggregation process and accelerates model training by reducing the utilization frequency of GSLs.

\begin{figure}[t]
	\centering
	\includegraphics[width=0.80\linewidth]{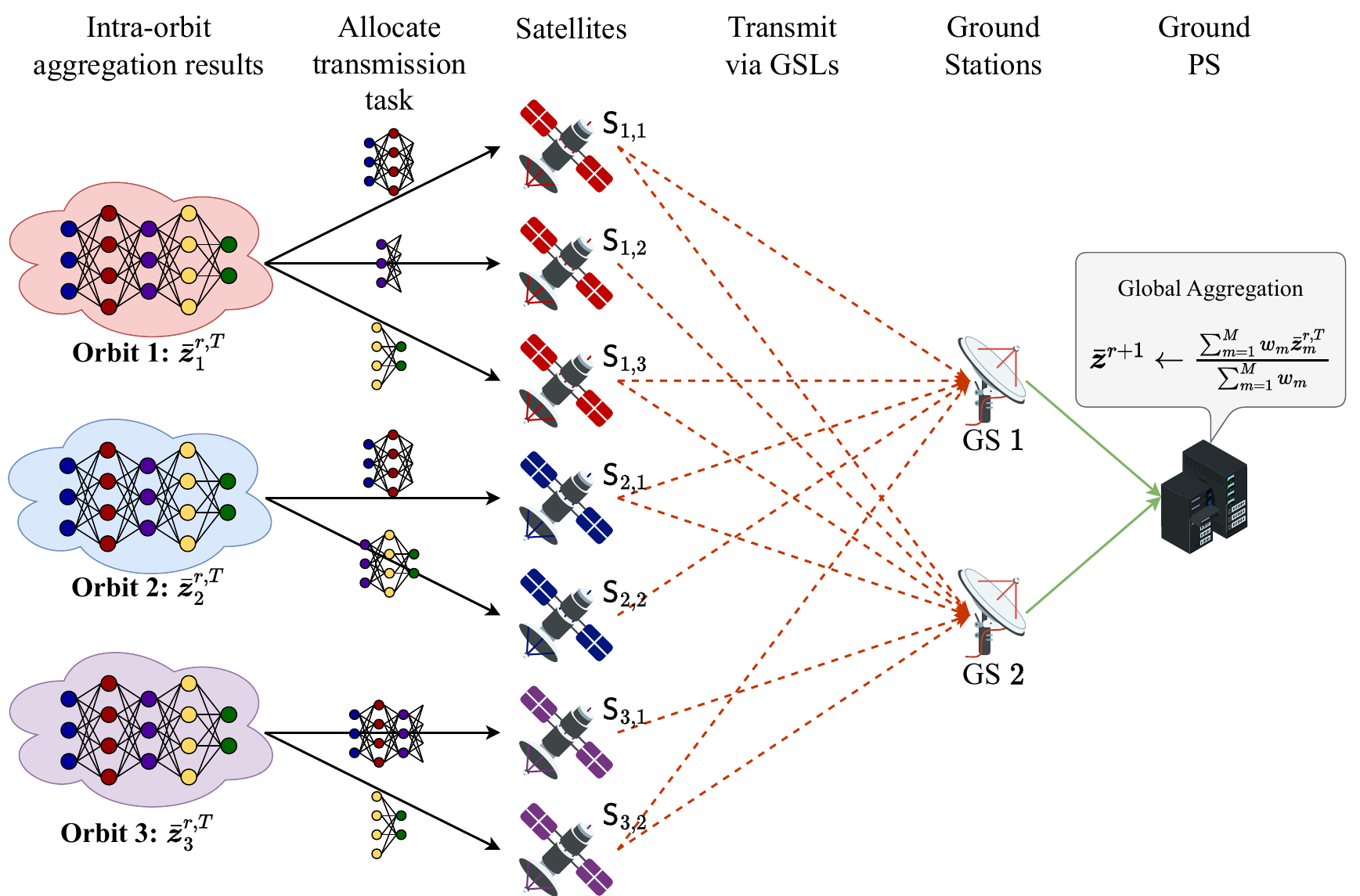}
	\caption{Main idea of distributed model transmission: In each orbit $m$, since all satellites have the same intra-orbit aggregation result $\bar{\bs{z}}_{m}^{r,T}$, the intra-orbit aggregation result can be partitioned into various parts and each satellite transmits one of these parts to the ground to accomplish the model downloading.}
	\label{fig:distributed-transmission-formal-real}
	\vspace{-0.5cm}
\end{figure}

\begin{figure}[t]
	\centering
	\includegraphics[width=0.9\linewidth]{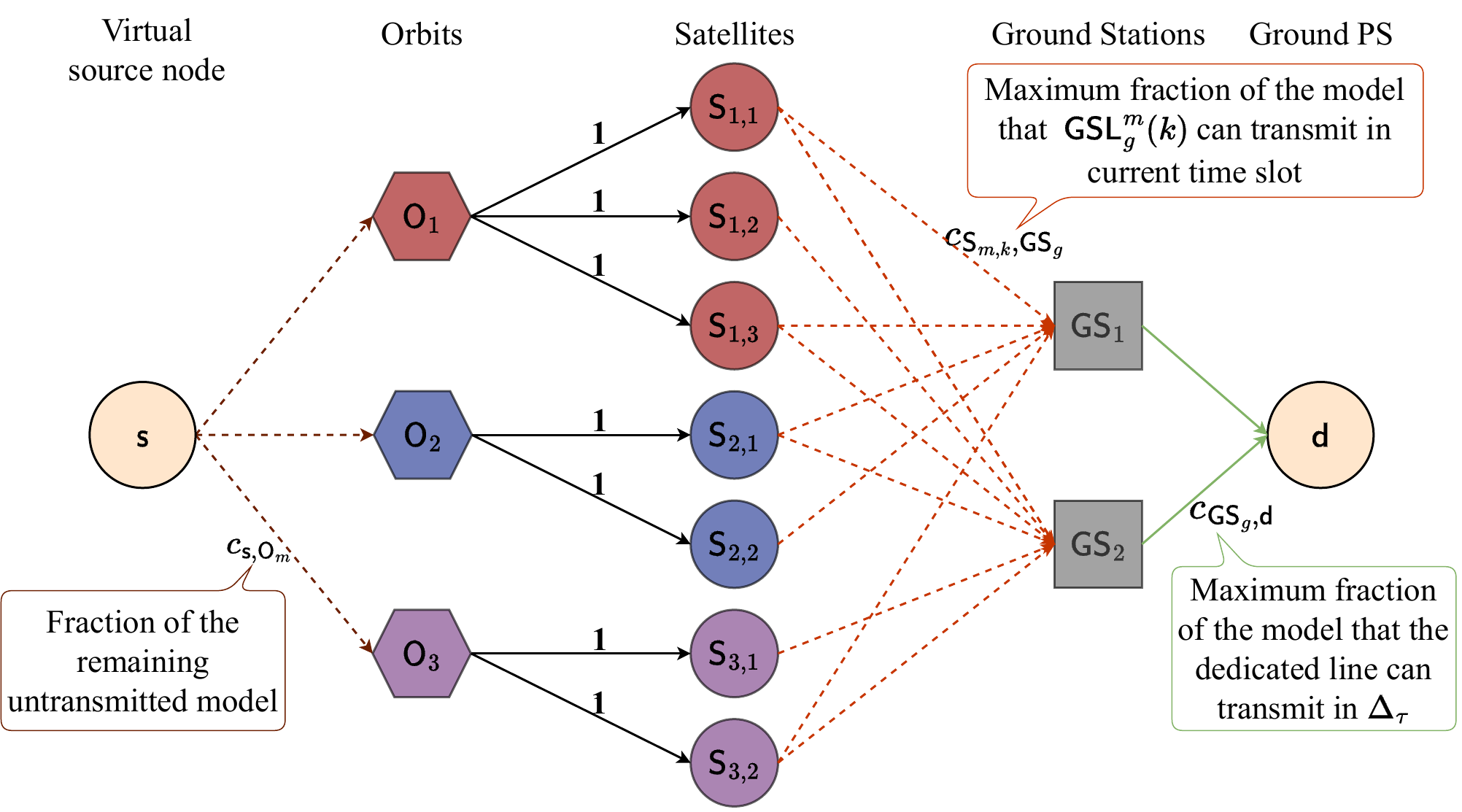}
	\caption{The formulated network flow problem corresponding to the distributed transmission: In each time slot, we aim to maximize the fraction of received model at the ground PS. $\mathsf{s}$ is the source node $\mathsf{d}$ is the destination node in the network flow problem. The capacity between orbit nodes and satellite nodes are all set to $1$ because each satellite can transmit at most $100\%$ fraction of the intra-orbit aggregated model. The capacity of other edges are set and explained as shown in the figure.}
	\label{fig:distributed-transmission-formal}
	\vspace{-0.5cm}
\end{figure}
{
	%	\color{blue}
	Moreover, in the \fedmega algorithm, the ring topology of satellites in the same orbit is leveraged, utilizing the ring all-reduce method from high-performance computing \cite{sergeev2018horovod} for model aggregation. The operations and principle of the ring all-reduce method are illustrated in Fig. \ref{fig:ringallreduce}, which depicts a half-duplex scenario for ease of understanding. The procedure of ring all-reduce aided intra-orbit aggregation is provided in Algorithm \ref{alg:ring-all-reduce}, where the full-duplex feature of laser ISL is utilized to double the aggregation speed compared to conventional ring all-reduced methods. 
	
	%	The complexity analysis of ring all-reduce aided intra-orbit aggregation is given as follows.
}
Specifically, without loss of generality, we assume that $d$ is divisible by $2K_m$. Then, each satellite $\sat_{m, k}$ partitions its local model $\bm{z}_{m,k}^{r,t,E}$ into $2K_m$ chunks, i.e., $\bm{b}_{m,k}^{+,[0]},\bm{b}_{m,k}^{-,[0]},\bm{b}_{m,k}^{+,[1]}, \dots, \bm{b}_{m,k}^{-,[K_m-1]}$, where
\begin{equation}
	\left\{
	\begin{aligned}
		&\bm{b}_{m,k}^{+,[u]} = \frac{w_{m,k}}{\sum_{k^{\prime}=1}^{K_m}w_{m,k^{\prime}}}\bm{z}_{m,k}^{r,t,E}\left[\frac{ud}{K_m}+1:\frac{(u+1/2)d}{K_m}\right]\\
		&\bm{b}_{m,k}^{-,[u]} = \frac{w_{m,k}}{\sum_{k^{\prime}=1}^{K_m}w_{m,k^{\prime}}}\bm{z}_{m,k}^{r,t,E}\left[\frac{(u+1/2)d}{K_m}+1:\frac{(u+1)d}{K_m}\right],
	\end{aligned}
	\right.
\end{equation}
and $u=0,1,\dots,K_m-1$.
Then, there are totally $2K_m-2$ iterations. In each iteration $i$, each satellite only needs to transmit one chunk per laser ISL, and hence only needs $ \frac{1}{2K_m}$ time compared to transmit the whole model. The total time cost $\frac{2K_m-2}{2K_m}$ is upper bounded by a constant (e.g., 1) irrelevant to $K_m$, which implies that the intra-orbit model aggregation step is scalable with respect to the constellation scale. Consequently, orbits with different numbers of satellites are expected to incur almost the same time cost for intra-orbit aggregation, which enhances the synchronization of the system. As a result, each satellite finally acquires the averaged model from all satellites within its orbit, allowing them to proceed to the next intra-orbit round's local training or global aggregation.
Ideally, our algorithm achieves optimal time complexity when data rates of the ISL are uniform. However, when the inter-satellite link rates vary between different satellites and change over time in real systems, the overall performance is limited by the slowest one.

\subsubsection{Global Aggregation and Broadcasting}

%\begin{figure}[t]
%	\centering
%	\includegraphics[width=0.5\linewidth]{"pics/Distributed Transmission-Formal-Real.drawio"}
%	\caption{Distributed and adaptive transmission scheme for global aggregation in each time slot.}
%	\label{fig:distributed-transmission-formal-real}
%	\vspace{-1cm}
%\end{figure}

%\begin{figure}[t]
%	\centering
%	\includegraphics[width=0.9\linewidth]{"pics/Distributed Transmission-Formal.drawio"}
%	\caption{Distributed and adaptive transmission scheme for global aggregation in each time slot.}
%	\label{fig:distributed-transmission-formal}
%	\vspace{-1cm}
%\end{figure}
After multiple intra-orbit rounds, the algorithm advances to the global aggregation phase, which involves aggregating all satellites' local models via GSLs. The updated global model is then disseminated to all satellites to initiate a new global round. As previously mentioned, the long transmission latency presents a core challenge during this phase. To this end, a two-step strategy is employed in \fedmega, i.e.,
	\begin{enumerate}

        \item First, given that all satellites within a given orbit share an identical intra-orbit model upon completing the ring all-reduce-assisted intra-orbit aggregation, the ground PS merely needs to communicate with any single satellite in that orbit to obtain the corresponding intra-orbit aggregation outcome. This method avoids the need to retrieve each satellite's local model individually, thereby considerably reducing the data download volume. 
		\item Secondly, as multiple GSLs may be feasible at the same time, a network flow based distributed transmission scheme is proposed. This method involves partitioning the model downloading task into sub-tasks, where multiple feasible GSLs from LEO satellites in one orbit collaboratively execute the model downloading, as illustrated in Fig. \ref{fig:distributed-transmission-formal-real}. 
		Consequently, the load on each GSL is alleviated, leading to a substantial reduction in the overall transmission latency.
		%		Furthermore, to expedite the downloading process, an adaptive data volume adjustment mechanism for each GSL is employed in consecutive time slots. This mechanism is based on the real-time data rates and downloading progress of each GSL. By implementing this strategy, the aggregate downloading speed can be maximized while accommodating the dynamic nature of GSL feasibility and data rates.
	\end{enumerate}

%	The distributed and adaptive design is motivated by the following considerations:
%	\begin{itemize}
	%		\item If multiple satellites within an orbit can simultaneously establish GSLs with the ground, we propose to employ multiple GSLs for distributed downloading. Each GSL would undertake a portion of the download task, substantially reducing the necessary feasible duration for each GSL and decreasing the overall download time. If only a single GSL were used for downloading, the extensive data volume might render the GSL inactive before completing the download, even if only a small part remains unfinished. In such instances, waiting for an extended period before the next feasible GSL becomes available would be necessary. Therefore, distributed downloading serves as a critical method for enhancing the efficiency of FL.
	%		\item Given that model parameters are typically large and GSLs have relatively low data rates and short time windows, it is essential to consider the possibility of dynamic feasibility changes in GSLs during the download process, even when employing a distributed download scheme. That is, some GSLs may become unavailable after only a short period of downloading, while others may become active at any time. Due to factors such as the high-speed movement of satellites relative to GSs and varying weather conditions, the channel state information of GSLs is constantly changing. Consequently, a transmission scheme capable of adaptively operating based on this dynamic information is critically important.
	%	\end{itemize}

{
	\SetNlSty{textbf}{}{:}
	\IncMargin{1em}
	\begin{algorithm}[t]
 \setstretch{0.95}
		\textbf{Parameters}: $r$, $\Delta_\tau$, $ \theta $.\\
		\textbf{Input}: constructed graph $\mathscr{G}$.\\
		%		Initialize a graph $\mathscr{G} = (\V, \E) \leftarrow (\emptyset, \emptyset)$;\\
		%		Add a source node $\mathsf{s}$ to $\V$;\\
		%		\For{$m = 1,2,\dots,M$}
		%		{
			%			Add an orbit node $\mathsf{O}_m$ to $\V$;\\
			%			Add an edge $(\mathsf{s}, \mathsf{O}_m)$ to $\E$ with initialized capacity $c_{\mathsf{s}, \mathsf{O}_m} \leftarrow 1$;\\
			%			\For{$k = 1,2,\dots,K_m$}
			%			{
				%				Add a satellite node $\sat_{m, k}$ to $\V$;\\
				%				Add an edge $(\mathsf{O}_m, \sat_{m, k})$ to $\E$ with constant capacity $c_{\mathsf{O}_m, \sat_{m, k}} \leftarrow 1$;\\
				%			}
			%		}
		%		Add a ground PS node $\mathsf{d}$ as destination node to $\V$;\\
		%		\For{$g =1,2,\dots,G$}
		%		{
			%			Add a GS node $\mathsf{GS}_g$ to $\V$;\\
			%			$r_{\mathsf{GS}_g, \mathsf{d}} \leftarrow $ data rate  of link between GS $g$ and the ground PS;\\
			%			Add an edge $(\mathsf{GS}_g, \mathsf{d})$ to $\E$ with capacity $c_{\mathsf{GS}_g, \mathsf{d}} \leftarrow \frac{r_{\mathsf{GS}_g, \mathsf{d}} \Delta_\tau}{\theta}$;\\
			%		}
		\While{$\exists m \in \mathcal{M} $ s.t. $c_{\mathsf{s}, \mathsf{O}_m} \neq 0$}
		{
			Delete all edges $(\sat_{m, k}, \mathsf{GS}_g), \forall m,k,g$;\\
			Detect all feasible $\gsl_{g}^{m}(k)$ in current time slot;\\
			\For{each feasible $\gsl_{g}^{m}(k)$}
			{
				$r_{\sat_{m, k}, \mathsf{GS}_g} \leftarrow $ data rate in current time slot;\\
				Add an edge $(\sat_{m, k}, \mathsf{GS}_g)$ to $\E$ with capacity $c_{\sat_{m, k}, \mathsf{GS}_g} \leftarrow \frac{r_{\sat_{m, k}, \mathsf{GS}_g}\Delta_\tau}{\theta}$;\\
			}
			% Ford-Fulkerson Algorithm
			%			\tcp{$\mathscr{G}_f$: The residual graph.}
			%			\tcp{$p$: An augmenting path from $s$ to $d$ in the residual graph $G_f$.}
			%			\tcp{$c_f(u, v)$: The residual capacity of edge $(u, v)$ in the residual graph $G_f$.}
			%			\tcp{$c_f(p)$: The bottleneck capacity of the augmenting path $p$.}
			Initialize flow $f_{\mathsf{u}, \mathsf{v}} \gets 0$ for all edges $(\mathsf{u}, \mathsf{v}) \in \E$\;
			\While{there exists an augmenting path $p$ from $\mathsf{s}$ to $\mathsf{d}$ in the residual graph $\mathscr{G}_f$}{
				%				Find the bottleneck capacity 
				Denote $c_f(\mathsf{u}, \mathsf{v})$ as the residual capacity of edge $(\mathsf{u}, \mathsf{v})$ in $\mathscr{G}_f$;\\
				%				Denote $c_f(p)$ as the bottleneck capacity of $p$;\\
				$c_f(p) \leftarrow \min_{(\mathsf{u}, \mathsf{v}) \in p} c_f(\mathsf{u}, \mathsf{v})$\;
				%				Update the flow values along the path $p$\;
				\For{each edge $(\mathsf{u}, \mathsf{v}) \in p$}{
					$f_{\mathsf{u}, \mathsf{v}} \leftarrow f_{\mathsf{u}, \mathsf{v}} + c_f(p)$\;
					$f_{\mathsf{v}, \mathsf{u}} \leftarrow f_{\mathsf{v}, \mathsf{u}} - c_f(p)$\;
				}
			}
			\For{each feasible $\gsl_{g}^{m}(k)$}
			{
				Transmit $f_{\sat_{m, k}, \mathsf{GS}_g}$ fraction of the model $\bar{\bz}_{m}^{r,T}$ during current time slot;\\
			}
			$c_{\mathsf{s}, \mathsf{O}_m} \leftarrow c_{\mathsf{s}, \mathsf{O}_m} - f_{\mathsf{s}, \mathsf{O}_m}$;\\
		}

		%		Return the maximum flow $|f|$\\
		\caption{Distributed and Adaptive Transmission Scheme for Global Aggregation.}
		\label{alg:FedMega:globalaggregation}
	\end{algorithm}
}
%{
	%	\color{blue}
	%	The notations of the proposed network flow based distributed transmission approach is given as follows.
	%}
Specifically, at the onset of global aggregation, time is partitioned into discrete time slots. Within each individual time slot, the feasibility and transmission rate of all GSLs are invariant. Consequently, the ground PS is able to acquire a graph $\mathscr{G}$, which represents the present network connectivity, as illustrated in Fig. \ref{fig:distributed-transmission-formal}. In this representation, each satellite $\mathsf{S}_{m,k} \in \mathcal{K}$ corresponds to vertex $\mathsf{S}_{m,k}$, each GS $g$ is represented by vertex $\mathsf{GS}_g$, the ground PS is denoted by vertex $\mathsf{d}$, and each edge $(\mathsf{v}_1, \mathsf{v}_2)$ in the right portion signifies a feasible link between $\mathsf{v}_1$ and $\mathsf{v}_2$ within the current time slot. The capacity $c_{\mathsf{v}_1, \mathsf{v}_2}$ of this edge is equivalent to the maximum fraction of a model that the link can transmit in the current time slot, expressed as $c_{\mathsf{v}_1, \mathsf{v}_2} = \frac{r_{\mathsf{v}_1, \mathsf{v}_2}\Delta\tau}{\theta}$, where $r_{\mathsf{v}_1, \mathsf{v}_2}$ denotes the data rate of the link in the current time slot, $\Delta\tau$ represents the duration of the time slot, and $\theta$ indicates the model size.
{
	\SetNlSty{textbf}{}{:}
	\IncMargin{1em}
	\begin{algorithm}[t]
            \setstretch{0.95}
		PS: Initialize $\bar{\bs{z}}^{0}$;\\
		\ForEach{global round $r=0,\cdots,R-1$}
		{
			\ForPar{orbit $m\in\mathcal{M}$}
			{
				% \tcp{Global broadcasting}
				\ForPar{satellite $\mathsf{S}_{m,k}\in\mathcal{K}_m$}
				{
					$ \bs{z}_{m,k}^{r,0,0} \leftarrow \bar{\bs{z}}^{r}  $; 
					%					(realized by Algorithm \ref{alg:FedMega:broadcast});
				}
				\ForEach{intra-orbit round $t=0,\cdots,T-1$}
				{
					% \tcp{On-board local training}
					\ForPar{satellite $\mathsf{S}_{m,k}\in\mathcal{K}_m$}
					{
						\ForEach{local epoch $e=0,\cdots,E-1$}
						{
							$	\bs{z}_{m,k}^{r, t, e+1} \leftarrow \bs{z}_{m,k}^{r, t, e} - \eta^{r, t, e} \nabla F_{m,k}^{r,t,e}(\bs{z}_{m,k}^{r, t, e})$;
						}
					}
					% \tcp{Intra-orbit aggregation}
					$ \bar{\bs{z}}_{m}^{r,t+1} \leftarrow   \frac{\sum_{k^{\prime} = 1}^{{K}_m}w_{m,k^{\prime}}\bs{z}_{m,k^{\prime}}^{r, t, E}}{\sum_{k^{\prime}=1}^{{K}_m}w_{m,k^{\prime}}} $;\\
					\ForPar{satellite $\mathsf{S}_{m,k}\in\mathcal{K}_m$}
					{
						$ \bs{z}_{m,k}^{r,t+1,0} \leftarrow \bar{\bs{z}}_{m}^{r,t+1}  $ using Algorithm \ref{alg:ring-all-reduce};
					}
				}
			}
			% \tcp{Global aggregation}
			$\bar{\bs{z}}^{r+1} \leftarrow \frac{\sum_{m=1}^{M}w_{m}\bar{\bs{z}}_{m}^{r, T}}{\sum_{m=1}^{M}w_{m}}$ using Algorithm \ref{alg:FedMega:globalaggregation};\\
		}
		\KwOut{$\bar{\bs{z}}^{R}$.}
		\caption{The Proposed \fedmega Algorithm.}
		\label{alg:FedMega}
	\end{algorithm}
	
}

To expedite the downloading process, we maximize the data volume received by ground PS in each time slot, which is formulated as a network flow problem. The construction of this network flow problem can be described as follows, with reference to Fig. \ref{fig:distributed-transmission-formal}. A vertex $\mathsf{s}$ represents a virtual source node, while each orbit $m\in\mathcal{M}$ is denoted by a vertex $\mathsf{O}_m$.
Every vertex $\mathsf{O}_m$ is connected to all vertices corresponding to satellites in orbit $m$. This signifies that each satellite can only transmit the aggregated intra-orbit model of the orbit to which it belongs. All edges between satellite vertices and orbit vertices have a constant capacity of one, implying that the proportion of the intra-orbit model allocated to each satellite to transmit can not exceed 100\%.
Additionally, the source vertex $\mathsf{s}$ is connected to all orbit vertices, with the edge capacity $c_{\mathsf{s}, \mathsf{O}_m}$ representing the fraction of orbit $m$'s intra-orbit model that remains untransmitted as of the current time slot.
Hence, at the beginning of each time slot, the ground PS can schedule the transmission task of each GSL by solving this network flow problem, until all orbits' intra-orbit models have been transmitted. The distributed and adaptive transmission scheme is summarized in Algorithm \ref{alg:FedMega:globalaggregation}. 
After aggregation of the intra-orbit models from all orbits, we obtain an updated global model. The ground PS transmits this updated global model back to all satellites with a similar distributed and adaptive uploading scheme.

In summary, the proposed satellite FEEL framework includes on-board local training, intra-orbit aggregation, and global aggregation and broadcasting, as summarized in Algorithm \ref{alg:FedMega}.

\subsection{Advantages of \fedmega Framework}
{
	%	\color{blue}
	The proposed \fedmega aligns with the core design guidelines, i.e., GSL utilization minimization and transmission load reduction, thereby speeding up the FEEL training process. The implementation of multiple intra-orbit rounds can reduce the gradient estimation variance within one orbit within very short time. This effectively mitigates the frequency of GSL utilization, consequently diminishing the waiting time required for feasible GSLs. 
	In terms of intra-orbit model aggregation, the ring all-reduce-based scheme significantly reduces the communication overhead, i.e., time and energy cost.
	Furthermore, the adoption of a distributed and adaptive model transmission scheme alleviates the transmission load burden on each GSL, thereby accelerating the downloading process.
	% Additionally, since the ring all-reduce method provides every satellite within the orbit with the aggregated model, global aggregation does not require any specific GSL's feasibility. 
	%As long as any satellite within the orbit has a feasible GSL, the model can be transmitted. In cases where multiple feasible GSLs are available, distributed transmission can be employed, reducing the burden on each GSL and mitigating the transmission failure problem. This further lowers the reliance on GSLs and contributes to the efficient convergence of the \fedmega algorithm.
	%In summary, the FedMega algorithm offers the following key advantages:
	%\begin{enumerate}
	%	\item Efficiently exploits the strong connectivity and high communication speed of laser ISLs within the same orbit, achieving effective variance reduction and accelerated convergence.
	%	
	%	\item Adopts the ring all-reduce method for intra-orbit aggregation, capitalizing on the ring topology formed by satellites in the same orbit, minimizing communication latency, and ensuring scalability.
	%	
	%	\item Addresses the limitations of GSLs by requiring only one feasible GSL per orbit for model transmission, and enabling distributed transmission when multiple GSLs are available, effectively mitigating key challenges in satellite FL framework design.
	%\end{enumerate}
	
	% In the following section, we will present a detailed analysis of the convergence properties of the \fedmega algorithm.
}

	\section{Convergence Analysis for \fedmega}\label{sec: convergence}
In this section, we present the convergence analysis of the \fedmega algorithm. 
To ensure that our analysis is relevant to practical scenarios (e.g., deep neural networks), we focus on the situations where non-convex loss functions and non-IID data are present. In order to make convergence analysis tractable, we make the following assumptions.
\begin{ass}\label{ass: lower}
	The objective function $F$ is lower bounded by a scalar $F^{\inf}$, i.e., $F(\bm{x})\geq F^{\inf}>-\infty$.
\end{ass}

\begin{ass}[$L$-Smoothness] \label{ass: smooth}
	The loss function of each satellite $\mathsf{S}_{m, k}$, each orbit $m$, and all orbits, i.e., $F_{m,k}(\cdot)$, $ F_{m}(\cdot) $, and $ F(\cdot) $ are all $L$-smooth. For all $\bm{x}$ and $\bm{y}$, we have
	\begin{equation*}
		\|\nabla F_{m,k}(\bm{x})-\nabla F_{m,k}(\bm{y})\|_2 \leq L \|\bm{x} - \bm{y}\|,
	\end{equation*}
	and 
	$$F(\bm{y})\leq F(\bm{x}) + \left\langle \nabla F(\bm{x}),\bm{y-x}\right\rangle+\frac{L}{2}\left\|\bm{y-x}\right\|_2^2.$$
\end{ass}

%\begin{ass} \label{ass: strong convex}
%	($\mu$-Strong Convexity) $F_{1}, \ldots, F_{N}$ are all $\mu$-strongly convex: for all $\bm{v}$ and $\bm{w}$, we have $F_{n}(\bm{v}) \geq F_{n}(\bm{w}) + (\bm{v} - \bm{w})^{\sf T} \nabla F_{n}(\bm{w}) + \frac{\mu}{2} \| \bm{v} - \bm{w} \|_2^2$.
%\end{ass}

\begin{ass}[Unbiased Gradient and Bounded Variance] \label{ass: gradient variance}
	Each satellite can query an unbiased stochastic gradient with bounded variance, i.e., $$\mathbb{E}[\nabla F_{m,k}^{r,t,e}(\bm{x}) ] = \nabla F_{m,k}(\bm{x}),$$$$\mathbb{E} \left[ \| \nabla F_{m,k}^{r,t,e}(\bm{x}) - \nabla F_{m,k}(\bm{x}) \|_2^2 \right] \leq \sigma^2 .$$

\end{ass}

\begin{ass}[Intra-orbit Dissimilarity]\label{ass: intra-orbit dissimilarity}
	For each orbit $m\in \mathcal{M}$, there exists a constant $\delta_m \geq 0$ such that
	\begin{align}\label{eq: intra-orbit error}
		\frac{1}{K_0}\sum_{k=1}^{K_0}\|\nabla   F_{m,k}(\bm x) - \nabla   F_{m}(\bm x)\|_2^2\leq \delta_m^2.
	\end{align}
\end{ass}

\begin{ass}[Inter-orbit Dissimilarity]\label{ass: inter-orbit dissimilarity}
	There exists constants $\alpha \geq 1$ and $\beta\geq 0$ such that
	\begin{align}\label{eq: common bound}
		\frac{1}{M}\sum_{m=1}^{M}\|\nabla   F_m(\bm x)\|_2^2\leq \beta+\alpha\|\nabla F(\bm x)\|_2^2.
	\end{align}
\end{ass}

Assumptions \ref{ass: lower}-\ref{ass: gradient variance} are common in stochastic optimization \cite{bottou2018optimization}.
Assumption \ref{ass: intra-orbit dissimilarity} is adopted to characterize the gradient dissimilarity between each satellite's loss function and the corresponding orbit's loss function \cite{guo2022hybrid}.
Assumption \ref{ass: inter-orbit dissimilarity}, known as the bounded gradient dissimilarity \cite{karimireddy2020scaffold} measures the non-IID degree of data on different devices.

We first provide the convergence results of \fedmega with respect to the squared norm of the gradient $\|\nabla F(\bar{\bm z}^{r,t})\|_2^2$ in Theorem \ref{thm: theorem 1}, which is an effective metric in non-convex optimization \cite{bottou2018optimization}. 

\begin{thm}[Convergence of \fedmega Under Non-convexity]\label{thm: theorem 1}
	Let Assumptions \ref{ass: smooth}, \ref{ass: gradient variance}, \ref{ass: intra-orbit dissimilarity}, and \ref{ass: inter-orbit dissimilarity} hold. Choosing a learning rate as 
	\begin{align}\label{eqn:stepsize}
		0<\eta\leq \begin{cases}
			\frac{1}{2LE}&E = 1,T=1,\\
			\frac{1}{8L\sqrt{5/3\alpha E(E-1)}} & E\geq 2, T = 1,\\
			\frac{1}{8EL\sqrt{3\alpha T(T-1)}} & E\geq1, T\geq2,
		\end{cases}
	\end{align}
	the expected norm of global gradients after $R$ communication rounds is upper bounded by
	\begin{align*}
		&\frac{1}{RT}\sum\limits_{r = 0}^{R-1}\sum\limits_{t = 0}^{T-1} \mathbb{E} \left[\left \|\nabla F(\bar{\bm z}^{r,t})\right \|_2^2\right]\\
		\leq & \frac{4\left[F(\bm{z}^0) - F^{\inf}\right] }{\eta ETR} 
		+ \frac{160 }{3K}\eta^3L^3E(M-1)(T-1)\sigma^2\\
		+& 16\eta^2L^2(E-1)(\sigma^2+6\bar{\delta}^2)\\
		+& \frac{32}{3}\eta^2L^2E\beta\left[5ET(T-1)+9(E-1)\right]
		+  \frac{ 4}{K}\eta L \sigma^2.
	\end{align*}
	where $\bar{\delta}^2 = \frac{1}{M}\sum_{m=1}^M \delta_m^2$.
\end{thm}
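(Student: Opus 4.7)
The plan is to follow the standard descent-lemma template for non-convex federated averaging, but adapted to the two-level hierarchy (orbit / globe) of \fedmega. I would introduce a virtual averaged iterate
\[
\bar{\bm z}^{r,t,e} := \sum_{m=1}^{M}\sum_{k=1}^{K_m} w_{m,k}\,\bm z_{m,k}^{r,t,e},
\]
and observe that because both the ring all-reduce and the global aggregation are exact weighted averages, this virtual sequence satisfies $\bar{\bm z}^{r,t,e+1} = \bar{\bm z}^{r,t,e} - \eta\, \bar{\bm g}^{r,t,e}$, where $\bar{\bm g}^{r,t,e}$ is the weighted mean of the stochastic gradients $\nabla F_{m,k}^{r,t,e}(\bm z_{m,k}^{r,t,e})$, and moreover $\bar{\bm z}^{r,t,E}=\bar{\bm z}^{r,t+1,0}$ and $\bar{\bm z}^{r,T,0}=\bar{\bm z}^{r+1,0,0}$. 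All three aggregation layers therefore preserve the mean, and I can work with a single trajectory $\{\bar{\bm z}^{r,t,e}\}$.

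Next I would apply the $L$-smoothness descent inequality from Assumption \ref{ass: smooth} to $F(\bar{\bm z}^{r,t,e+1})$, take the expectation, and decompose $\mathbb{E}\|\bar{\bm g}^{r,t,e}\|_2^2$ into $\tfrac{\sigma^2}{K}$ (variance of $K:=\sum_m K_m$ independent unbiased gradients) plus the squared mean; then split the inner product via $2\langle a,b\rangle = \|a\|^2+\|b\|^2-\|a-b\|^2$ to pull out $-\tfrac{\eta}{2}\|\nabla F(\bar{\bm z}^{r,t,e})\|_2^2$ on the left-hand side, leaving a client-drift term of the form
\[
\sum_{m,k} w_{m,k}\,\mathbb{E}\bigl\|\nabla F_{m,k}(\bm z_{m,k}^{r,t,e}) - \nabla F_{m,k}(\bar{\bm z}^{r,t,e})\bigr\|_2^2 \;\le\; L^2\sum_{m,k} w_{m,k}\,\mathbb{E}\bigl\|\bm z_{m,k}^{r,t,e}-\bar{\bm z}^{r,t,e}\bigr\|_2^2.
\]

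The heart of the argument, and what I expect to be the main obstacle, is bounding this drift in a way that produces the separate $E(E-1)$, $T(T-1)$, and $ET(T-1)$ scalings seen in the theorem. I would split it through the intra-orbit anchor $\bar{\bm z}_m^{r,t} := \bar{\bm z}_m^{r,t,0}$:
\[
\bm z_{m,k}^{r,t,e}-\bar{\bm z}^{r,t,e} = \underbrace{(\bm z_{m,k}^{r,t,e}-\bar{\bm z}_m^{r,t,e})}_{\text{local drift inside orbit }m} + \underbrace{(\bar{\bm z}_m^{r,t,e}-\bar{\bm z}^{r,t,e})}_{\text{inter-orbit drift}}.
\]
The first term accumulates only over $e=0,\dots,E-1$ local steps starting from a common intra-orbit iterate (so it unrolls into a $O(\eta^2 E(E-1))$ sum), and its gradient-dissimilarity contribution is controlled by Assumption \ref{ass: intra-orbit dissimilarity} giving $\bar\delta^2 = \tfrac{1}{M}\sum_m\delta_m^2$; the second term vanishes after every global round, so it accumulates only over $t=0,\dots,T-1$ starting from a common global iterate, and its gradient-dissimilarity contribution is controlled by Assumption \ref{ass: inter-orbit dissimilarity} giving a $\beta+\alpha\|\nabla F\|^2$ factor that in turn spawns the $O(\eta^2 E^2 T(T-1)\beta)$ term after a further $E$-step unroll inside each intra-orbit round. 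I would prove the two drift recursions by induction on $e$ and on $t$, respectively, using Assumption \ref{ass: gradient variance} for the SGD variance and Jensen's / Young's inequality $(a+b)^2\le (1+\gamma)a^2+(1+1/\gamma)b^2$ to absorb higher-order $\eta^2$ cross terms.

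Finally, I would substitute the drift bounds back into the descent inequality, choose $\eta$ small enough (the three-case condition in \eqref{eqn:stepsize}) so that the $\alpha\|\nabla F\|^2$ coefficient on the right-hand side is at most $\tfrac{1}{4}$ of the $\tfrac{\eta}{2}\|\nabla F\|^2$ term on the left, and rearrange to obtain a one-step descent of the form
\[
\tfrac{\eta}{4}\mathbb{E}\|\nabla F(\bar{\bm z}^{r,t})\|_2^2 \le \mathbb{E}[F(\bar{\bm z}^{r,t})-F(\bar{\bm z}^{r,t+1})]/E + (\text{drift remainders}).
\]
Averaging (telescoping) over $r=0,\dots,R-1$ and $t=0,\dots,T-1$ and using Assumption \ref{ass: lower} to bound $F(\bm z^0)-F^{\inf}$ gives the stated four-term bound: the first term from telescoping, the $\tfrac{1}{K}\eta^3 L^3 E(M-1)(T-1)\sigma^2$ and $\eta^2 L^2(E-1)(\sigma^2+6\bar\delta^2)$ from the two drift remainders, the $\eta^2 L^2 E\beta[5ET(T-1)+9(E-1)]$ from the inter-orbit dissimilarity propagation, and the $\tfrac{1}{K}\eta L\sigma^2$ from the minibatch variance of $\bar{\bm g}$. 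The main delicate bookkeeping is tracking how the factor $M-1$ appears only in the mixed $E\cdot(T-1)$ term (it comes from a $1-\tfrac{1}{M}$ residual when the inter-orbit drift is expanded), and ensuring the three step-size regimes in \eqref{eqn:stepsize} each exactly cancel the dominant $O(\eta^2)$ feedback coefficient in the corresponding $(E,T)$ regime.
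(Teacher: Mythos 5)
Your overall route is the same as the paper's: a descent-lemma step on the averaged iterate, a two-level drift decomposition through the intra-orbit anchor (local drift $\|\bm z_{m,k}^{r,t,e}-\bar{\bm z}_m^{r,t}\|^2$ controlled by Assumption \ref{ass: intra-orbit dissimilarity}, inter-orbit drift $\|\bar{\bm z}_m^{r,t}-\bar{\bm z}^{r,t}\|^2$ controlled by Assumption \ref{ass: inter-orbit dissimilarity}), a step-size condition that keeps the $\alpha\|\nabla F\|_2^2$ feedback below a quarter of the descent term, and telescoping. Your identification of where the $(M-1)$ factor comes from (the $\tfrac{1}{K_0}-\tfrac{1}{K}=\tfrac{M-1}{K}$ variance residual between the orbit average and the global average of stochastic gradients, the paper's term $T_{31}$) is exactly right. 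The only substantive difference in granularity is that you apply smoothness per local step $e$, whereas the paper applies it once per intra-orbit round to the $E$-step averaged gradient $\bar{\bm d}^{r,t}$; this is cosmetic.

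There is, however, one genuine gap: you propose to bound the two drift terms by \emph{two separate inductions}, one on $e$ and one on $t$, as if each recursion closed on itself. In fact they are mutually coupled. The inter-orbit drift at round $t$ accumulates gradients evaluated at the individual local iterates $\bm z_{m,k}^{r,\tau,e}$ for $\tau<t$, so bounding it requires the accumulated local drift; conversely, relating the local gradients $\nabla F_{m,k}(\bm z_{m,k}^{r,t,i})$ to $\nabla F(\bar{\bm z}^{r,t})$ via the two dissimilarity assumptions reintroduces the inter-orbit drift. The paper resolves this by summing each drift over the whole global round to obtain the aggregate quantities $T_3$ and $T_4$, deriving the coupled pair $T_3\le A_1T_4+B_1$ and $T_4\le A_2T_3+B_2$ with $A_1=\tfrac{D_T}{1-D_T}$, $A_2=\tfrac{3D_E}{1-D_E}$ (where $D_T=2\eta^2E^2L^2T(T-1)$ and $D_E=2\eta^2L^2E(E-1)$), and inverting the $2\times2$ system under the step-size conditions $A_1\le\tfrac13$, $A_2\le\tfrac32$ to get $T_3+T_4\le 5B_1+\tfrac83 B_2$. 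Your Young's-inequality remark covers the self-referential part of each recursion but not this cross-coupling; without explicitly closing the loop between the two drifts, the constants in the stated bound (in particular the $5ET(T-1)+9(E-1)$ coefficient of $\beta$ and the precise three-case step-size thresholds) will not come out. The fix is mechanical once the coupling is recognized, but it is the one non-routine step of the proof and your plan as written omits it.
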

 \begin{proof}
 	Please refer to the Appendix.
 \end{proof}
%Please refer to the Appendix of the long version \cite{shi2023satfl} for detailed proof. 
Based on Theorem \ref{thm: theorem 1}, we obtain the following key observations.
The first term of the upper bound demonstrates that the optimality gap depends on the initial point.
The second term is caused by bounded variance, multiple orbits, and multiple intra-orbit aggregations.
The third term reflects the impact of multiple local updates.
The fourth term shows the coupling of multiple local updates, multiple intra-orbit aggregations, and the non-IID data distribution.
The last term is caused by the gradient variance.
\begin{Rem}\label{rem: algorithm reduction}
	By setting $T = 1$, \fedmega reduces to the conventional \fedavg with multiple local updates \cite{FedAvg}. Furthermore, by setting $E = 1,T = 1$, \fedmega recovers the results of the fully synchronized SGD as in \cite{bottou2018optimization}. By setting $E = 1,T = \tau$, \fedmega also reduces to the fully connected case (i.e., the mixing matrix of D2D communication links is an all-ones matrix) of the \hlsgd \cite{guo2022hybrid}.
\end{Rem}

By designing the learning rate, we can derive the following corollary with $E, T, R \gg 1$.

\begin{cor}[Error Bound of \fedmega under Non-convexity]\label{cor: fedmega}
	When the learning rate is a constant satisfying $ \eta = \frac{1}{L}\sqrt{\frac{K}{ETR}} $,
	the minimal gradient norm of the global objective function of \fedmega is bounded as follows
	\begin{align}\nonumber
		&\min_{t\in[T],r\in[R]} \mathbb{E} \left[\left \|\nabla F(\bar{\bm z}^{r,t})\right \|_2^2\right]
  % \leq \frac{1}{RT}\sum\limits_{r = 0}^{R-1}\sum\limits_{t = 0}^{T-1} \mathbb{E} \left[\left \|\nabla F(\bar{\bm z}^{r,t})\right \|_2^2\right]\\\nonumber
		\leq\mathcal{O}\left( \frac{1+\sigma^2}{\sqrt{KETR}}\right)\\
		+& \mathcal{O}\left( \frac{M-1}{R}\sqrt{\frac{K}{ETR}}\right)
		+ \mathcal{O}\left(\frac{K [\sigma^2+\bar{\delta}^2+E(T^2+1)]}{TR}\right).
	\end{align}
	The right hand side of the inequality is dominated by $\mathcal{O}\left( \frac{1+\sigma^2}{\sqrt{KETR}}\right)$.
\end{cor}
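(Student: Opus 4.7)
The plan is to derive Corollary 1 directly from Theorem 1 by plugging the prescribed stepsize $\eta=\frac{1}{L}\sqrt{K/(ETR)}$ into the five-term bound and simplifying. First, I would observe that the minimum over $r\in\{0,\ldots,R-1\}$ and $t\in\{0,\ldots,T-1\}$ is at most the average, so the left-hand side of the corollary is upper bounded by $\frac{1}{RT}\sum_{r,t}\mathbb{E}\bigl[\|\nabla F(\bar{\bm z}^{r,t})\|_2^2\bigr]$, which Theorem~\ref{thm: theorem 1} already controls. The entire argument therefore reduces to substituting $\eta$ into each of the five summands on the right-hand side of the theorem and reading off the resulting rate in $\mathcal{O}(\cdot)$ notation.

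Before substituting, I would verify that the chosen $\eta$ meets the stepsize constraint in \eqref{eqn:stepsize}; for the non-trivial case $E\ge 1,T\ge 2$ this reduces, after squaring, to the condition $R\ge 192\alpha KE(T-1)$, which is automatically met in the stated asymptotic regime $E,T,R\gg 1$ once $R$ is taken sufficiently large. With the stepsize admissible, the bookkeeping proceeds term by term. Using $\eta L=\sqrt{K/(ETR)}$, the first summand $4(F(\bm z^0)-F^{\inf})/(\eta ETR)$ and the fifth summand $4\eta L\sigma^2/K$ both simplify to $\mathcal{O}\bigl((1+\sigma^2)/\sqrt{KETR}\bigr)$. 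Using $\eta^{3}L^{3}=(K/(ETR))^{3/2}$, the $\eta^{3}$ summand collapses into $\mathcal{O}\bigl(\tfrac{(M-1)\sigma^{2}}{R}\sqrt{K/(ETR)}\bigr)$ after the algebra $E(T-1)/(ETR)^{3/2}=(T-1)/(\sqrt{E}\,T^{3/2}R^{3/2})$ and the bound $(T-1)/T\le 1$. Finally, using $\eta^{2}L^{2}=K/(ETR)$, the two $\eta^{2}$ summands yield $\mathcal{O}(K(\sigma^{2}+\bar\delta^{2})/(TR))$ and $\mathcal{O}(KE(T^{2}+1)/(TR))$ respectively, and these merge into the third $\mathcal{O}$ term of the corollary.

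To justify the dominance claim, I would compare the three resulting rates as $R\to\infty$ with all other parameters fixed: the first term decays like $R^{-1/2}$, the third like $R^{-1}$, and the second like $R^{-3/2}$. Hence the first term is the slowest-vanishing and therefore dominates, recovering $\mathcal{O}\bigl((1+\sigma^{2})/\sqrt{KETR}\bigr)$ as the asymptotic rate. The only real obstacle is the routine but detail-heavy substitution of $\eta$ into the $\eta^{2}$ and $\eta^{3}$ terms and tracking constants so that each resulting rate matches the displayed expression; conceptually the corollary is a direct asymptotic reading of Theorem~\ref{thm: theorem 1} after the stepsize is plugged in.
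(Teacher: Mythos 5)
Your proposal is correct and follows the same route the paper intends: bound the minimum by the average so that Theorem~\ref{thm: theorem 1} applies, verify that $\eta=\frac{1}{L}\sqrt{K/(ETR)}$ satisfies \eqref{eqn:stepsize} (your condition $R\gtrsim \alpha KE(T-1)$ is the right admissibility check), and substitute into each of the five terms, which yields exactly the three displayed rates and the stated $R^{-1/2}$ dominance. The term-by-term algebra you outline (including $(T-1)/T\le 1$ for the $\eta^3$ term and merging the two $\eta^2$ terms) checks out against the theorem's constants.
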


% The convergence trace of \fedmega is illustrated in Fig. \ref{fig:trace}.
% \begin{figure}[t]
% 	\centering
% 	\includegraphics[width=0.9\linewidth]{"pics/Trace.pdf"}
% 	\caption{Convergence trace of \fedmega with two orbits, where each satellite performs $E=3$ local updates, and the number of intra-orbit aggregation $T=2$.}
% 	\label{fig:trace}
% 	\vspace{-0.6cm}
% \end{figure}
To measure the convergence performance with respect to communication efficiency and the parameters that determining the architecture design of \fedmega (i.e., $E, T$), we have the following remark for discussion.

\begin{Rem}\label{rem: convergence rate}
	\fedmega achieves a linear speed-up in terms of the number of LEO satellites (i.e., $ K $), the number of local updates (i.e., $E$), and the number of intra-orbit aggregations (i.e., $ T $).
	We say that solution $\bm{z}^t$ is an $\epsilon$-stationary point if $\left \|\nabla f(\bm{z}^{t})\right \|_2^2\leq\epsilon$. According to Corollary \ref{cor: fedmega}, the learning rate $ \eta = \frac{1}{L}\sqrt{\frac{K}{ETR}} $ decreases with the number of local updates and the number of intra-orbit aggregations. To achieve an $\epsilon$-stationary solution, the \fedmega algorithm takes $ETR = \frac{(1+\sigma^2)^2}{K\epsilon^2}$ total iterations. If we increase the product of the number of local updates and intra-orbit aggregations, then we can obtain a solution with higher accuracy and smaller learning rate occupying the same number of communication rounds (i.e., $ R $). In addition, if $ETR$ is fixed, we can increase the product $ET$ to reach the same level of $\epsilon$-stationary solution, which results in reducing the number of communication rounds to enhance the communication efficiency. However, due to data heterogeneity, $ET$ can not be infinite large, and the optimal choice of $ET$ is $ \mathcal{O}(\frac{R}{K}) $.
\end{Rem}

{
	%	\color{blue}
	To conclude, we minimize the frequency of GSL utilization by implementing multiple rounds of on-board local training and intra-orbit aggregations, which accelerates the convergence of \fedmega as analyzed in Remark \ref{rem: convergence rate}. 
	%	Besides, when the product $ET$ and the communication round $R$ are both fixed, changing the combination of $(E,T)$ may come to some insightful conclusions, which will be discussed in Section \ref{sec: simulations} with extensive simulations.
	To compare the convergence rate of \fedmega with existing studies, we have the following remark.}

\begin{Rem}\label{rem: algorithm comparison}
	As a matter of fact, \fedmega achieves a hierarchical distributed SGD convergence rate as in \cite{wang2022demystifying,liu2023hierarchical}, which is $\sqrt{T}$ times faster than \fedavg and its implementation on LEO satellites (i.e., \fedisl) \cite{razmi2022board} in terms of total iterations. \fedmega is also $\sqrt{E}$ times faster than the upper bound of the \hlsgd \cite{guo2022hybrid}. Besides, to support the three-layer client-edge-cloud hierarchical FEEL on the ground, more intra-orbit aggregations cause frequently device-to-server communications, which is definitely not communication-efficient. However, a hierarchical distributed SGD convergence rate can be realized on satellite FEEL over the LEO mega-constellations with only two-layer orbit-ground architecture by fully exploiting the parallel computing property of ring all-reduce for intra-orbit aggregations. Thus \fedmega is both communication-efficient and structure-efficient.
	
\end{Rem}

\begin{figure}[!]
	\begin{subfigure}{0.24\textwidth}
		\centering
		\includegraphics[width=0.9\linewidth]{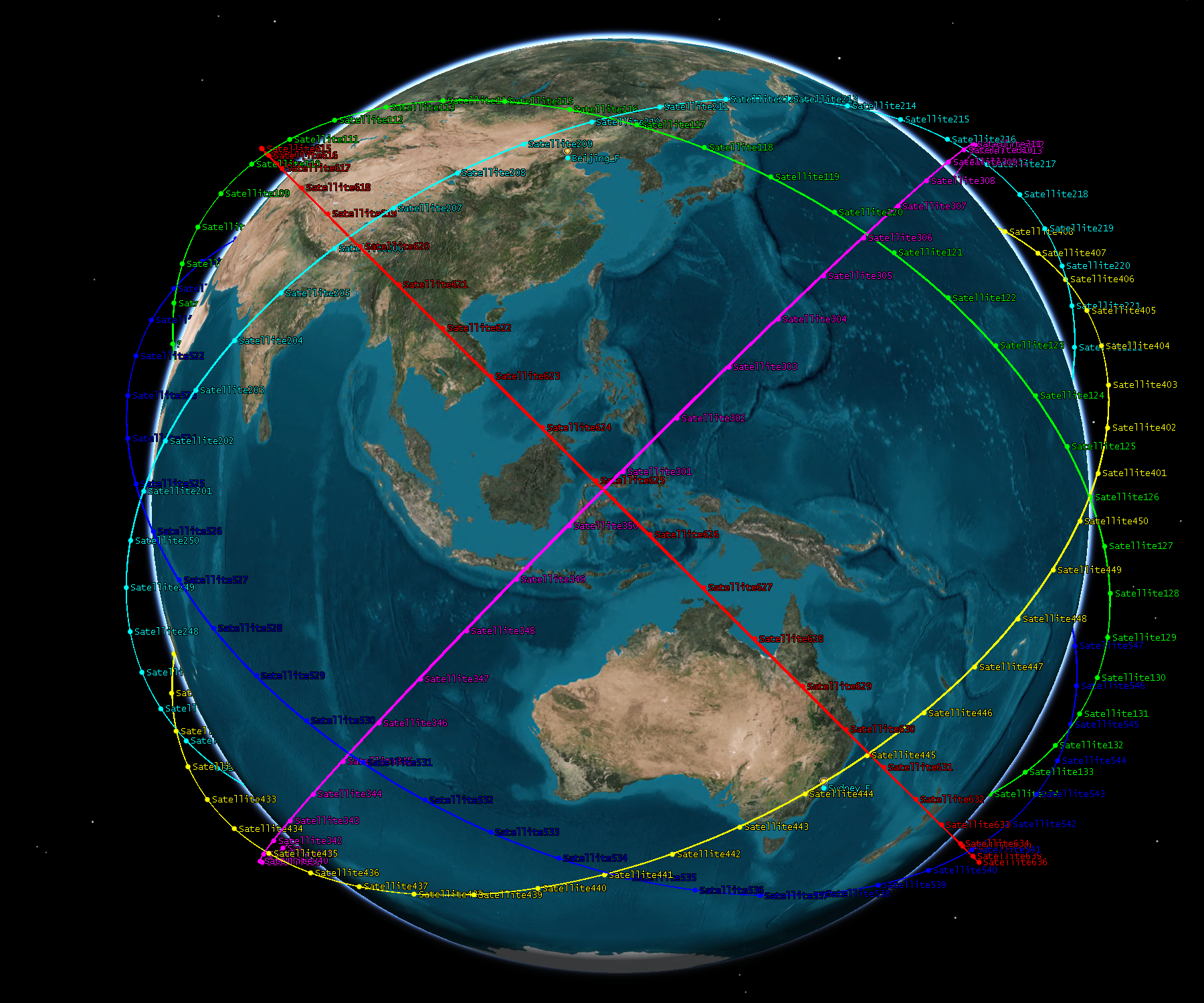}
		\caption{The $300/6/1$ Walker Delta constellation.}
		\label{fig:c1}
	\end{subfigure}
	\hfill
	\begin{subfigure}{0.24\textwidth}
		\centering
		\includegraphics[width=0.9\linewidth]{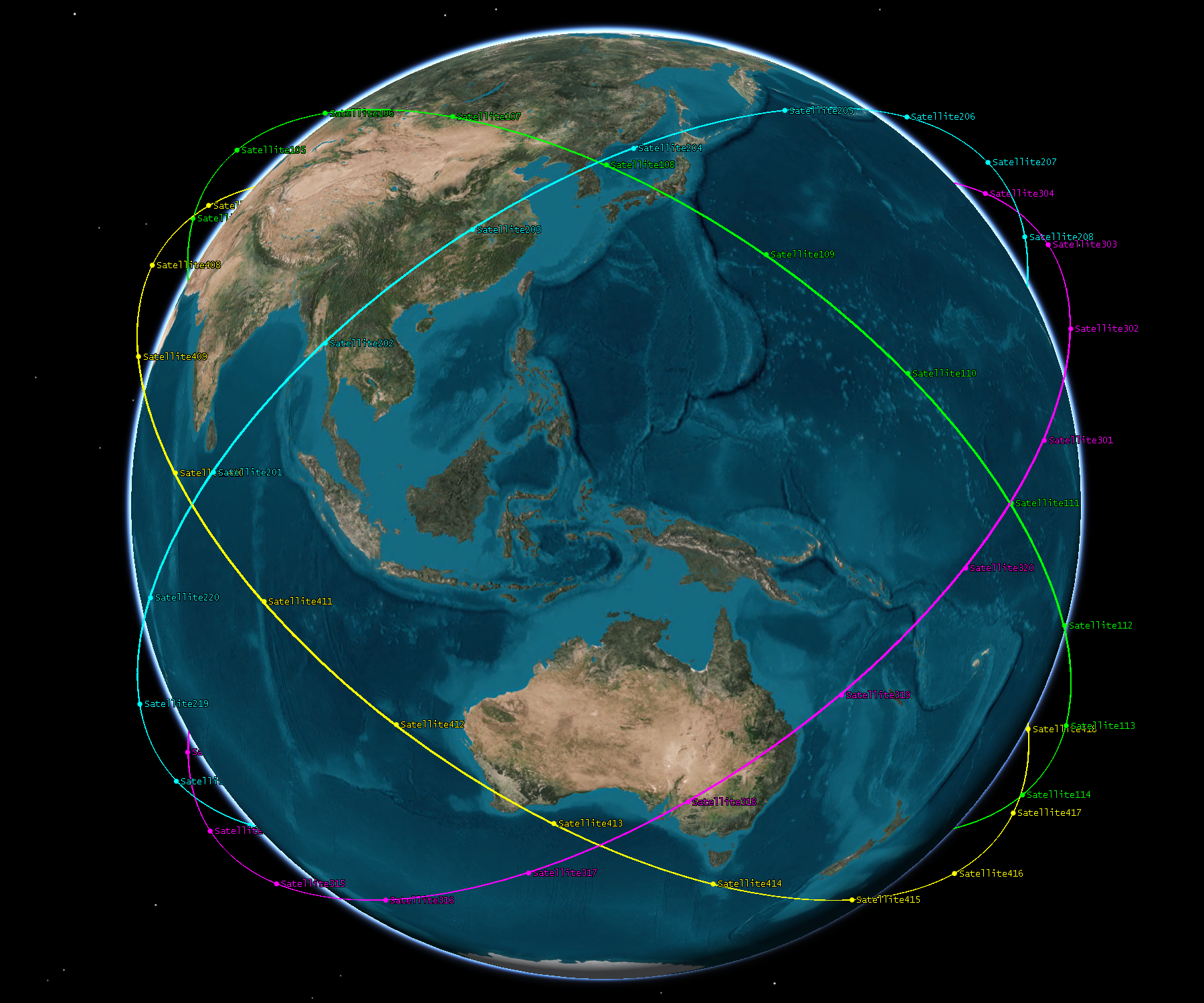}
		\caption{The $80/4/1$ Walker Delta constellation.}
		\label{fig:c2}
	\end{subfigure}
	\caption{Two different satellite constellation network settings.}
	\label{fig:constellation}
	\vspace{-0.5cm}
\end{figure}

\section{Simulations}\label{sec: simulations}
We evaluate the performance of the proposed satellite FEEL algorithm, i.e., \fedmega, for different learning tasks and in different network settings in this section.
\subsection{Experiment Setup}\label{sec: setup}
\subsubsection{LEO Mega-Constellation Network Configurations}
To verify the performance of the satellite FEEL algorithms, we consider two different network settings, as visualized in Fig. \ref{fig:constellation}. The first one is a $300/6/1$ Walker Delta constellation system and the second one is a $80/4/1$ Walker Delta constellation system. In the $300/6/1$ Walker Delta setting, there are $M=6$ orbits and $K_0 = 50$ satellites evenly distributed in each orbit. In $80/4/1$ Walker Delta setting, there are $M=4$ orbits and $K_0 = 20$ satellites evenly distributed in each orbit. 

For both systems, all the satellites are at a $ 500 $ km altitude and there are $ G = 6 $ GSs located at Beijing ($39.9289$°N, $116.388$°E), Berlin ($52.5167$°N, $13.4$°E), Cape Town ($-33.9167$°N, $18.4167$°E), Rio De Janeiro ($-22.9$°N, $-43.2333$°E), Syndey ($-33.8833$°N, $151.217$°E), Toronto ($43.6667$°N, $-79.4167$°E), respectively. 

%\begin{itemize}
%	\item \textbf{Setting 1: $300/6/1$ Walker Delta constellation}. 
%	In this 
%	\item \textbf{Setting 2: $80/4/1$ Walker Star constellation.}
%	
%\end{itemize}
\begin{figure}[!]
	\centering
	\includegraphics[width=1\linewidth]{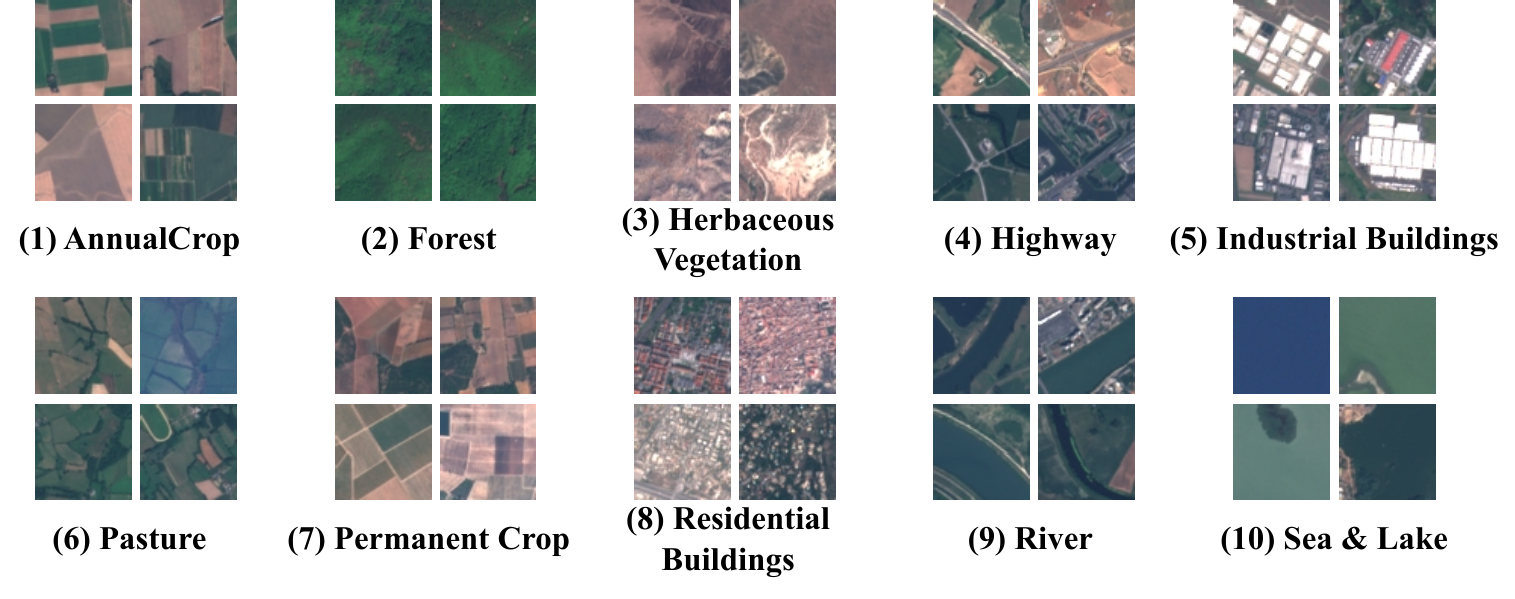}
	\caption{The EuroSAT dataset.}
	\label{fig:eurosat}
	\vspace{-0.2cm}
\end{figure}

\begin{figure}[!]
	\centering
	\includegraphics[width=0.95\linewidth]{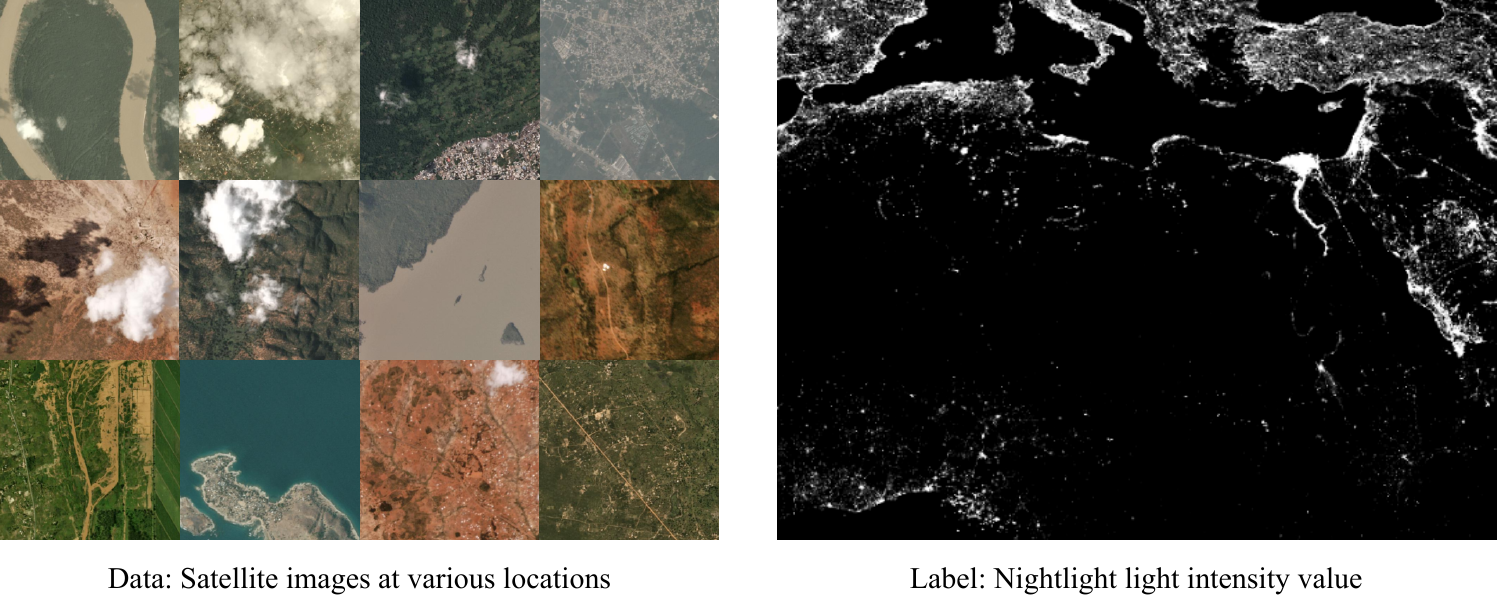}
	\caption{The dataset of nighttime light intensity prediction task.}
	\label{fig:real}
	\vspace{-0.5cm}
\end{figure}

\subsubsection{Learning Tasks}
The following three learning tasks are used to evaluate the performance of the FEEL algorithms.
\begin{itemize}
	\item \textbf{Synthetic data classification\cite{FedProx}}. 
	For this task, we train a deep neural network (DNN) on $\operatorname{Synthetic}(0.5,0.5)$ dataset generated in \cite[Section 5.1]{FedProx}, which contains 10 labels of data. The DNN is a two-layer feedforward neural network with 60 input neurons, 20 hidden neurons, and 10 output neurons. The local data generated by each satellite is assumed to be non-independent and identically distributed (non-IID) in each orbit.
	In this experiment, each satellite contains $\operatorname{Unif}(50, 450)$ data samples. The batch-size is set to be 25. The learning rate is set to be 0.01. The number of communication round $R$, the number of intra-orbit rounds $T$, and the number of local update $E$ are set to be 600, 10, and 5, respectively.
%	\item \textbf{Handwritten digit identification}. 
%	For this task, we train a convolutional neural network (CNN) on the MNIST dataset \cite{MNIST}, where there are two convolution layers and one fully connected layer with 512 units. The local data in each device is assumed to be non-independent and identically distributed (non-IID) in each orbit, i.e., each satellite contains data with at most 2 kinds of labels. In this experiment, each satellite contains 300 data samples. The batch-size is set to be 25. The learning rate is set to be 0.01. The number of communication round $R$, The number of intra-orbit rounds $T$, and the number of local update $E$ are set to be 150, 10, and 5, respectively.
	\item \textbf{Land cover classification on EuroSAT \cite{helber2019eurosat}}.
	In this task, we train a ResNet-18 neural network model on the Sentinel-2 satellite RGB images to classify the land cover and land use. 
 In fact, satellites are capable of retrieving land cover labels for remote sensing images by matching the geographical location information of captured images with the geographic information database on the ground. This enables the acquisition of labels for remote sensing images to be used for training purposes.
    Specifically, the dataset contains 27,000 images with 10 classes of land cover illustrated in Fig. \ref{fig:eurosat}. The local data of each satellite is assumed to be IID sized $\operatorname{Unif}(300, 375)$. The batch-size is set to be $16$. The learning rate is set to be $1\times 10^{-4}$. The number of intra-orbit rounds $T$, and the number of local update $E$ are set to be $5$, and $20$, respectively.
	\item \textbf{Nighttime light intensity prediction.} 
%	{\color{blue} Explain the network architecture and the task here.} 
	The aim of this task is to train a neural network model on daytime satellite images to predict the nighttime light intensity level of the corresponding area \cite{jean2016combining}. As for the dataset details, a total of 42,000 locations were randomly selected within the territories of Ethiopia, Malawi, and Nigeria. Corresponding daytime satellite images were collected from Google Earth as the training data, nighttime light intensity data is acquired from the National Oceanic and Atmospheric Administration (NOAA) as the labels, where the light intensity categorized into $8$ classes, as shown in Fig. \ref{fig:real}. A neural network model based on the ResNet-18 architecture is then trained on this data set. The local data in each satellite is assumed to be non-IID in each orbit, i.e., each satellite contains data with at most $2$ kinds of labels. Each satellite contains $\operatorname{Unif}(600, 750)$. Other configurations are the same as the second task.
\end{itemize}
We run the experiments on PyTorch version 2.0.0, CUDA version 11.7, and python version 3.9.

\subsubsection{Baselines}
We consider the following algorithms for comparison:
\begin{itemize}
	\item \fedmega: The proposed algorithm.
	\item \hlsgd: The distributed algorithm exploiting both GSLs and ISLs. For intra-orbit aggregations, only the adjacent satellites exchange their local models \cite{guo2022hybrid}.
	\item \fedisl: Vanilla \fedavg executed by LEO satellites \cite{razmi2022board} in one orbit.
\end{itemize}
\begin{figure}[!]
	\begin{subfigure}{0.24\textwidth}
		\centering
		\includegraphics[width=0.98\linewidth]{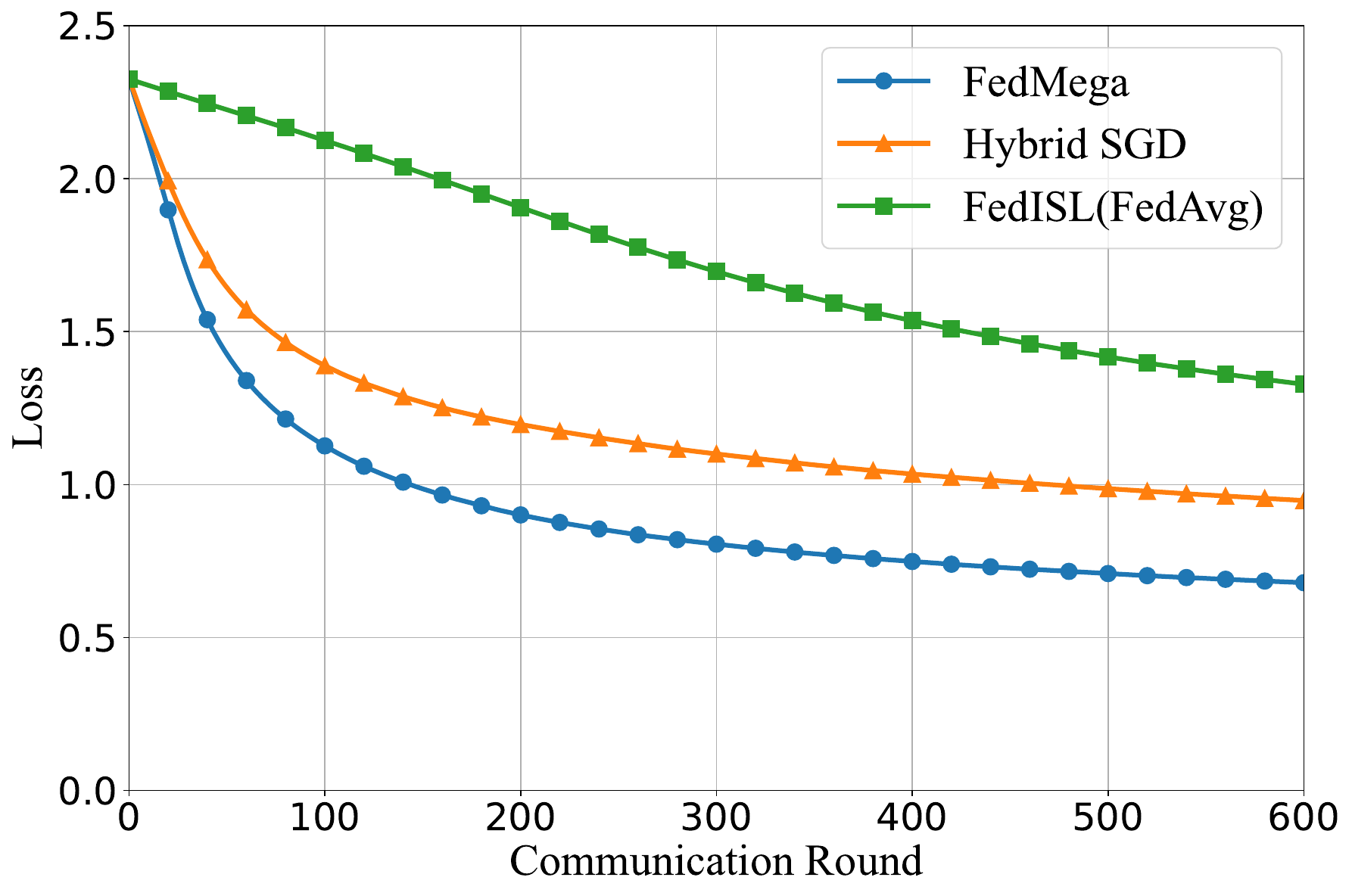}
		\caption{Training loss.}
		\label{fig:niidloss}
	\end{subfigure}
	\hfill
	\begin{subfigure}{0.24\textwidth}
		\centering
		\includegraphics[width=0.98\linewidth]{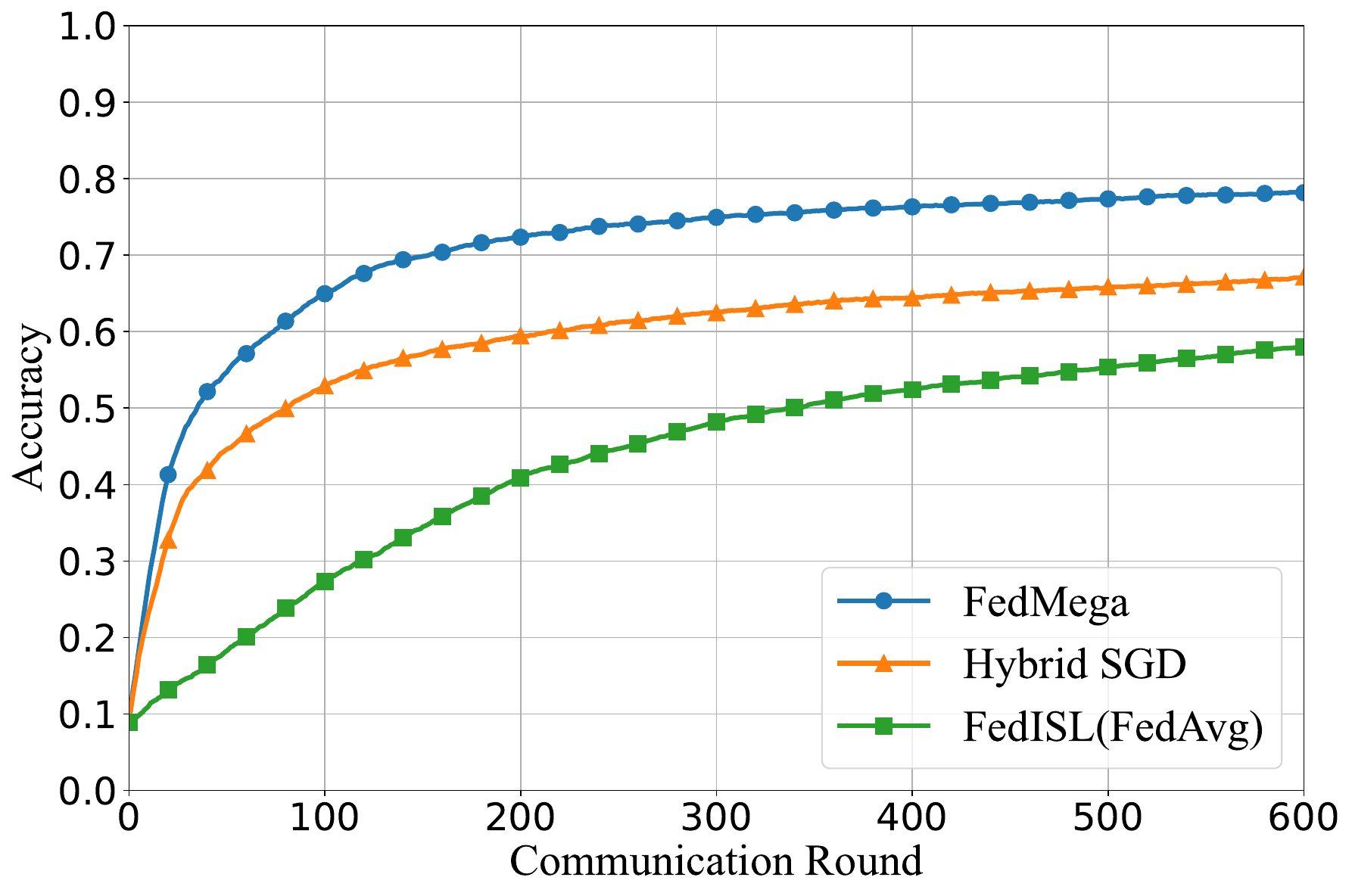}
		\caption{Test accuracy.}
		\label{fig:niidacc}
	\end{subfigure}
	\caption{Learning performance versus communication rounds of Synthetic dataset with non-IID data.}
	\label{fig:result1niid}
	\vspace{-0.5cm}
\end{figure}

\subsubsection{Delay Computation of Satellite FEEL}
We now present the delay computation for all baselines.
For \fedmega, in each communication round, we denote the delay of a single on-board local update, ring all-reduce based intra-orbit aggregation, and distributed downloading as $t_{\text{comp}}$, $t_{\text{rar}}$, and $t_{\text{down}}$, respectively. Besides, the time occupied for waiting a feasible GSL and global model broadcasting is denoted as $t_{\text{wait}}$ and $t_{\text{bc}}$. In this case, the total delay of communication round $r$ can be written as
\begin{align}\label{eq: tfedmega}
	T_{\fedmega}^r = 2 t^r_{\text{down}}  + t^r_{\text{bc}} + T (Et^r_{\text{comp}} + t^r_{\text{rar}}) + t^r_{\text{wait}},
\end{align}
where we assume that the downlink GSL and the uplink one are dual process.
To be specific, let $I$ be the model size of the model parameter (0.5GB for Synthetic data classification and 3GB for nighttime intensity prediction), $\gamma_{\text{gsl}}$ be the average data rate of the GSL, and $\gamma_{\text{isl}}$ be the average data rate of the laser ISL, we have
\begin{align}
	t_{\text{down}} = \frac{I}{\gamma_{\text{gsl}}},t_{\text{bc}} = \frac{I}{\gamma_{\text{isl}}},t_{\text{rar}} = \frac{2K-2}{2K}\frac{I}{\gamma_{\text{isl}}}+(2K-2)t_{\text{sum}},
\end{align}
where $t_{\text{sum}}$ is the time for performing a summation in Algorithm \ref{alg:ring-all-reduce}. 
Assume that, $\gamma_{\text{isl}} = 10 $ GB/s \cite{chaudhry2021laser}, $t^r_{\text{comp}}=2$ s for Synthetic data classification task and $5$ s for nighttime intensity prediction task, and $t_{\text{sum}} = 0.01$ s. The satellite-ground communication is modeled as a digital communication process over a complex Gaussian channel, where the link rate $ \gamma_{\text{gsl}}$ is computed based on the Shannon formula, i.e., 
$
    \gamma_{\text{gsl}}=B \log _2\left(1+\frac{HP_t G_t G_s}{B N_0}\right),
$
where $H = \left(\frac{c_0}{4\pi f_c d}\right)^2 $ is the free space loss, $d$ is the distance and $f_c = 32$ $\mathrm{GHz}$ is the carrier frequency. $P_t$ is the transmission power set to be $ 40$ $\mathrm{dBm}$, $G_t = 15$ $\mathrm{dBi}$ and $G_s = 30$ $\mathrm{dBi}$ are the transmitter and receiver antenna gains. $B$ is the bandwidth set to be $62.5$ $\mathrm{MHz}$. $N_0 = k_B T$ is the noise spectral density where $T = 354$ $\mathrm{K}$ is the temperature and $k_B$ is the Boltzman constant. The elevation angle is set to be $\pi/4$. The access time is the time required for establishing a connection between the satellite and the ground, set to be 10 seconds. 
% The value of $t_{\text{wait}}$ is set to 300 s for both constellation settings.

\begin{figure}[!]
	\centering
	\begin{subfigure}{0.24\textwidth}
		\centering
		\includegraphics[width=1\linewidth]{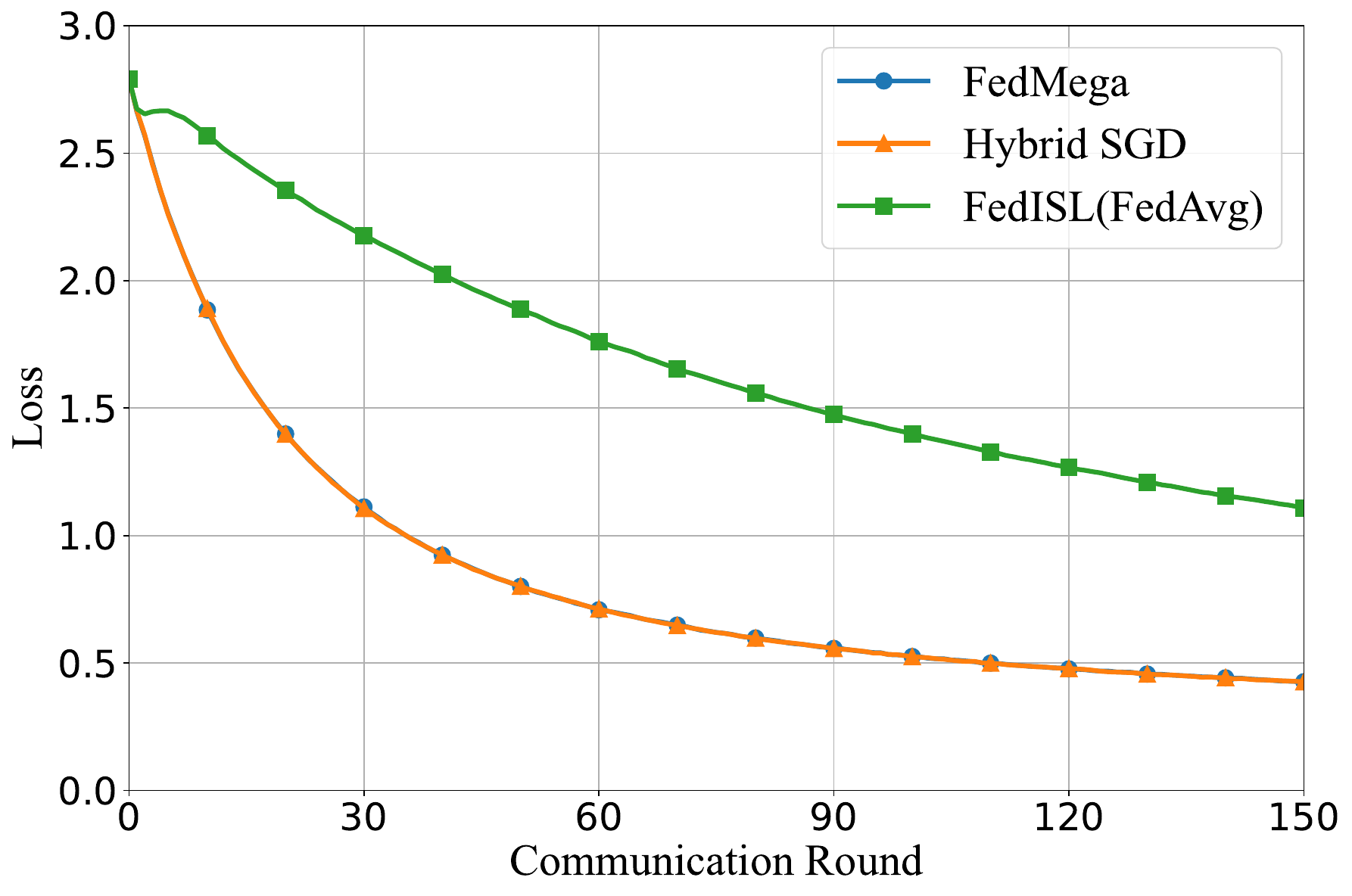}
		\caption{Training loss.}
		\label{fig:iidloss}
	\end{subfigure}
	\hfill
	\begin{subfigure}{0.24\textwidth}
		\centering
		\includegraphics[width=1\linewidth]{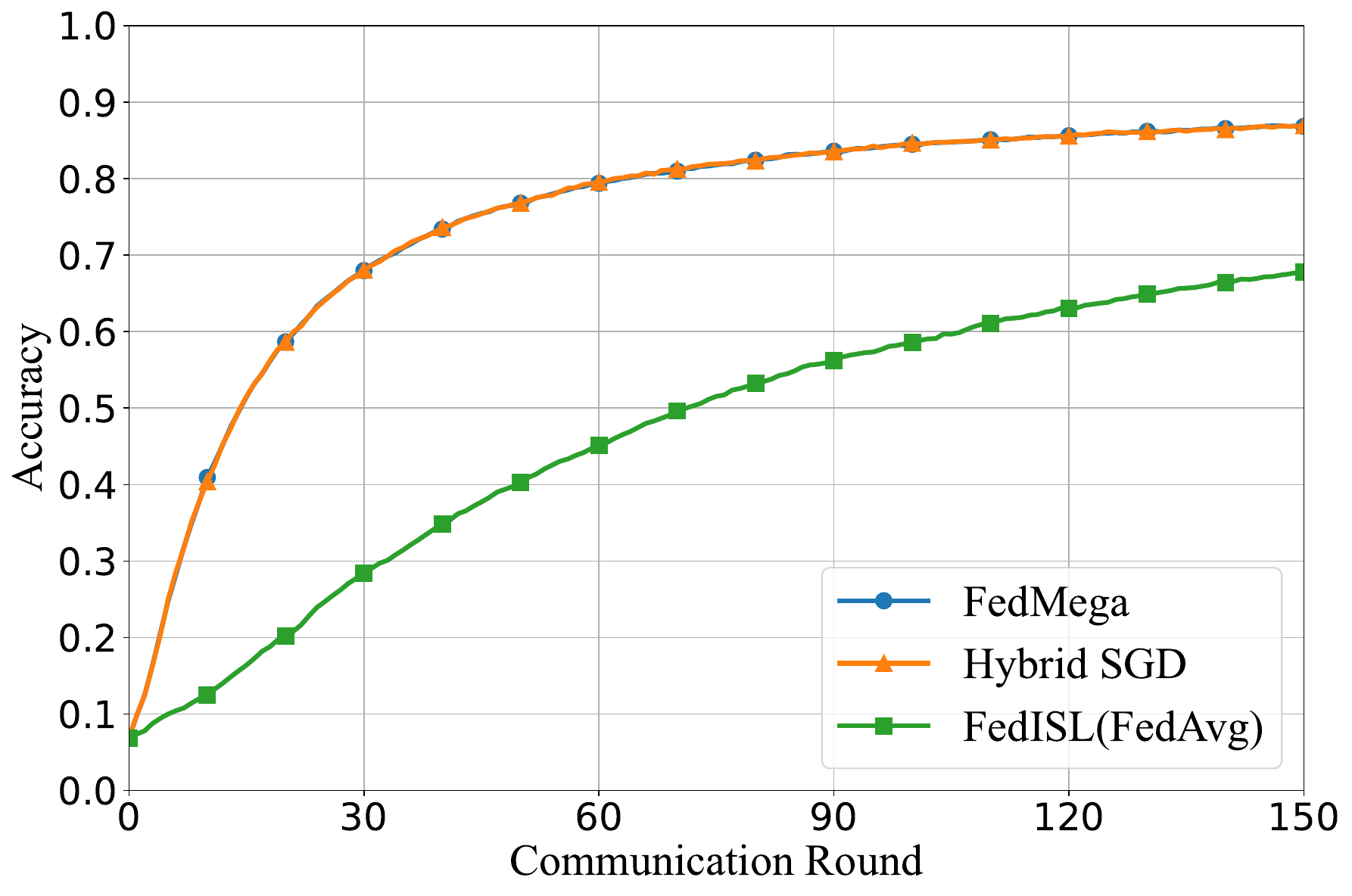}
		\caption{Test accuracy.}
		\label{fig:iidacc}
	\end{subfigure}
	\caption{Learning performance versus communication rounds of Land cover classification with IID data.}
	\label{fig:result1iid}
	\vspace{-0.5cm}
\end{figure}

For comparison, we also assume that the delay $t_{\text{down}}$, $t_{\text{bc}}$, $t_{\text{comp}}$, and $t_{\text{wait}}$ are the same for \hlsgd and \fedisl. 
The differences lie in the number of local updates and the  intra-orbit aggregations. The total delay of communication round $r$ for \hlsgd and \fedisl can be written as
\begin{align*}
	T_{\hlsgd}^r &= 2  t^r_{\text{down}}  + t^r_{\text{bc}} + T (Et^r_{\text{comp}} + t^r_{\text{hybrid}}) + t^r_{\text{wait}},\\
	T_{\fedisl}^r &= 2  t^r_{\text{down}}  + t^r_{\text{bc}} + Et^r_{\text{comp}} + t^r_{\text{isl}} + t^r_{\text{wait}},
\end{align*}
where $t^r_{\text{hybrid}}$ and $t^r_{\text{isl}}$ are the delay of intra-orbit model exchange proposed in \cite{guo2022hybrid} and model transmission in the orbit \cite{razmi2022board}, respectively.

\subsection{Performance Evaluation}
We perform the simulation of synthetic data classification task in the $300/6/1$ Walker Delta constellation and the nighttime light intensity prediction task in the $80/4/1$ Walker Delta constellation, respectively.

\subsubsection{Learning Performance versus Communication Rounds}
The training loss and the test accuracy versus communication rounds of Synthetic data classification are shown in Fig. \ref{fig:result1niid}.
It can be observed that with ISL equipped, both \fedmega and \hlsgd drastically outperform \fedavg without ISLs. 
As analyzed in Remark \ref{rem: algorithm comparison}, \fedmega converges faster and achieves higher accuracy than \hlsgd in the presence of intra-orbit heterogeneous data.
As illustrated in Fig. \ref{fig:result1iid}, in the land cover classification task with IID data splitting, \fedmega shares the same learning performance with \hlsgd. This means \hlsgd is not robust to non-IID data and \fedmega outperforms \hlsgd in the non-IID settings. 
Let's consider the land cover classification task for an example, the highest accuracy after 150 communication rounds for \fedmega, \hlsgd, and \fedisl is $86.91\%$, $86.85\%$, and $67.70\%$, respectively.
In all, the convergence speed-up of \fedmega is consistent with our analysis in Remark \ref{rem: algorithm comparison}.
\subsubsection{Learning Performance versus System Delay}
Based on the delay analysis in Section \ref{sec: setup}, the training loss and the test accuracy versus system delay of Synthetic data classification and nighttime intensity prediction are illustrated in Figs. \ref{fig:result2niidtime} and \ref{fig:realresult2niidtime}. From Figs. \ref{fig:result2niidtime} and \ref{fig:realresult2niidtime}, we can conclude that \fedmega has the smallest system delay to reach a targeted learning performance. For instance, in the task of Synthetic data classification, \fedmega demonstrates a significant delay reduction to reach test accuracy of $60\%$, up to $66.9\%$ and $85.1\%$ compared with \hlsgd and \fedavg, respectively. The delay reduction can be observed in the task of nighttime intensity prediction as well.

\begin{figure}[!]
	\begin{subfigure}{0.24\textwidth}
		\centering
		\includegraphics[width=1\linewidth]{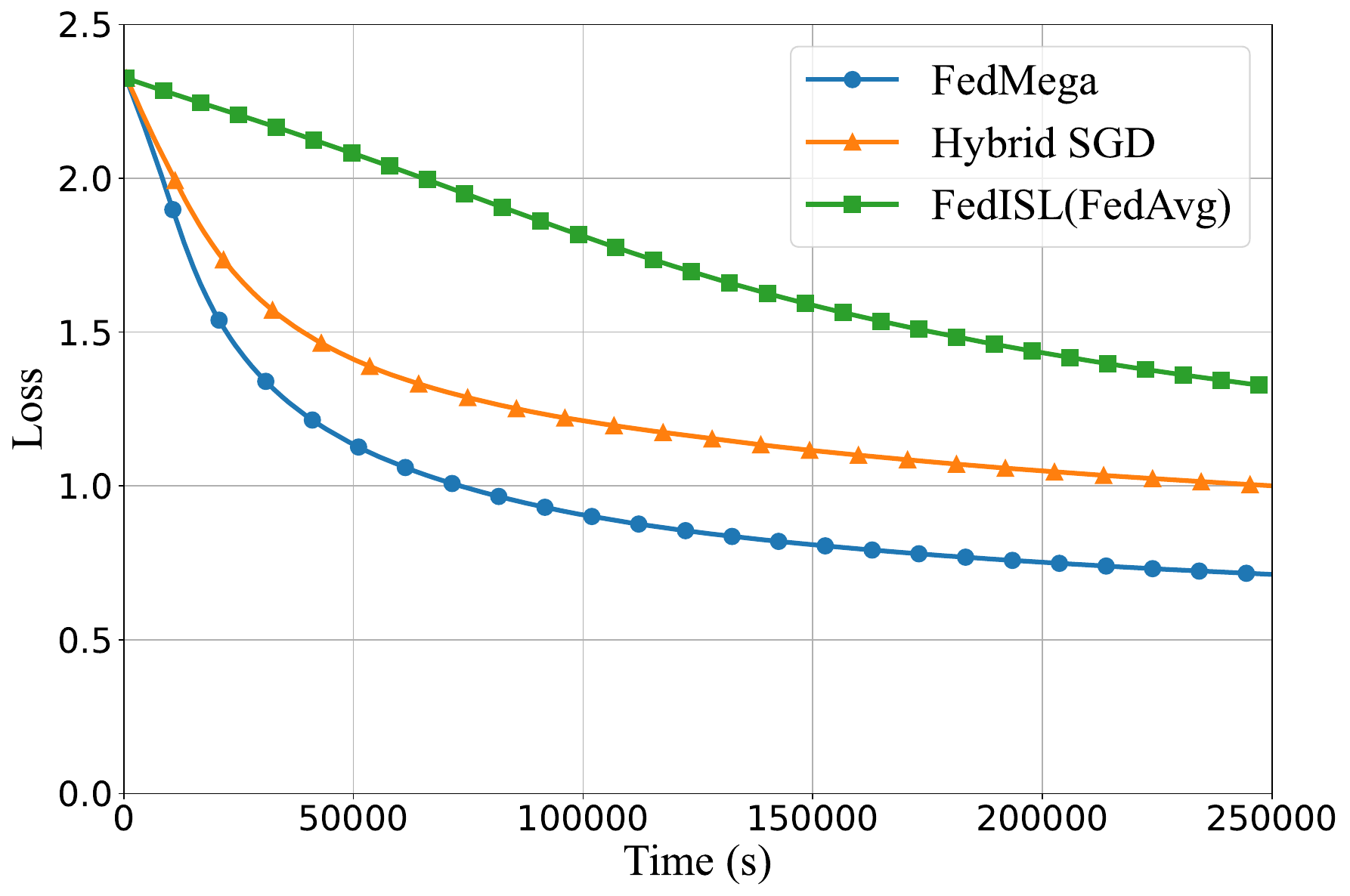}
		\caption{Training loss.}
		\label{fig:niidlosstime}
	\end{subfigure}
	\hfill
	\begin{subfigure}{0.24\textwidth}
		\centering
		\includegraphics[width=1\linewidth]{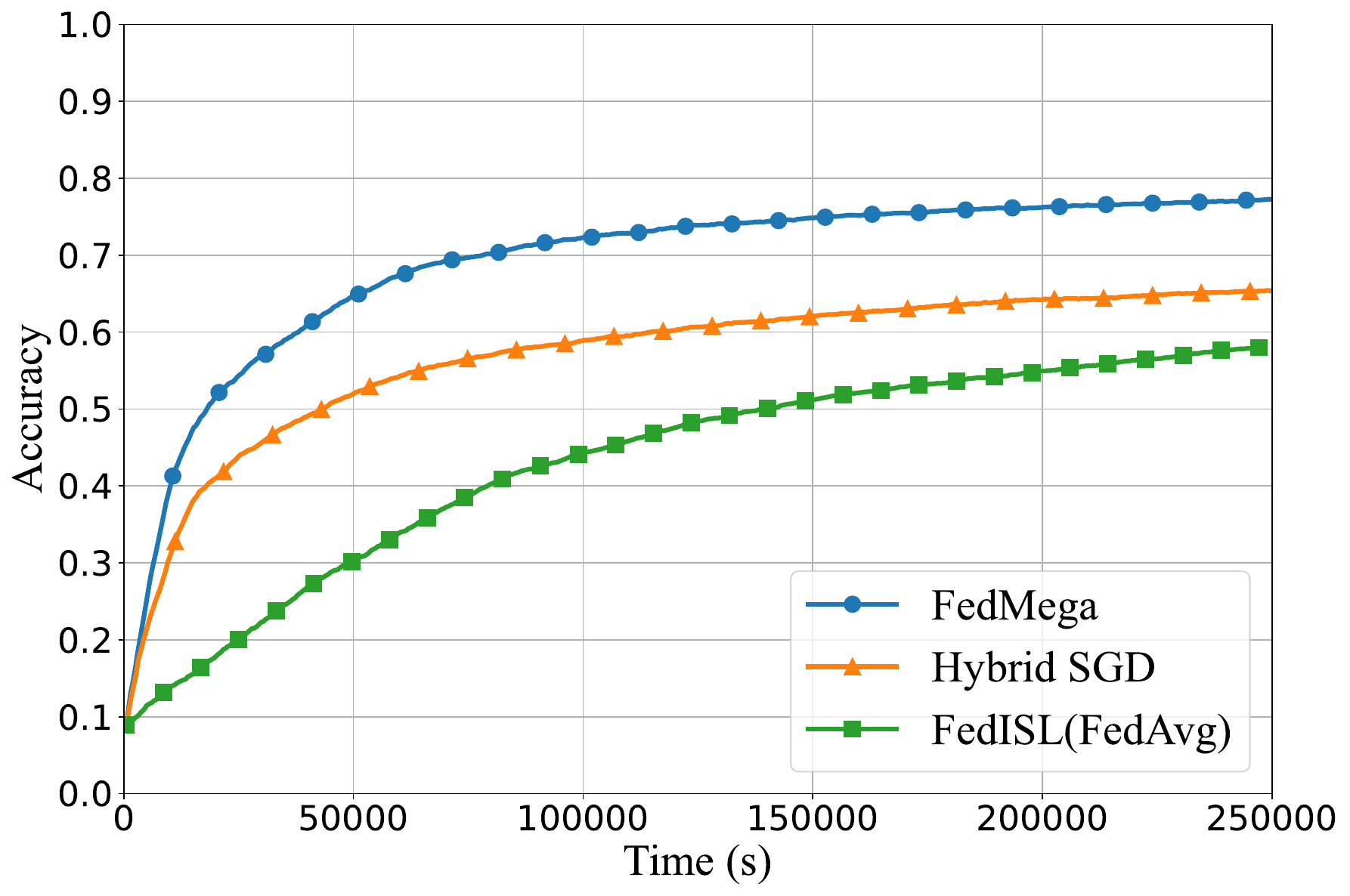}
		\caption{Test accuracy.}
		\label{fig:niidacctime}
	\end{subfigure}
	\caption{Learning performance versus system delay of Synthetic dataset with non-IID data.}
	\label{fig:result2niidtime}
	\vspace{-0.3cm}
\end{figure}

%Several other asynchronous FL designs in [21], [35], [36]
%also do not consider the ISL and thus their convergence is at the same level compared to the
%FedAvg without ISL case. 
%It can be further read that the FedMega enjoys a nearly 2 times
%convergence speed compared to the FedISL. This is mainly because that in each global round
%of FedISL, it spends too much proportion of time on model broadcasting and collecting,
%multiple rounds’ ISL hopping and 2 times’ GSL usage, and only a little proportion time on
%real training phase. In the proposed FedMega, instead, the global model broadcasting and
%aggregation is seldom invoked. During the most of time, satellites are focusing on local

%\subsection{Runtime Analysis}

\subsubsection{Trade-off Analysis}
In Remark \ref{rem: convergence rate}, we state that the product $ET$ cannot be infinite large. Besides, we can also observe from \eqref{eq: tfedmega} that a large $ET$ will result in a large system delay of \fedmega. Therefore, there is a trade-off between frequent intra-orbit aggregations and the system latency for achieving a certain convergence accuracy. To verify this, we first fix the number of local updates, i.e., $E = 5$, and then derive the system delay to reach test accuracy of $65\%,70\%$, and $75\%$ under different choices of $T$ in the Synthetic data classification task, which is shown in Fig. \ref{fig:tradeoff}.

\begin{figure}[!]
	\begin{subfigure}{0.24\textwidth}
		\centering
		\includegraphics[width=1\linewidth]{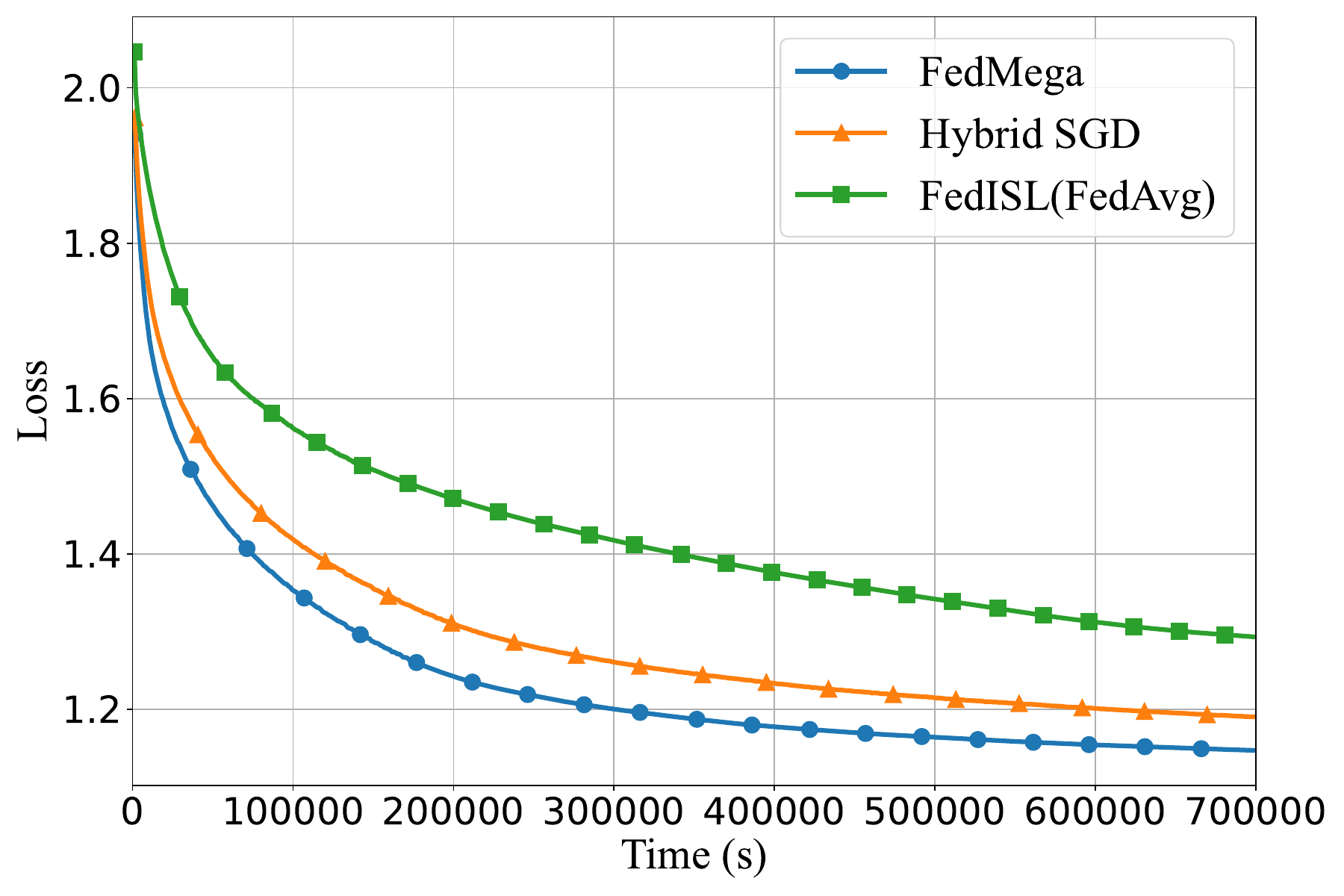}
		\caption{Training loss.}
		\label{fig:realniidlosstime}
	\end{subfigure}
	\hfill
	\begin{subfigure}{0.24\textwidth}
		\centering
		\includegraphics[width=1\linewidth]{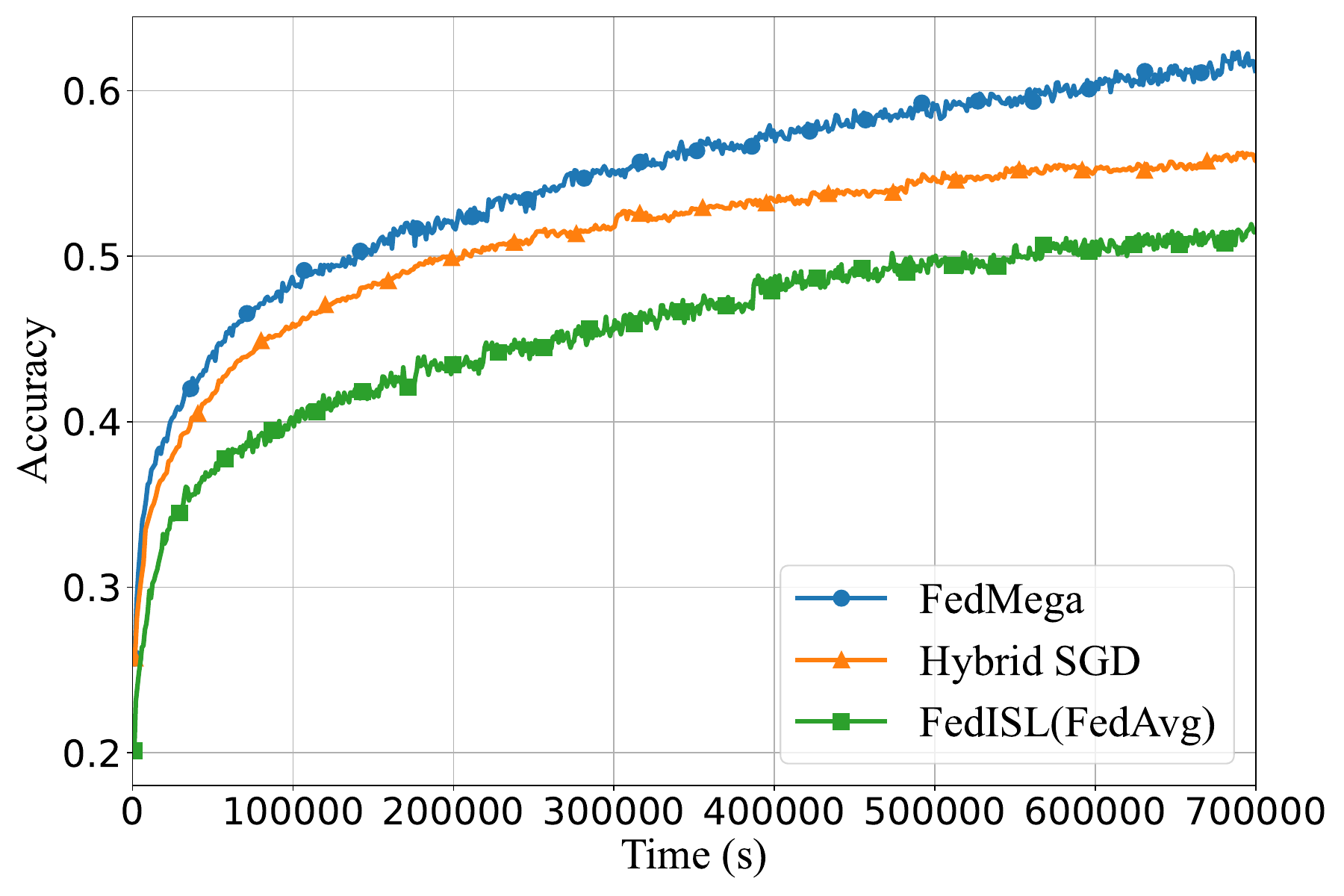}
		\caption{Test accuracy.}
		\label{fig:realniidacctime}
	\end{subfigure}
	\caption{Learning performance versus system delay of nighttime intensity prediction with non-IID data.}
	\label{fig:realresult2niidtime}
	\vspace{-0.5cm}
\end{figure}

From Fig. \ref{fig:tradeoff}, we can summarize that the number of intra-orbit aggregations $T$ can be neither too smaller, e.g., $T=1$, nor too large, e.g., $T=100$. When $T = 30$, \fedmega has the smallest system delay to arrive at different levels of learning performance, which is definitely a proper design of $T$. 

\begin{figure}[!]
	\centering
	\includegraphics[width=0.9\linewidth]{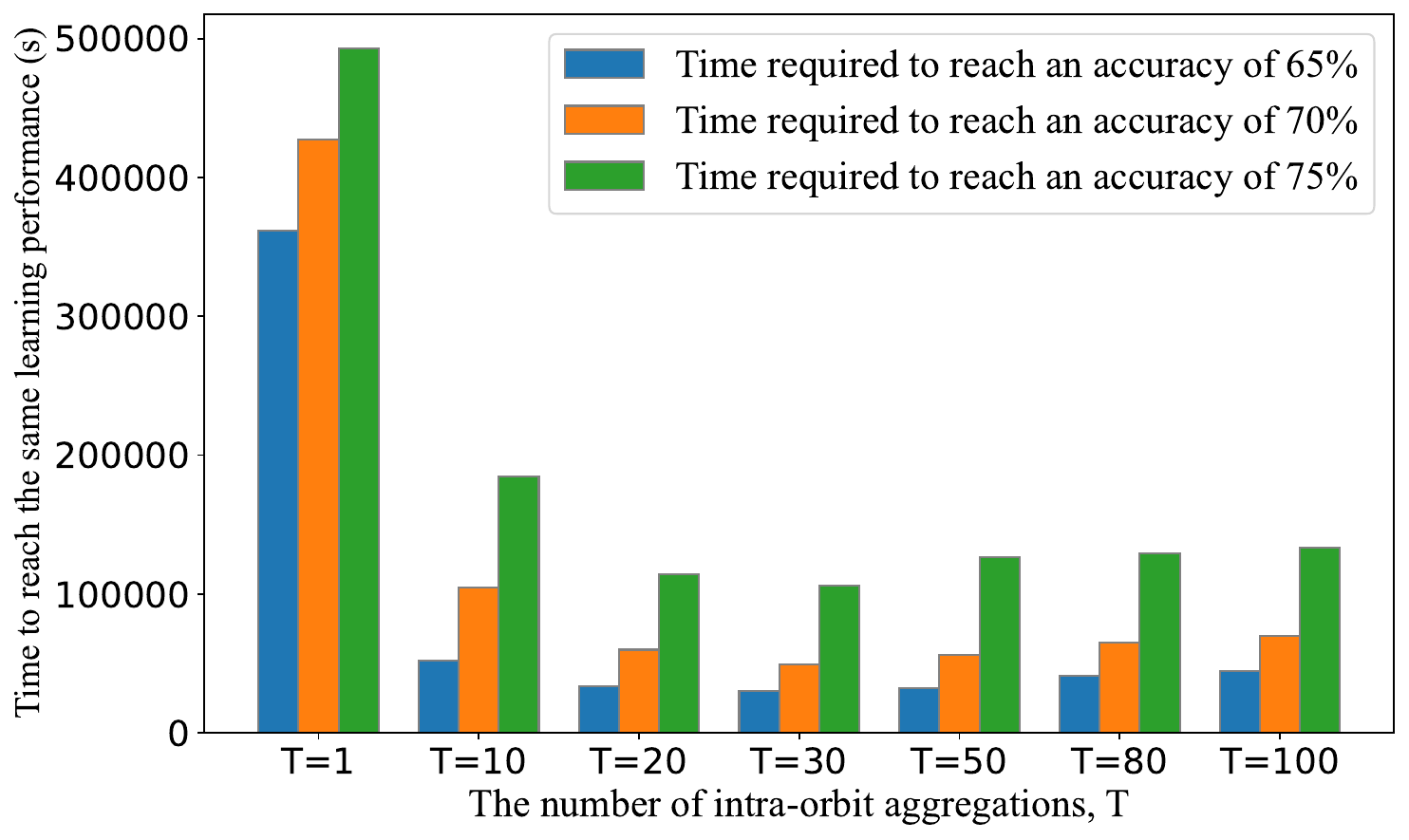}
	\caption{Trade-off between the number of intra-orbit aggregations and the total delay to reach a target accuracy.}
	\label{fig:tradeoff}
	\vspace{-0.5cm}
\end{figure}

	\section{Conclusion}

In this paper, we developed a fast-converging and communication-efficient satellite FEEL framework for LEO networks. Recognizing that the extensive use of GSL significantly hinders the convergence speed of satellite FEEL, we leveraged the high data rate of laser ISL along with the stable ring topology within each orbit. This approach introduces an intra-orbit model aggregation step, which substantially reduces GSL dependency and, consequently, reduces latency. To further accelerate the intra-orbit model aggregation, we employed a ring all-reduce-based transmission scheme, ensuring execution within a constant and short time irrelevant to the number of satellites. Additionally, we devised a network flow-based transmission scheme to further minimize the latency of the global model aggregation step. Comprehensive convergence analysis under non-convex settings and non-IID data distribution characterized the superior convergence performance of our proposed algorithm. Simulation results confirmed its superior performance compared to existing benchmarks.
	%This is the Appendix.
\begin{figure*}[ht] % hb底部，ht为头部
	\centering % 公式居中
       
	\begin{align}\nonumber\label{eq: T3}
		&\frac{1}{M}\sum_{\mathcal{M}}\mathbb{E}\left[\left \|\bar{\bm{z}}^{r,t}- \bar{\bm{z}}_m^{r,t} \right \|_2^2\right]
		% =&\frac{1}{M}\sum_{\mathcal{M}}\mathbb{E}\left[\left \|\bar{\bm{z}}^{r,0}-\eta E \sum_{\tau=0}^{t-1}\bar{\bm{d}}^{r,\tau} - \bar{\bm{z}}_m^{r,0} + \eta E \sum_{\tau=0}^{t-1}\bar{\bm{d}}_m^{r,\tau}  \right \|_2^2\right]\\\nonumber
		=\frac{\eta^2E^2}{M}\sum_{\mathcal{M}}\mathbb{E}\Big[\Big \| \sum_{\tau=0}^{t-1}\left[\bar{\bm{d}}_m^{r,\tau} \pm \nabla F_m(\bar{\bm{z}}_m^{r,\tau})\pm \nabla F(\bar{\bm{z}}_j^{r,\tau}) -\bar{\bm{d}}^{r,\tau}\right] \Big \|_2^2\Big]\\\nonumber
		\overset{(a)}{\leq}&\frac{2\eta^2E^2}{M}\sum_{\mathcal{M}}\mathbb{E}\Big[\Big \| \sum_{\tau=0}^{t-1}\left[\bar{\bm{d}}_m^{r,\tau} - \nabla F_m(\bar{\bm{z}}_m^{r,\tau})\right]- \left[\bar{\bm{d}}^{r,\tau} -\nabla F(\bar{\bm{z}}_j^{r,\tau})\right]  \Big \|_2^2\Big]
		+\frac{2\eta^2E^2}{M}\sum_{\mathcal{M}}\mathbb{E}\Big[\Big \| \sum_{\tau=0}^{t-1}\left[\nabla F_m(\bar{\bm{z}}_m^{r,\tau})- \nabla F(\bar{\bm{z}}_j^{r,\tau})\right]  \Big \|_2^2\Big]\\\nonumber
		\overset{(b)}{\leq}&\underbrace{\frac{4\eta^2E^2}{M}\sum_{\mathcal{M}}\mathbb{E}\Big[\Big \| \sum_{\tau=0}^{t-1}\left[\bar{\bm{d}}_m^{r,\tau}-\bar{\bm{h}}_m^{r,\tau}\right] - \left[\bar{\bm{d}}^{r,\tau}-\bar{\bm{h}}^{r,\tau}\right] \Big \|_2^2\Big]}_{T_{31}}
		+\underbrace{\frac{4\eta^2E^2}{M}\sum_{\mathcal{M}}\mathbb{E}\Big[\Big \| \sum_{\tau=0}^{t-1}\left[\bar{\bm{h}}_m^{r,\tau} -\nabla F_m(\bar{\bm{z}}_m^{r,\tau})\right] -\left[\bar{\bm{h}}^{r,\tau}-\nabla F(\bar{\bm{z}}_j^{r,\tau})\right]  \Big \|_2^2\Big]}_{T_{32}}\\
		&+\underbrace{\frac{2\eta^2E^2}{M}\sum_{\mathcal{M}}\mathbb{E}\Big[\Big \| \sum_{\tau=0}^{t-1}\left[\nabla F_m(\bar{\bm{z}}_m^{r,\tau})- \nabla F(\bar{\bm{z}}_j^{r,\tau})\right]  \Big \|_2^2\Big]}_{T_{33}},
	\end{align}
	\vspace*{8pt} % 调整线与公式之间的距离
	\hrulefill % 添加一条水平线
	\vspace{-0.5cm}
\end{figure*}
\begin{figure*}
	%\hrulefill
	\begin{flalign}\nonumber\label{eq: T31}
		T_{31} \overset{(a)}{=} &\frac{4\eta^2E^2}{M}\sum_{\mathcal{M}}\mathbb{E}\Big[\Big \| \sum_{\tau=0}^{t-1}\left[\bar{\bm{d}}_m^{r,\tau}-\bar{\bm{h}}_m^{r,\tau}\right] \Big \|_2^2\Big]
		-\frac{4\eta^2E^2}{M}\sum_{\mathcal{M}}\mathbb{E}\Big[\Big \| \sum_{\tau=0}^{t-1} \left[\bar{\bm{d}}^{r,\tau}-\bar{\bm{h}}^{r,\tau}\right] \Big \|_2^2\Big]&\\\nonumber
		\overset{(b)}{=}& \frac{4\eta^2E^2}{M}\sum_{\tau=0}^{t-1}\sum_{\mathcal{M}}\mathbb{E}\left[\left \| \bar{\bm{d}}_m^{r,\tau}-\bar{\bm{h}}_m^{r,\tau} \right \|_2^2\right]
		-4\eta^2E^2\sum_{\tau=0}^{t-1}\mathbb{E}\Big[\Big \| \bar{\bm{d}}^{r,\tau}-\bar{\bm{h}}^{r,\tau} \Big \|_2^2\Big]\\
%		=& \frac{4\eta^2E^2}{M}\sum_{\tau=0}^{t-1}\sum_{\mathcal{M}}\mathbb{E}\Big[\Big \| \frac{1}{EK_0}\sum_{\mathcal{K}_m}\sum_{e=0}^{E-1}\left[\nabla F_{m,k}^{r,\tau,e}(\bm{z}_{m,k}^{r,\tau,e})-\nabla F_{m,k}(\bm{z}_{m,k}^{r,\tau,e})\right]\Big \|_2^2\Big]\\\nonumber
%		&-4\eta^2E^2\sum_{\tau=0}^{t-1}\mathbb{E}\Big[\Big \| \frac{1}{EK}\sum_{\mathcal{K}}\sum_{e=0}^{E-1}\left[\nabla F_{m,k}^{r,\tau,e}(\bm{z}_{m,k}^{r,\tau,e})-\nabla F_{m,k}(\bm{z}_{m,k}^{r,\tau,e})\right]\Big \|_2^2\Big]\\
		\overset{(c)}{\leq}& \frac{4\eta^2E(M-1)\sigma^2t}{K},\hfill
	\end{flalign}
	\vspace*{8pt} % 调整线与公式之间的距离
	\hrulefill
	\begin{flalign}\nonumber\label{eq: T32}
		T_{32}
		\overset{(a)}{\leq} & \frac{4\eta^2E^2}{M}\sum_{\mathcal{M}}\mathbb{E}\Big[\Big \| \sum_{\tau=0}^{t-1}\left[\bar{\bm{h}}_m^{r,\tau} -\nabla F_m(\bar{\bm{z}}_m^{r,\tau})\right]   \Big \|_2^2\Big]
		\overset{(b)}{\leq} \frac{4\eta^2E^2t}{M}\sum_{\tau=0}^{t-1}\sum_{\mathcal{M}}\mathbb{E}\Big[\Big \| \frac{1}{K_0E}\sum_{\mathcal{K}_m}\sum_{e=0}^{E-1}\left[\nabla F_{m,k}(\bm{z}_{m,k}^{r,\tau,e})-\nabla F_{m,k}(\bar{\bm{z}}_{m}^{r,\tau})\right]  \Big \|_2^2\Big]\\
		\overset{(c)}{\leq}& \frac{4\eta^2EL^2t}{K}\sum_{\tau=0}^{t-1}\sum_{\mathcal{K}}\sum_{e=0}^{E-1}\mathbb{E}\left[\left \| \bm{z}_{m,k}^{r,\tau,e}-\bar{\bm{z}}_{m}^{r,\tau}\right \|_2^2\right]
	\end{flalign}
	\vspace*{8pt} % 调整线与公式之间的距离
	\hrulefill
	\begin{flalign}\nonumber\label{eq: T3'}
		%	&T_3 \\\nonumber
		T_3
		\overset{(a)}{\leq}& \eta E L^2 \left\{\frac{4\eta^2E(M-1)\sigma^2}{K}\sum_{t=0}^{T-1}t+\frac{4\eta^2EL^2}{K}\sum_{t=0}^{T-1}t\sum_{\tau=0}^{T-1}\sum_{\mathcal{K}}\sum_{e=0}^{E-1}\mathbb{E}\left[\left \| \bm{z}_{m,k}^{r,\tau,e}-\bar{\bm{z}}_{m}^{r,\tau}\right \|_2^2\right]\right.\\\nonumber
		\phantom{=\;\;}& + \left. \frac{4\eta^2E^2L^2}{K}\sum_{t=0}^{T-1}t\sum_{\tau=0}^{T-1}\sum_{\mathcal{M}}\mathbb{E}\left[\left \| \bar{\bm{z}}^{r,\tau}-\bar{\bm{z}}_{m}^{r,\tau}\right \|_2^2\right]+ 4\eta^2E^2\sum_{t=0}^{T-1}t\sum_{\tau=0}^{T-1}\left[\alpha \left  \| \nabla F(\bar{\bm{z}}^{r,\tau}) \right \|_2^2 + \beta \right]\right\}&\\\nonumber
		\overset{(b)}{\leq}& \frac{\eta L^2(M-1)\sigma^2D_T}{K(1-D_T)}+\frac{\eta L^2D_T}{K(1-D_T)}\sum_{\tau=0}^{T-1}\sum_{\mathcal{K}}\sum_{e=0}^{E-1}\mathbb{E}\left[\left \| \bm{z}_{m,k}^{r,\tau,e}-\bar{\bm{z}}_{m}^{r,\tau}\right \|_2^2\right] +\frac{\eta E D_T}{1-D_T}\sum_{\tau=0}^{T-1}\left[\alpha \left  \| \nabla F(\bar{\bm{z}}^{r,\tau}) \right \|_2^2 + \beta \right]\\
		= & \frac{\eta L^2(M-1)\sigma^2D_T}{K(1-D_T)}+\frac{D_T}{(1-D_T)}T_4
		+ \frac{\eta E D_T}{1-D_T}\sum_{\tau=0}^{T-1}\left[\alpha \left  \| \nabla F(\bar{\bm{z}}^{r,\tau}) \right \|_2^2 + \beta \right],
	\end{flalign}
	\vspace*{8pt} % 调整线与公式之间的距离
	\hrulefill % 添加一条水平线
	\vspace{-0.5cm}
\end{figure*}

\appendix
\setlength{\jot}{-0.1em}
\allowdisplaybreaks
\section{Proof of Theorem \ref{thm: theorem 1}}\label{sec: proof of theorem 1} 
\subsection{Preliminaries}
Firstly, we define the following notations for abbreviation.
\begin{itemize}
	\item We use $\sum_{\mathcal{K}}$ to denote the abbreviation for $\sum_{m=1}^M \sum_{k=1}^{K_0}$.
	\item We use $\sum_{\mathcal{K}_m}$ to denote the abbreviation for $\sum_{k=1}^{K_0}$.
	\item We use $\sum_{\mathcal{M}}$ to denote the abbreviation for $\sum_{m=1}^{M}$.
	\item We use $a\pm b$ to denote the abbreviation for $a+b-b$.
	\item We use $\bm{d}_{m,k}^{r,t}=\frac{1}{E}\sum_{e=0}^{E-1}\nabla F_{m,k}^{r,t,e}(\bm{z}_{m,k}^{r,t,e})$ to represent the average stochastic gradient of each satellite $\mathsf{S}_{m, k}$ between two intra-orbit aggregations, and $\bm{h}_{m,k}^{r,t}=\frac{1}{E}\sum_{e=0}^{E-1}\nabla F_{m,k}(\bm{z}_{m,k}^{r,t,e})$ to represent the average true gradient likewise.
	\item We define $\bar{\bm{d}}^{r,t} = \frac{1}{K}\sum_{\mathcal{K}} \bm{d}_{m,k}^{r,t}$ and $\bar{\bm{h}}^{r,t} = \frac{1}{K}\sum_{\mathcal{K}} \bm{h}_{m,k}^{r,t}$ as the average stochastic and true gradients of all satellites.
	\item We define $\bar{\bm{d}}_m^{r,t} = \frac{1}{K_0}\sum_{\mathcal{K}_m} \bm{d}_{m,k}^{r,t}$ and $\bar{\bm{h}}_m^{r,t} = \frac{1}{K_0}\sum_{\mathcal{K}_m} \bm{h}_{m,k}^{r,t}$ as the average stochastic and true gradients of all satellites in one orbit.
\end{itemize}

Secondly, we have the following Lemmas to assist the proof.
\begin{lem}\label{lem: Jensen}
	Jensen's Inequality. With $w_n\in [0,1], \sum_{n=1}^{N}w_n =1$, it holds that
	\begin{equation*}
		\left\|\sum_{n=1}^{N}w_n\bm x_n\right\|_2^2\leq\sum_{n=1}^{N}w_n\left\|\bm x_n\right\|_2^2.
	\end{equation*}
\end{lem}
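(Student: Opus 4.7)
The plan is to recognize this as the special case of Jensen's inequality applied to the convex function $f(\bm{x}) = \|\bm{x}\|_2^2$ on $\mathbb{R}^d$, so the proof reduces to (i) establishing convexity of $f$ and (ii) iterating the two-point inequality over the $N$ weights. I would first verify two-point convexity: for any $\bm{x},\bm{y}\in\mathbb{R}^d$ and $\lambda\in[0,1]$, expanding both sides gives
$$\lambda\|\bm{x}\|_2^2+(1-\lambda)\|\bm{y}\|_2^2-\|\lambda\bm{x}+(1-\lambda)\bm{y}\|_2^2=\lambda(1-\lambda)\|\bm{x}-\bm{y}\|_2^2\ge 0,$$
which is the desired two-point inequality with the nonnegativity seen explicitly.

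Next I would extend to $N$ points by induction. The base case $N=1$ is an equality. For the inductive step, set $W_{N-1}=\sum_{n=1}^{N-1}w_n=1-w_N$; if $W_{N-1}=0$ the statement is trivial, otherwise define the auxiliary convex combination $\bm{y}=\sum_{n=1}^{N-1}(w_n/W_{N-1})\bm{x}_n$ so that $\sum_{n=1}^{N}w_n\bm{x}_n=W_{N-1}\bm{y}+w_N\bm{x}_N$. Applying the two-point inequality with $\lambda=W_{N-1}$ and then the induction hypothesis to $\bm{y}$ yields
$$\left\|\sum_{n=1}^{N}w_n\bm{x}_n\right\|_2^2\le W_{N-1}\|\bm{y}\|_2^2+w_N\|\bm{x}_N\|_2^2\le W_{N-1}\sum_{n=1}^{N-1}\frac{w_n}{W_{N-1}}\|\bm{x}_n\|_2^2+w_N\|\bm{x}_N\|_2^2=\sum_{n=1}^{N}w_n\|\bm{x}_n\|_2^2,$$
which closes the induction.

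As an alternative single-shot argument, I could instead expand the inner product directly: using $\sum_n w_n=1$,
$$\left\|\sum_{n=1}^{N}w_n\bm{x}_n\right\|_2^2=\sum_{n,m}w_nw_m\langle\bm{x}_n,\bm{x}_m\rangle\le\sum_{n,m}w_nw_m\cdot\tfrac12\bigl(\|\bm{x}_n\|_2^2+\|\bm{x}_m\|_2^2\bigr)=\sum_{n=1}^{N}w_n\|\bm{x}_n\|_2^2,$$
where the middle step is the Cauchy--Schwarz/AM--GM bound $\langle\bm{x}_n,\bm{x}_m\rangle\le\tfrac12(\|\bm{x}_n\|_2^2+\|\bm{x}_m\|_2^2)$. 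There is no real obstacle here; the only choice is stylistic between the inductive convexity argument, which mirrors the standard Jensen proof, and the combinatorial expansion, which produces the result in one display without invoking induction.
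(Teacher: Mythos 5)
Your proof is correct on both routes: the two-point convexity identity $\lambda\|\bm{x}\|_2^2+(1-\lambda)\|\bm{y}\|_2^2-\|\lambda\bm{x}+(1-\lambda)\bm{y}\|_2^2=\lambda(1-\lambda)\|\bm{x}-\bm{y}\|_2^2$ is the right closed form, the induction is handled cleanly (including the degenerate case $W_{N-1}=0$), and the single-shot expansion via $\langle\bm{x}_n,\bm{x}_m\rangle\le\tfrac12(\|\bm{x}_n\|_2^2+\|\bm{x}_m\|_2^2)$ also closes correctly because the weights sum to one. The paper itself states this lemma without any proof, treating it as a standard fact, so there is nothing to compare against; either of your arguments would serve as a complete justification.
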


\begin{lem}\label{lem: lemma1}
	For $w_n\in [0,1], \sum_{n=1}^{N}w_n =1$, we have

	\begin{equation*}
		\sum_{n=1}^{N}w_n\left\|\bm x_n-\bar{\bm{x}}\right\|_2^2=\sum_{n=1}^{N}w_n\left\|\bm x_n\right\|_2^2 - \|\bar{\bm{x}}\|_2^2\leq \sum_{n=1}^{N}w_n\left\|\bm x_n\right\|_2^2.
	\end{equation*}
\end{lem}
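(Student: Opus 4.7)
The plan is to prove this by direct expansion, recognising the statement as the standard weighted bias–variance decomposition. I note that the lemma implicitly takes $\bar{\bm x} := \sum_{n=1}^{N} w_n \bm x_n$ (the weighted mean associated with the weights $w_n$), since otherwise the equality does not hold; I would state this convention at the outset of the proof.

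First, I would expand the squared norm inside the sum using the polarization identity $\|\bm a - \bm b\|_2^2 = \|\bm a\|_2^2 - 2\langle \bm a, \bm b\rangle + \|\bm b\|_2^2$, applied with $\bm a = \bm x_n$ and $\bm b = \bar{\bm x}$. This gives
\begin{equation*}
\sum_{n=1}^{N} w_n \|\bm x_n - \bar{\bm x}\|_2^2 = \sum_{n=1}^{N} w_n \|\bm x_n\|_2^2 - 2\Big\langle \sum_{n=1}^{N} w_n \bm x_n,\; \bar{\bm x}\Big\rangle + \|\bar{\bm x}\|_2^2 \sum_{n=1}^{N} w_n,
\end{equation*}
where I have pulled $\bar{\bm x}$ out of the sum in the cross-term by linearity of the inner product, and used that $\|\bar{\bm x}\|_2^2$ does not depend on $n$ in the last term.

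Next, I would substitute the two defining properties of the weights: $\sum_n w_n = 1$ (the hypothesis of the lemma) and $\sum_n w_n \bm x_n = \bar{\bm x}$ (the definition of the weighted mean). The cross-term collapses to $-2\|\bar{\bm x}\|_2^2$ and the final term becomes $\|\bar{\bm x}\|_2^2$, yielding the claimed equality $\sum_n w_n \|\bm x_n - \bar{\bm x}\|_2^2 = \sum_n w_n \|\bm x_n\|_2^2 - \|\bar{\bm x}\|_2^2$. The inequality then follows immediately from the non-negativity $\|\bar{\bm x}\|_2^2 \geq 0$.

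There is essentially no obstacle here; the only point requiring care is to make explicit that $\bar{\bm x}$ denotes the weighted average (so that the cross-term simplifies correctly), since the lemma statement leaves this implicit. The argument is elementary and is included only because Lemma 1 (Jensen) and this identity are the two algebraic workhorses invoked repeatedly in bounding the consensus-error terms $T_{31}$, $T_{32}$, etc., in the main convergence proof.
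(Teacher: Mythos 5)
Your proof is correct and is the standard expansion the paper implicitly relies on (the paper states this lemma without proof, treating it as the elementary weighted variance decomposition). Your observation that $\bar{\bm x}$ must be understood as the weighted mean $\sum_{n=1}^{N} w_n \bm x_n$ for the equality to hold is a worthwhile clarification of a convention the paper leaves implicit.
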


\begin{lem}\label{lem: lemma2}
	Suppose  $ \left\{A_{k}\right\}_{k = 1}^{T} $  is a sequence of random matrices and  $ \mathbb{E}\left[A_{k} \mid A_{k-1}, A_{k-2}, \ldots, A_{1}\right] = \mathbf{0}, \forall k $. Then,
%	\vspace{-0.3cm}
	\begin{align*}
		\mathbb{E}\left[\left\|\sum_{k  = 1}^{T} A_{k}\right\|_{F}^{2}\right]  = \sum_{k = 1}^{T} \mathbb{E}\left[\left\|A_{k}\right\|_{F}^{2}\right].
	\end{align*}
\end{lem}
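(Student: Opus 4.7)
The plan is to expand the squared Frobenius norm of the sum into diagonal and off-diagonal pieces, then show that every cross term has zero expectation thanks to the martingale-difference hypothesis. Using the Frobenius inner product $\langle X, Y\rangle_F := \operatorname{tr}(X^\top Y)$, the identity $\|X\|_F^2 = \langle X, X\rangle_F$ together with bilinearity gives
\begin{equation*}
\Big\|\sum_{k=1}^{T} A_k\Big\|_F^2 = \sum_{k=1}^{T}\|A_k\|_F^2 + 2\sum_{1\le j<k\le T}\langle A_j, A_k\rangle_F.
\end{equation*}
Taking expectation commutes with the finite sums, so it is enough to establish that $\mathbb{E}[\langle A_j, A_k\rangle_F] = 0$ whenever $j<k$.

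For fixed $j<k$, I would invoke the tower property with respect to $\mathcal{F}_{k-1} := \sigma(A_1,\ldots,A_{k-1})$. Because $A_j$ is $\mathcal{F}_{k-1}$-measurable when $j<k$ and the Frobenius inner product is linear in each argument, $A_j$ can be pulled outside the inner conditional expectation:
\begin{equation*}
\mathbb{E}\big[\langle A_j, A_k\rangle_F\big] = \mathbb{E}\Big[\big\langle A_j,\ \mathbb{E}[A_k\mid \mathcal{F}_{k-1}]\big\rangle_F\Big] = \mathbb{E}\big[\langle A_j, \mathbf{0}\rangle_F\big] = 0,
\end{equation*}
where the penultimate equality uses the hypothesis $\mathbb{E}[A_k\mid A_{k-1},\ldots,A_1]=\mathbf{0}$. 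Summing over all ordered pairs $j<k$ kills the off-diagonal contribution and delivers the stated identity.

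The only subtlety is the standard measurability-and-integrability bookkeeping: one should assume $\mathbb{E}[\|A_k\|_F^2]<\infty$ for each $k$ so that all inner products are integrable and Fubini applies unambiguously, which is implicit in the way the lemma is used in the paper. No induction on $T$ is strictly necessary, but an equivalent one-line inductive proof peels off $A_T$, writes $\|S_{T-1}+A_T\|_F^2 = \|S_{T-1}\|_F^2 + 2\langle S_{T-1}, A_T\rangle_F + \|A_T\|_F^2$ with $S_{T-1} := \sum_{k<T} A_k$, takes conditional expectation given $\mathcal{F}_{T-1}$ to kill the cross term, and iterates. I do not anticipate any genuine obstacle; the lemma is the matrix analogue of the classical scalar martingale-difference orthogonality $\mathbb{E}[(\sum_k X_k)^2] = \sum_k \mathbb{E}[X_k^2]$, and the proof structure is identical modulo writing the norm via the trace form of the inner product.
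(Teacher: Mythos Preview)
Your argument is correct: expanding the Frobenius norm via the trace inner product, then killing each cross term $\mathbb{E}[\langle A_j,A_k\rangle_F]$ for $j<k$ by conditioning on $\mathcal{F}_{k-1}$ and invoking the martingale-difference hypothesis, is exactly the right mechanism, and your remark about needing $\mathbb{E}[\|A_k\|_F^2]<\infty$ for integrability is the only caveat worth recording. The paper itself does not supply a proof of this lemma; it is listed among the preliminary facts and treated as a standard result, so there is no alternative argument to compare against.
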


\subsection{Proof of Theorem}
Based on the intra-orbit updating rules, obviously, we have
\vspace{-0.3cm}
\begin{align}\nonumber\label{eq: expectation1}
	&\mathbb{E}\left[F(\bar{\bm{z}}^{r,t+1}) - F(\bar{\bm{z}}^{r,t})\right]\\\nonumber
	\leq& -\eta E\mathbb{E} \left[\left <\nabla  F(\bar{\bm{z}}^{r,t}),\bar{\bm{d}}^{r,t} \right>\right]
	+\frac{\eta^2E^2L}{2}\mathbb{E}\left[\left \|\bar{\bm{d}}^{r,t} \right \|_2^2\right]\\\nonumber
	=&-\eta E\mathbb{E} \left[\left <\nabla  F(\bar{\bm{z}}^{r,t}),\bar{\bm{h}}^{r,t} \right>\right]
	+\frac{\eta^2E^2L}{2}\mathbb{E}\left[\left \|\bar{\bm{d}}^{r,t} \right \|_2^2\right]\\\nonumber
	\overset{(a)}{=}&-\frac{\eta E }{2}\left \|\nabla  F(\bar{\bm{z}}^{r,t})\right \|_2^2 -\frac{\eta E }{2}\mathbb{E}\left[\left \| \bar{\bm{h}}^{r,t} \right \|_2^2\right]\\
	+& \underbrace{\frac{\eta E }{2}\mathbb{E}\left[\left \|\nabla  F(\bar{\bm{z}}^{r,t})-\bar{\bm{h}}^{r,t}  \right \|_2^2\right] }_{T_1}
	+\underbrace{\frac{\eta^2E^2L}{2}\mathbb{E}\left[\left \|\bar{\bm{d}}^{r,t} \right \|_2^2\right]}_{T_2},
\end{align}
where the expectation is taken over local mini-batch sampling, $ (a) $ holds by the fact that $2\left<\bm a,\bm b\right> = \|\bm a\|_2^2+\|\bm b\|_2^2-\|\bm a-\bm b\|_2^2$.
To give an upper bound of $T_1$, we have
\begin{align}\nonumber\label{eq: T1}
	T_1
	= &\frac{\eta E }{2}\mathbb{E}\Big[\Big \|\nabla  F(\bar{\bm{z}}^{r,t})-\bar{\bm{h}}^{r,t}\pm \frac{1}{M} \sum_{\mathcal{M}}\nabla  F_m(\bar{\bm{z}}_m^{r,t}) \Big \|_2^2\Big] \\\nonumber
	\overset{(a)}{\leq }&\eta E\mathbb{E}\Big[\Big \|\nabla  F(\bar{\bm{z}}^{r,t})- \frac{1}{M} \sum_{\mathcal{M}}\nabla  F_m(\bar{\bm{z}}_m^{r,t}) \Big \|_2^2\Big] \\\nonumber
	+ &\eta E \mathbb{E}\Big[\Big \|\bar{\bm{h}}^{r,t}- \frac{1}{M} \sum_{\mathcal{M}}\nabla  F_m(\bar{\bm{z}}_m^{r,t}) \Big \|_2^2\Big]\\\nonumber
	\overset{(b)}{\leq }& \frac{\eta E}{M} \sum_{\mathcal{M}}\mathbb{E}\Big[\Big\|\nabla  F(\bar{\bm{z}}^{r,t})- \nabla  F_m(\bar{\bm{z}}_m^{r,t}) \Big \|_2^2\Big] \\\nonumber
	+ &\frac{\eta E}{K} \sum_{\mathcal{K}} \mathbb{E}\Big[\Big \|\frac{1}{E} \sum_{e=0}^{E-1}\left[\nabla F_{m,k}^{r,t,e}(\bm{z}_{m,k}^{r,t,e})- \nabla  F_{m,k}(\bar{\bm{z}}_m^{r,t})\right] \Big \|_2^2\Big]\\\nonumber
	\overset{(c)}{\leq }& \frac{\eta EL^2}{M} \sum_{\mathcal{M}}\mathbb{E}\left[\left \|\bar{\bm{z}}^{r,t}- \bar{\bm{z}}_m^{r,t} \right \|_2^2\right] \\
	+ &\frac{\eta L^2}{K} \sum_{\mathcal{K}}\sum_{e=0}^{E-1} \mathbb{E}\left[\left \|\bm{z}_{m,k}^{r,t,e}- \bar{\bm{z}}_m^{r,t}\right \|_2^2\right],
\end{align}
where $ (a) $ holds due to the fact that $\|\bm a+ \bm b\|_2^2\leq2\|\bm a\|_2^2+2\|\bm b\|_2^2$, $ (b) $ and $ (c) $ hold by Lemma \ref{lem: Jensen} and Assumption \ref{ass: smooth}.
To give an upper bound of $T_2$, according to \cite[Lemma 2 of Appendix]{zhu2024over}, we have
\begin{align}\label{eq: T2}
	T_{2}\leq \frac{\eta^2EL\sigma^2}{K}+\eta^2E^2L \mathbb{E}\left[\left \|\bar{\bm{h}}^{r,t} \right \|_2^2\right].
\end{align}
By combining the results obtained in \eqref{eq: expectation1}, \eqref{eq: T1}, and \eqref{eq: T2}, we have
\begin{align}\nonumber\label{eq: expectation2}
	&\mathbb{E}\left[F(\bar{\bm{z}}^{r,t+1}) - F(\bar{\bm{z}}^{r,t})\right]\\\nonumber
	\overset{(a)}{\leq}&-\frac{\eta E }{2}\left \|\nabla  F(\bar{\bm{z}}^{r,t})\right \|_2^2 +\frac{\eta EL^2}{M} \sum_{\mathcal{M}}\mathbb{E}\left[\left \|\bar{\bm{z}}^{r,t}- \bar{\bm{z}}_m^{r,t} \right \|_2^2\right] \\
	& +\frac{\eta L^2}{K} \sum_{\mathcal{K}}\sum_{e=0}^{E-1} \mathbb{E}\left[\left \|\bm{z}_{m,k}^{r,t,e}- \bar{\bm{z}}_m^{r,t}\right \|_2^2\right]+\frac{\eta^2EL\sigma^2}{K},
\end{align}
where $ (a) $ holds by $\eta\leq \frac{1}{2EL}$.
By summing \eqref{eq: expectation2} from $t=0$ to $T-1$, we have
%\vspace{-0.3cm}
\begin{align}\nonumber\label{eq: expectation3}
	&\mathbb{E}\left[F(\bar{\bm{z}}^{r+1}) - F(\bar{\bm{z}}^{r})\right]&\\\nonumber
	\leq&-\frac{\eta E }{2}\sum_{t=0}^{T-1}\left \|\nabla  F(\bar{\bm{z}}^{r,t})\right \|_2^2 
	+\underbrace{\frac{\eta EL^2}{M} \sum_{t=0}^{T-1}\sum_{\mathcal{M}}\mathbb{E}\left[\left \|\bar{\bm{z}}^{r,t}- \bar{\bm{z}}_m^{r,t} \right \|_2^2\right]}_{T_3}& \\
	+& \underbrace{\frac{\eta L^2}{K}\sum_{t=0}^{T-1} \sum_{\mathcal{K}}\sum_{e=0}^{E-1} \mathbb{E}\left[\left \|\bm{z}_{m,k}^{r,t,e}- \bar{\bm{z}}_m^{r,t}\right \|_2^2\right]}_{T_4}
	+\frac{\eta^2EL\sigma^2T}{K}.&
\end{align}

To further give an upper bound of $T_3$, we first have Eq. \eqref{eq: T3}, 	where $ (a) $, $ (b) $ hold due to $\|\bm a+ \bm b\|_2^2\leq2\|\bm a\|_2^2+2\|\bm b\|_2^2$.

In addition, for $T_{31}$ in \eqref{eq: T3}, we have Eq. \eqref{eq: T31},
where $ (a), (b), (c) $ hold by Lemma \ref{lem: lemma1}, \ref{lem: lemma2} and Assumption \ref{ass: gradient variance} respectively.

For $T_{32}$ in \eqref{eq: T3}, we have Eq. \eqref{eq: T32},
where $ (a) $ is due to Lemmas \ref{lem: Jensen} and \ref{lem: lemma1}, $ (b)  $ and $ (c) $ hold by Lemma \ref{lem: Jensen} and Assumption \ref{ass: smooth}.

For $T_{33}$ in \eqref{eq: T3}, we have Eq. \eqref{eq: T33},
\begin{flalign}\nonumber\label{eq: T33}
	T_{33}
	\overset{(a)}{\leq} & \frac{2\eta^2E^2}{M}\sum_{\mathcal{M}}\mathbb{E}\Big[\Big \| \sum_{\tau=0}^{t-1}\left[\nabla F_m(\bar{\bm{z}}_m^{r,\tau})\pm \nabla F_m(\bar{\bm{z}}^{r,\tau})\right]  \Big \|_2^2\Big]&\\\nonumber
	\overset{(b)}{\leq}  & \frac{4\eta^2E^2t}{M}\sum_{\tau=0}^{t-1}\sum_{\mathcal{M}}\mathbb{E}\left[\left \| \nabla F_m(\bar{\bm{z}}_m^{r,\tau})- \nabla F_m(\bar{\bm{z}}^{r,\tau}) \right \|_2^2\right] \\\nonumber
	&+ \frac{4\eta^2E^2t}{M}\sum_{\tau=0}^{t-1}\sum_{\mathcal{M}}\mathbb{E}\left[\left \| \nabla F_m(\bar{\bm{z}}^{r,\tau}) \right \|_2^2\right]\\\nonumber
	\overset{(c)}{\leq}& \frac{4\eta^2E^2L^2t}{K}\sum_{\tau=0}^{t-1}\sum_{\mathcal{M}}\mathbb{E}\left[\left \| \bar{\bm{z}}^{r,\tau}-\bar{\bm{z}}_{m}^{r,\tau}\right \|_2^2\right] \\
	&+ 4\eta^2E^2t\sum_{\tau=0}^{t-1}\left[\alpha \left  \| \nabla F(\bar{\bm{z}}^{r,\tau}) \right \|_2^2 + \beta \right],
\end{flalign}
where $ (a) $ holds due to Lemma \ref{lem: lemma1}, $ (b) $ holds by Lemma \ref{lem: Jensen}, and $ (c) $ is due to Assumptions \ref{ass: smooth} and \ref{ass: inter-orbit dissimilarity}.
Based on the results in \eqref{eq: T31}, \eqref{eq: T32}, and \eqref{eq: T33}, we further give an upper bound of $T_3$ in Eq. \eqref{eq: T3'},
where $(a)$ holds due to the fact that $t\leq T-1$, $ (b) $ holds by setting $D_T = 2\eta^2E^2L^2T(T-1)$, and $\eta<\frac{1}{LE\sqrt{2T(T-1)}}\leq \frac{1}{2EL},T\geq2$.
Next, to give an upper bound of $T_4$, we first have
\begin{align}\nonumber\label{eq: T4}
	%	&\sum_{\mathcal{K}}\sum_{e=0}^{E-1}
	&\mathbb{E}\left[\left \|\bm{z}_{m,k}^{r,t,e}- \bar{\bm{z}}_m^{r,t} \right \|_2^2\right]\\\nonumber
	= & \eta^2 \mathbb{E}\Big[\Big \| \sum_{i = 0}^{e-1} \left[\nabla F_{m,k}^{r,t,i}(\bm{z}_{m,k}^{r,t,i})\pm  \nabla F_{m,k}(\bm{z}_{m,k}^{r,t,i})\right] \Big \|_2^2\Big]\\\nonumber
	\overset{(a)}{\leq} & 2\eta^2 \mathbb{E}\Big[\Big \| \sum_{i = 0}^{e-1} \left[\nabla F_{m,k}^{r,t,i}(\bm{z}_{m,k}^{r,t,i})-  \nabla F_{m,k}(\bm{z}_{m,k}^{r,t,i})\right] \Big \|_2^2\Big]\\\nonumber
	+ &2\eta^2 \mathbb{E}\left[\left \| \sum_{i = 0}^{e-1} \left[ \nabla F_{m,k}(\bm{z}_{m,k}^{r,t,i})\pm\nabla F_{m,k}(\bar{\bm{z}}_{m}^{r,t}) \right] \right \|_2^2\right]\\\nonumber
	\overset{(b)}{\leq} & 2\eta^2\sigma^2e + 4\eta^2L^2e \sum_{i = 0}^{e-1}  \mathbb{E}\left[\left \| \bar{\bm{z}}_{m}^{r,t}- \bm{z}_{m,k}^{r,t,i}\right \|_2^2\right] \\\nonumber
	+&4\eta^2e \sum_{i = 0}^{e-1} \mathbb{E}\left[\left \| \nabla F_{m,k}(\bar{\bm{z}}_{m}^{r,t}) \pm \nabla F_{m}(\bar{\bm{z}}_{m}^{r,t}) \pm \nabla F_{m}(\bar{\bm{z}}^{r,t})\right \|_2^2\right]\\\nonumber
	\overset{(c)}{\leq} & 2\eta^2\sigma^2e + 4\eta^2L^2e \sum_{i = 0}^{e-1}  \mathbb{E}\left[\left \| \bar{\bm{z}}_{m}^{r,t}- \bm{z}_{m,k}^{r,t,i}\right \|_2^2\right] \\\nonumber
	+& 12\eta^2e \sum_{i = 0}^{e-1} \left\{\mathbb{E}\left[\left \| \nabla F_{m,k}(\bar{\bm{z}}_{m}^{r,t}) - \nabla F_{m}(\bar{\bm{z}}_{m}^{r,t}) \right \|_2^2\right]\right.\\
%	\phantom{=\;\;}
	 + &\left.
	\mathbb{E}\left[\left \|\nabla F_{m}(\bar{\bm{z}}_{m}^{r,t}) - \nabla F_{m}(\bar{\bm{z}}^{r,t})\right \|_2^2\right] +\mathbb{E}\left[\left \|\nabla F_{m}(\bar{\bm{z}}^{r,t})\right \|_2^2\right]  \right\},
\end{align}
where $ (a) $, $ (b) $, and $(c)$ hold due to Lemma \ref{lem: Jensen} and Assumption \ref{ass: smooth}.
Based on \eqref{eq: T4}, it yields that
\begin{align}\nonumber\label{eq: T41}
	&\sum_{\mathcal{K}}\sum_{e=0}^{E-1}\mathbb{E}\left[\left \|\bm{z}_{m,k}^{r,t,e}- \bar{\bm{z}}_m^{r,t} \right \|_2^2\right]\\\nonumber
	\overset{(a)}{\leq} & 2\eta^2\sigma^2K\sum_{e=0}^{E-1}e + 4\eta^2L^2\sum_{\mathcal{K}}\sum_{e=0}^{E-1}e \sum_{i = 0}^{E-1}  \mathbb{E}\left[\left \| \bar{\bm{z}}_{m}^{r,t}- \bm{z}_{m,k}^{r,t,i}\right \|_2^2\right]\\\nonumber
	 +& 12\eta^2E\sum_{e=0}^{E-1}e  \left\{\sum_{\mathcal{M}}K_0\delta_m^2\right.\\\nonumber
	+& \left.
	K_0L^2\sum_{\mathcal{M}}\mathbb{E}\left[\left \|\bar{\bm{z}}_{m}^{r,t}- \bar{\bm{z}}^{r,t}\right \|_2^2\right] +K\left[\alpha \left \|\nabla F(\bar{\bm{z}}^{r,t})\right \|_2^2+ \beta \right]  \right\}\\\nonumber
	\overset{(b)}{\leq} & 
	\frac{6\eta^2E^2(E-1)}{1-D_E}\left\{\sum_{\mathcal{M}}K_0\delta_m^2 + K_0L^2\sum_{\mathcal{M}}\mathbb{E}\left[\left \|\bar{\bm{z}}_{m}^{r,t}- \bar{\bm{z}}^{r,t}\right \|_2^2\right]\right.\\
	+&\left. K\left[\alpha \left \|\nabla F(\bar{\bm{z}}^{r,t})\right \|_2^2+ \beta \right] \right\} + \frac{\eta^2\sigma^2E(E-1)K}{1-D_E},
\end{align}
where $(a)$ holds due to the fact that $e\leq E-1$, $ (b) $ holds by setting $D_E = 2\eta^2L^2E(E-1)$, and $\eta<\frac{1}{LE\sqrt{2T(T-1)}}<\frac{1}{L\sqrt{2E(E-1)}}\leq \frac{1}{2EL},T\geq2,E\geq2$.
Based on the results in \eqref{eq: T41}, we have the upper bound of $T_4$, which is
\begin{align}\nonumber\label{eq: T42}
	T_4
	\leq& \frac{\eta\sigma^2TD_E}{2(1-D_E)}+ \frac{3\eta TD_E}{1-D_E}\bar{\delta}^2+ \frac{3D_E}{M(1-D_E)}T_3\\
	&+\frac{3\eta ED_E}{(1-D_E)}\sum_{t=0}^{T-1}\left[\alpha \left \|\nabla F(\bar{\bm{z}}^{r,t})\right \|_2^2+ \beta \right],
\end{align}
where $\bar{\delta}^2 = \frac{1}{M}\sum_{\mathcal{M}}\delta_m^2$.
To summarize the results of \eqref{eq: T3'} and \eqref{eq: T42}, we have
\begin{align*}
	&\left\{	
	\begin{array}{l}
		T_3 \leq A_1 T_4 + B_1\\
		T_4 \leq A_2 T_3 + B_2
	\end{array}
	\right. 
	\Rightarrow 
	\begin{bmatrix}
		1	& -A_1\\
		-A_2	& 1
	\end{bmatrix}
	\begin{bmatrix}
		T_3	\\
		T_4
	\end{bmatrix}\preceq 
	\begin{bmatrix}
		B_1	\\
		B_2
	\end{bmatrix}
	\\
	&\Rightarrow 
	\begin{bmatrix}
		T_3	\\
		T_4
	\end{bmatrix}\preceq \frac{1}{1-A_1A_2}
	\begin{bmatrix}
		1	& A_1\\
		A_2	& 1
	\end{bmatrix}
	\begin{bmatrix}
		B_1	\\
		B_2
	\end{bmatrix},
\end{align*}
where $A_1 = \frac{D_T}{1-D_T}$, $A_2 = \frac{3D_E}{1-D_E}$, $B_1=\frac{\eta L^2(M-1)\sigma^2D_T}{K(1-D_T)}
+ \frac{\eta E D_T}{1-D_T}\sum_{t=0}^{T-1}\left[\alpha \left  \| \nabla F(\bar{\bm{z}}^{r,t}) \right \|_2^2 + \beta \right]$, $B_2 = \frac{\eta\sigma^2TD_E}{2(1-D_E)}+ \frac{3\eta TD_E}{1-D_E}\bar{\delta}^2 +\frac{3\eta ED_E}{1-D_E}\sum_{t=0}^{T-1}\left[\alpha \left \|\nabla F(\bar{\bm{z}}^{r,t})\right \|_2^2+ \beta \right]$,
%\vspace{-0.3cm}
%\begin{align}
%	\begin{bmatrix}
%		A_1	\\
%		A_2
%	\end{bmatrix} 
%	\hspace{-0.3em}
%	=
%	\hspace{-0.3em} 
%	\begin{bmatrix}
%		\frac{D_T}{1-D_T}\\
%		\frac{3D_E}{1-D_E}
%	\end{bmatrix},
%	\begin{bmatrix}
%		B_1	\\
%		B_2
%	\end{bmatrix} 
%	\hspace{-0.3em}
%	=
%	\hspace{-0.3em} 
%	\begin{bmatrix}
%		\frac{\eta L^2(M-1)\sigma^2D_T}{K(1-D_T)}
%		+ \frac{\eta E D_T}{1-D_T}\sum_{t=0}^{T-1}\left[\alpha \left  \| \nabla F(\bar{\bm{z}}^{r,t}) \right \|_2^2 + \beta \right]	\\
%		\frac{\eta\sigma^2TD_E}{2(1-D_E)}+ \frac{3\eta TD_E}{1-D_E}\bar{\delta}^2 +\frac{3\eta ED_E}{1-D_E}\sum_{t=0}^{T-1}\left[\alpha \left \|\nabla F(\bar{\bm{z}}^{r,t})\right \|_2^2+ \beta \right]
%	\end{bmatrix},
%\end{align}
and $A_1A_2\neq1$.
In fact, what we need is actually $T_3+T_4$, we have
\begin{align}
	T_3 + T_4 \leq \frac{1+A_2}{1-A_1A_2}B_1 + \frac{1+A_1}{1-A_1A_2} B_2,
\end{align}
for $E\geq1,T\geq1$. This is because $B_1 = B_2 = 0$ when $E = 1, T=1$.
Let $A_1\leq \frac{1}{3}$ and $A_2\leq \frac{3}{2}$, one can obtain that
\begin{align}\nonumber\label{eq: T3+T4}
	&T_3 + T_4 \\\nonumber
	\leq& \frac{1+A_2}{1-A_1A_2}B_1 + \frac{1+A_1}{1-A_1A_2} B_2\leq 5B_1 + \frac{8}{3}B_2\\\nonumber
	\leq&\frac{20}{3K}\eta L^2(M-1)\sigma^2D_T + \frac{20}{3}\eta E\alpha D_T \sum_{t=0}^{T-1} \left  \| \nabla F(\bar{\bm{z}}^{r,t}) \right \|_2^2 \\\nonumber
	+ & \frac{20}{3}\eta ET\beta D_T + 2\eta T \sigma^2D_E + 12\eta T \bar{\delta}^2D_E\\
	+&12\eta E\alpha D_E\sum_{t=0}^{T-1} \left  \| \nabla F(\bar{\bm{z}}^{r,t}) \right \|_2^2 + 12\eta ET\beta D_E,
\end{align}
with $\eta\leq\frac{1}{2EL\sqrt{2T(T-1)}}$, $\frac{1}{1-D_T}\leq\frac{4}{3}$, and $\frac{1}{1-D_E}\leq \frac{3}{2}$.
Substituting \eqref{eq: T3+T4} into \eqref{eq: expectation3}, we have
\begin{align}\nonumber\label{eq: expectation4}
	&\mathbb{E}\left[F(\bar{\bm{z}}^{r+1}) - F(\bar{\bm{z}}^{r})\right]\\\nonumber
	\leq&-\eta E\left[\frac{ 1}{2}-\alpha \left(\frac{20}{3}D_T+12D_E\right)\right]\sum_{t=0}^{T-1}\left \|\nabla  F(\bar{\bm{z}}^{r,t})\right \|_2^2\\\nonumber
	 + &\frac{20}{3K}\eta L^2(M-1)\sigma^2D_T + \frac{\eta^2EL\sigma^2T}{K}\\\nonumber
	+& 20\eta ET\beta D_T/3 + 2\eta T \sigma^2D_E + 12\eta T \bar{\delta}^2D_E + 12\eta ET\beta D_E\\\nonumber
	\overset{(a)}{\leq}&-\frac{\eta E}{4}\sum_{t=0}^{T-1}\left \|\nabla  F(\bar{\bm{z}}^{r,t})\right \|_2^2 + \frac{20}{3K}\eta L^2(M-1)\sigma^2D_T\\
	 +& 2\eta T D_E\left(\sigma^2 + 6 \bar{\delta}^2\right) + \frac{4}{3}\eta ET\beta (5D_T + 9D_E) +\frac{\eta^2EL\sigma^2T}{K},
\end{align}
where $ (a) $ holds by setting $ \frac{1}{2}-\alpha \left(\frac{20}{3}D_T+12D_E\right) \geq \frac{1}{4}$, which makes the learning rate $\eta$ satisfy 
$\eta \leq \min\left\{\frac{1}{8EL\sqrt{3\alpha T(T-1)}},\frac{1}{8L\sqrt{5/3\alpha E(E-1)}}\right\}.$

Finally, by taking averaging for \eqref{eq: expectation4} from $t=0$ to $T-1$ and $r = 0$ to $R-1$, we have the final results presented in the theorem.
%\begin{align}\nonumber
%	&\frac{1}{RT}\sum\limits_{r = 0}^{R-1}\sum\limits_{t = 0}^{T-1} \mathbb{E} \left[\left \|\nabla F(\bar{\bm z}^{r,t})\right \|_2^2\right]\\\nonumber
%	\leq & \frac{4\left[F(\bm{z}^0) - F^{\inf}\right] }{\eta ETR} 
%	+ \frac{160 }{3K}\eta^3L^3E(M-1)(T-1)\sigma^2\\\nonumber
%	&+ 16\eta^2L^2(E-1)(\sigma^2+6\bar{\delta}^2)\\
%	& + \frac{32}{3}\eta^2L^2E\beta\left[5ET(T-1)+9(E-1)\right]
%	+  \frac{ 4}{K}\eta L \sigma^2.
%\end{align}
%Here, we summarize the constraints on the learning rate, which is Eq. \eqref{eqn:stepsize}.
%\begin{align}
%	0<\eta\leq \begin{cases}
%		\frac{1}{2LE}&E = 1,T=1\\
%		\frac{1}{8L\sqrt{5/3\alpha E(E-1)}} & E\geq 2, T = 1\\
%		\frac{1}{8EL\sqrt{3\alpha T(T-1)}} & E\geq1, T\geq2
%	\end{cases}.
%\end{align}
        
	\apptocmd{\thebibliography}{\setlength{\itemsep}{-1pt}}{}{}	
	\bibliographystyle{IEEEtran}
	\bibliography{refs}

% Generated by IEEEtran.bst, version: 1.14 (2015/08/26)
\begin{thebibliography}{10}
\providecommand{\url}[1]{#1}
\csname url@samestyle\endcsname
\providecommand{\newblock}{\relax}
\providecommand{\bibinfo}[2]{#2}
\providecommand{\BIBentrySTDinterwordspacing}{\spaceskip=0pt\relax}
\providecommand{\BIBentryALTinterwordstretchfactor}{4}
\providecommand{\BIBentryALTinterwordspacing}{\spaceskip=\fontdimen2\font plus
\BIBentryALTinterwordstretchfactor\fontdimen3\font minus
  \fontdimen4\font\relax}
\providecommand{\BIBforeignlanguage}[2]{{%
\expandafter\ifx\csname l@#1\endcsname\relax
\typeout{** WARNING: IEEEtran.bst: No hyphenation pattern has been}%
\typeout{** loaded for the language `#1'. Using the pattern for}%
\typeout{** the default language instead.}%
\else
\language=\csname l@#1\endcsname
\fi
#2}}
\providecommand{\BIBdecl}{\relax}
\BIBdecl

\bibitem{shafi20175g}
M.~Shafi, A.~F. Molisch, P.~J. Smith, T.~Haustein, P.~Zhu, P.~De~Silva,
  F.~Tufvesson, A.~Benjebbour, and G.~Wunder, ``5{G}: A tutorial overview of
  standards, trials, challenges, deployment, and practice,'' \emph{IEEE J. Sel.
  Areas Commun.}, vol.~35, no.~6, pp. 1201--1221, Jun. 2017.

\bibitem{shi2023task}
Y.~Shi, Y.~Zhou, D.~Wen, Y.~Wu, C.~Jiang, and K.~B. Letaief, ``Task-oriented
  communications for 6{G}: Vision, principles, and technologies,'' \emph{IEEE
  Wireless Commun.}, vol.~30, no.~3, pp. 78--85, Jun. 2023.

\bibitem{EdgeAI6G}
K.~B. Letaief, Y.~Shi, J.~Lu, and J.~Lu, ``Edge artificial intelligence for
  6{G}: Vision, enabling technologies, and applications,'' \emph{{IEEE} J. Sel.
  Areas Commun.}, vol.~40, no.~1, pp. 5--36, Jan. 2022.

\bibitem{giordani2020non}
M.~Giordani and M.~Zorzi, ``Non-terrestrial networks in the 6{G} era:
  Challenges and opportunities,'' \emph{IEEE Netw.}, vol.~35, no.~2, pp.
  244--251, Mar. 2020.

\bibitem{bommasani2021opportunities}
R.~Bommasani, D.~A. Hudson, E.~Adeli, R.~Altman, S.~Arora, S.~von Arx, M.~S.
  Bernstein, J.~Bohg, A.~Bosselut, E.~Brunskill \emph{et~al.}, ``On the
  opportunities and risks of foundation models,'' \emph{arXiv preprint
  arXiv:2108.07258}, 2021.

\bibitem{starlink2023}
\BIBentryALTinterwordspacing
Starlink, 2017. [Online]. Available: \url{https://www.starlink.com}
\BIBentrySTDinterwordspacing

\bibitem{kuiper2023}
\BIBentryALTinterwordspacing
Kuiper, 2019. [Online]. Available:
  \url{https://www.aboutamazon.com/what-we-do/devices-services/project-kuiper}
\BIBentrySTDinterwordspacing

\bibitem{spacequip2023}
\BIBentryALTinterwordspacing
``China commences construction of ultra-low orbit satellite constellation,''
  SpaceQuip Journal, Jul. 2023. [Online]. Available:
  \url{https://www.spacequip.eu/2023/07/30/china-commences-construction-of-ultra-low-orbit-satellite-constellation/}
\BIBentrySTDinterwordspacing

\bibitem{xie2021leo}
H.~Xie, Y.~Zhan, G.~Zeng, and X.~Pan, ``{LEO} mega-constellations for 6{G}
  global coverage: Challenges and opportunities,'' \emph{IEEE Access}, vol.~9,
  pp. 164\,223--164\,244, 2021.

\bibitem{geist2019spacecube}
A.~Geist, C.~Brewer, M.~Davis, N.~Franconi, S.~Heyward, T.~Wise, G.~Crum,
  D.~Petrick, R.~Ripley, C.~Wilson \emph{et~al.}, ``{SpaceCube} v3. 0 {NASA}
  next-generation high-performance processor for science applications,'' in
  \emph{Proc. AIAA/USU Conf. on Small Satellites (SSC)}, 2019.

\bibitem{kosmidis2020gpu4s}
L.~Kosmidis, I.~Rodriguez, {\'A}.~Jover, S.~Alcaide, J.~Lachaize, J.~Abella,
  O.~Notebaert, F.~J. Cazorla, and D.~Steenari, ``{GPU4S}: Embedded {GPUs} in
  space-latest project updates,'' \emph{Microprocessors Microsyst.}, vol.~77,
  p. 103143, 2020.

\bibitem{bhattacherjee2020orbit}
D.~Bhattacherjee, S.~Kassing, M.~Licciardello, and A.~Singla, ``In-orbit
  computing: An outlandish thought experiment?'' in \emph{Proc. ACM Workshop
  Hot Topics Netw.}, 2020, pp. 197--204.

\bibitem{izzo2022selected}
D.~Izzo, G.~Meoni, P.~G{\'o}mez, D.~Dold, and A.~Zoechbauer, ``Selected trends
  in artificial intelligence for space applications,'' \emph{arXiv preprint
  arXiv:2212.06662}, 2022.

\bibitem{zhang2022progress}
B.~Zhang, Y.~Wu, B.~Zhao, J.~Chanussot, D.~Hong, J.~Yao, and L.~Gao, ``Progress
  and challenges in intelligent remote sensing satellite systems,'' \emph{IEEE
  J. Sel. Topics Appl. Earth Observ. Remote Sens.}, vol.~15, pp. 1814--1822,
  2022.

\bibitem{guo2016big}
H.~Guo, L.~Wang, and D.~Liang, ``Big earth data from space: A new engine for
  earth science,'' \emph{Sci. Bull.}, vol.~61, no.~7, pp. 505--513, 2016.

\bibitem{tong2020land}
X.-Y. Tong, G.-S. Xia, Q.~Lu, H.~Shen, S.~Li, S.~You, and L.~Zhang,
  ``Land-cover classification with high-resolution remote sensing images using
  transferable deep models,'' \emph{Remote Sens. of Env.}, vol. 237, p. 111322,
  Elsevier, 2020.

\bibitem{li2020accurate}
Y.~Li, W.~Chen, Y.~Zhang, C.~Tao, R.~Xiao, and Y.~Tan, ``Accurate cloud
  detection in high-resolution remote sensing imagery by weakly supervised deep
  learning,'' \emph{Remote Sens. Environ.}, vol. 250, p. 112045, 2020.

\bibitem{chen2019machine}
H.~Chen, V.~Chandrasekar, R.~Cifelli, and P.~Xie, ``A machine learning system
  for precipitation estimation using satellite and ground radar network
  observations,'' \emph{IEEE Trans. Geosci. Remote Sens.}, vol.~58, no.~2, pp.
  982--994, Feb. 2020.

\bibitem{diana2021oil}
L.~Diana, J.~Xu, and L.~Fanucci, ``Oil spill identification from sar images for
  low power embedded systems using {CNN},'' \emph{Remote Sens.}, vol.~13,
  no.~18, p. 3606, 2021.

\bibitem{hawkins2022creating}
R.~Hawkins, C.~Picardi, L.~Donnell, and M.~Ireland, ``Creating a safety
  assurance case for an ml satellite-based wildfire detection and alert
  system,'' \emph{arXiv preprint arXiv:2211.04530}, 2022.

\bibitem{Li2021}
A.~Li, P.~Huang, L.~Shi, G.~He, X.~Feng, Y.~Wu, J.~Zhang, G.~Ma, K.~Feng,
  J.~Yang, and J.~Li, ``Development of satellite ground system in chinese
  remote sensing satellite ground stations,'' \emph{Nat. Remote Sens. Bull.},
  vol.~25, no.~01, pp. 251--266, Jan. 2021.

\bibitem{maniadaki2021reconciling}
M.~Maniadaki, A.~Papathanasopoulos, L.~Mitrou, and E.-A. Maria, ``Reconciling
  remote sensing technologies with personal data and privacy protection in the
  european union: Recent developments in greek legislation and application
  perspectives in environmental law,'' \emph{Laws}, vol.~10, no.~2, p.~33,
  2021.

\bibitem{coffer2020balancing}
M.~M. Coffer, ``Balancing privacy rights and the production of high-quality
  satellite imagery,'' \emph{Environ. Sci. Technol.}, vol.~54, no.~11, pp.
  6453--6455, 2020.

\bibitem{ruuvzivcka2023fast}
V.~R{\.{u}}{\v{z}}i{\v{c}}ka, G.~Mateo-Garc{\'\i}a, C.~Bridges, C.~Brunskill,
  C.~Purcell, N.~Long{\'e}p{\'e}, and A.~Markham, ``Fast model inference and
  training on-board of satellites,'' in \emph{Proc. IEEE Int. Geosci. and
  Remote Sens. Symp. (IGARSS)}, 2023, pp. 2002--2005.

\bibitem{tak2020federated}
A.~Tak and S.~Cherkaoui, ``Federated edge learning: Design issues and
  challenges,'' \emph{IEEE Netw.}, vol.~35, no.~2, pp. 252--258, Mar. 2020.

\bibitem{yang2020federated}
K.~Yang, T.~Jiang, Y.~Shi, and Z.~Ding, ``Federated learning via over-the-air
  computation,'' \emph{IEEE Trans. Wireless Commun.}, vol.~19, no.~3, pp.
  2022--2035, 2020.

\bibitem{wang2021federated}
Z.~Wang, J.~Qiu, Y.~Zhou, Y.~Shi, L.~Fu, W.~Chen, and K.~B. Letaief,
  ``Federated learning via intelligent reflecting surface,'' \emph{IEEE Trans.
  Wireless Commun.}, vol.~21, no.~2, pp. 808--822, 2021.

\bibitem{shi2023vertical}
Y.~Shi, S.~Xia, Y.~Zhou, Y.~Mao, C.~Jiang, and M.~Tao, ``Vertical federated
  learning over {Cloud-RAN}: Convergence analysis and system optimization,''
  \emph{IEEE Trans. Wireless Commun.}, Jun. 2023, doi:
  10.1109/TWC.2023.3288122.

\bibitem{matthiesen2023federated}
B.~Matthiesen, N.~Razmi, I.~Leyva-Mayorga, A.~Dekorsy, and P.~Popovski,
  ``Federated learning in satellite constellations,'' \emph{IEEE Netw.}, May
  2023, Early Access, doi:
  \href{https://doi.org/10.1109/MNET.132.2200504}{10.1109/MNET.132.2200504}.

\bibitem{elmahallawy2022asyncfleo}
M.~Elmahallawy and T.~Luo, ``{AsyncFLEO}: Asynchronous federated learning for
  {LEO} satellite constellations with high-altitude platforms,'' in \emph{Proc.
  IEEE Int.Conf. Big Data (Big Data)}, 2022, pp. 5478--5487.

\bibitem{so2022fedspace}
J.~So, K.~Hsieh, B.~Arzani, S.~Noghabi, S.~Avestimehr, and R.~Chandra,
  ``Fedspace: An efficient federated learning framework at satellites and
  ground stations,'' \emph{arXiv preprint arXiv:2202.01267}, 2022.

\bibitem{razmi2022ground}
{N. Razmi, B. Matthiesen, A. Dekorsy, and P. Popovski}, ``Ground-assisted
  federated learning in {LEO} satellite constellations,'' \emph{IEEE Wireless
  Commun. Lett.}, vol.~11, no.~4, pp. 717--721, Apr. 2022.

\bibitem{lin2023fedsn}
Z.~Lin, Z.~Chen, Z.~Fang, X.~Chen, X.~Wang, and Y.~Gao, ``{FedSN}: A general
  federated learning framework over {LEO} satellite networks,'' \emph{arXiv
  preprint arXiv:2311.01483}, 2023.

\bibitem{razmi2022scheduling}
N.~Razmi, B.~Matthiesen, A.~Dekorsy, and P.~Popovski, ``Scheduling for
  ground-assisted federated learning in {LEO} satellite constellations,'' in
  \emph{Proc. 30th Eur. Signal Process. Conf. (EUSIPCO)}, Belgrade, Serbia,
  Aug. 2022, pp. 1102--1106.

\bibitem{razmi2022board}
{N. Razmi, B. Matthiesen, A. Dekorsy, and P. Popovski}, ``On-board federated
  learning for dense {LEO} constellations,'' in \emph{Proc. IEEE Int. Conf.
  Commun. (ICC)}, 2022, pp. 4715--4720.

\bibitem{elmahallawy2023optimizing}
M.~Elmahallawy and T.~Luo, ``Optimizing federated learning in {LEO} satellite
  constellations via intra-plane model propagation and sink satellite
  scheduling,'' in \emph{Proc. IEEE Int. Conf. Commun. (ICC)}, 2023, pp.
  3444--3449.

\bibitem{razmi2024onboard}
N.~Razmi, B.~Matthiesen, A.~Dekorsy, and P.~Popovski, ``On-board federated
  learning for satellite clusters with inter-satellite links,'' \emph{IEEE
  Trans. Comm.}, Jan. 2024, doi: 10.1109/TCOMM.2024.3356429.

\bibitem{zhai2023fedleo}
Z.~Zhai, Q.~Wu, S.~Yu, R.~Li, F.~Zhang, and X.~Chen, ``{FedLEO}: An
  offloading-assisted decentralized federated learning framework for low earth
  orbit satellite networks,'' \emph{IEEE Trans. Mobile Comput.}, pp. 1--18,
  Aug. 2023, doi: 10.1109/TMC.2023.3304988.

\bibitem{wu2022DSFL}
C.~Wu, Y.~Zhu, and F.~Wang, ``{DSFL}: Decentralized satellite federated
  learning for energy-aware {LEO} constellation computing,'' in \emph{Proc.
  IEEE Int. Conf. Satellite Comput. (Satellite)}, 2022, pp. 25--30.

\bibitem{mcmahan2017communication}
B.~McMahan, E.~Moore, D.~Ramage, S.~Hampson, and B.~A. y~Arcas,
  ``Communication-efficient learning of deep networks from decentralized
  data,'' in \emph{Proc. Int. Conf. Artif. Intell. Stat. (AISTATS)}, Apr. 2017,
  pp. 1273--1282.

\bibitem{koloskova2020unified}
A.~Koloskova, N.~Loizou, S.~Boreiri, M.~Jaggi, and S.~Stich, ``A unified theory
  of decentralized {SGD} with changing topology and local updates,'' in
  \emph{Proc. Int. Conf. Mach. Learn. (ICML)}, vol. 119, Jul. 2020, pp.
  5381--5393.

\bibitem{guo2022hybrid}
Y.~Guo, Y.~Sun, R.~Hu, and Y.~Gong, ``Hybrid local {SGD} for federated learning
  with heterogeneous communications,'' in \emph{Proc. Int. Conf. Learn.
  Representations (ICLR)}, 2022.

\bibitem{trevor2023starlink}
\BIBentryALTinterwordspacing
T.~Sesnic, ``Starlink group 2-9 | falcon 9 block 5,'' Everyday Astronaut, May
  2023. [Online]. Available:
  \url{https://everydayastronaut.com/starlink-group-2-9-falcon-9-block-5-2/}
\BIBentrySTDinterwordspacing

\bibitem{abdelsadek2022future}
M.~Y. Abdelsadek, A.~U. Chaudhry, T.~Darwish, E.~Erdogan, G.~Karabulut-Kurt,
  P.~G. Madoery, O.~B. Yahia, and H.~Yanikomeroglu, ``Future space networks:
  Toward the next giant leap for humankind,'' \emph{IEEE Trans. Commun.},
  vol.~71, no.~2, pp. 929--1107, Feb. 2023.

\bibitem{erwin2021thales}
\BIBentryALTinterwordspacing
S.~Erwin, ``Thales alenia selected to build telesat's broadband
  constellation,'' Space News, Feb. 2021. [Online]. Available:
  \url{https://spacenews.com/thales-alenia-selected-to-build-telesats-broadband-constellation/}
\BIBentrySTDinterwordspacing

\bibitem{chaudhry2021laser}
A.~U. Chaudhry and H.~Yanikomeroglu, ``Laser intersatellite links in a starlink
  constellation: A classification and analysis,'' \emph{IEEE Veh. Technol.
  Mag.}, vol.~16, no.~2, pp. 48--56, Jun. 2021.

\bibitem{toyoshima2020recent}
M.~Toyoshima, ``Recent trends in space laser communications for small
  satellites and constellations,'' \emph{J. Lightw. Technol.}, vol.~39, no.~3,
  pp. 693--699, Feb. 2020.

\bibitem{rob2021why}
\BIBentryALTinterwordspacing
R.~Pegoraro, ``Why {Elon Musk’s} starlink satellites are beaming data by
  laser,'' Fast Company, Oct. 2021. [Online]. Available:
  \url{https://www.fastcompany.com/90681156/elon-musk-starlink-satellite-lasers}
\BIBentrySTDinterwordspacing

\bibitem{heine2010optical}
F.~Heine, H.~K{\"a}mpfner, R.~Czichy, R.~Meyer, and M.~Lutzer, ``Optical
  inter-satellite communication operational,'' in \emph{Proc. Mil. Commun.
  Conf. (MILCOM)}, Oct. 2010, pp. 1583--1587.

\bibitem{kaushal2016optical}
H.~Kaushal and G.~Kaddoum, ``Optical communication in space: Challenges and
  mitigation techniques,'' \emph{IEEE Commun. Surveys Tuts.}, vol.~19, no.~1,
  pp. 57--96, 1st Quart. 2017.

\bibitem{cao2022network}
J.~Cao, S.~Zhang, Q.~Chen, H.~Wang, M.~Wang, and N.~Liu, ``Network-wide task
  offloading with {LEO} satellites: A computation and transmission fusion
  approach,'' \emph{arXiv preprint arXiv:2211.09672}, 2022.

\bibitem{al2021session}
A.~Al-Hourani, ``Session duration between handovers in dense {LEO} satellite
  networks,'' \emph{IEEE Wireless Commun. Lett.}, vol.~10, no.~12, pp.
  2810--2814, Dec. 2021.

\bibitem{sergeev2018horovod}
A.~Sergeev and M.~Del~Balso, ``Horovod: fast and easy distributed deep learning
  in tensorflow,'' \emph{arXiv preprint arXiv:1802.05799}, 2018.

\bibitem{bottou2018optimization}
L.~Bottou, F.~E. Curtis, and J.~Nocedal, ``Optimization methods for large-scale
  machine learning,'' \emph{Siam Rev}, vol.~60, no.~2, pp. 223--311, 2018.

\bibitem{karimireddy2020scaffold}
S.~P. Karimireddy, S.~Kale, M.~Mohri, S.~Reddi, S.~Stich, and A.~T. Suresh,
  ``{SCAFFOLD}: Stochastic controlled averaging for federated learning,'' in
  \emph{Proc. Int. Conf. Mach. Learn. (ICML)}, vol. 119, Jul. 2020, pp.
  5132--5143.

\bibitem{FedAvg}
B.~McMahan, E.~Moore, D.~Ramage, S.~Hampson, and B.~A. y~Arcas,
  ``Communication-efficient learning of deep networks from decentralized
  data,'' in \emph{Proc. Int. Conf. Artificial Intell. Stat. (AISTATS)}, 2017,
  pp. 1273--1282.

\bibitem{wang2022demystifying}
J.~Wang, S.~Wang, R.-R. Chen, and M.~Ji, ``Demystifying why local aggregation
  helps: Convergence analysis of hierarchical {SGD},'' in \emph{Proc. AAAI
  Conf. Artif. Intell.}, vol.~36, no.~8, 2022, pp. 8548--8556.

\bibitem{liu2023hierarchical}
L.~Liu, J.~Zhang, S.~Song, and K.~B. Letaief, ``Hierarchical federated learning
  with quantization: Convergence analysis and system design,'' \emph{{IEEE}
  Trans. Wireless Commun.}, vol.~22, no.~1, pp. 2--18, Jan. 2023.

\bibitem{FedProx}
T.~Li, A.~K. Sahu, M.~Zaheer, M.~Sanjabi, A.~Talwalkar, and V.~Smith,
  ``Federated optimization in heterogeneous networks,'' in \emph{Proc. Mach.
  Learn. Syst. (MLSys)}, 2020, pp. 429--450.

\bibitem{helber2019eurosat}
P.~Helber, B.~Bischke, A.~Dengel, and D.~Borth, ``Euro{SAT}: A novel dataset
  and deep learning benchmark for land use and land cover classification,''
  \emph{IEEE J. Sel. Topics Appl. Earth Observ. Remote Sens.}, vol.~12, no.~7,
  pp. 2217--2226, Jul. 2019.

\bibitem{jean2016combining}
N.~Jean, M.~Burke, M.~Xie, W.~M. Davis, D.~B. Lobell, and S.~Ermon, ``Combining
  satellite imagery and machine learning to predict poverty,'' \emph{Science},
  vol. 353, no. 6301, pp. 790--794, 2016.

\bibitem{zhu2024over}
J.~Zhu, Y.~Shi, Y.~Zhou, C.~Jiang, W.~Chen, and K.~B. Letaief, ``Over-the-air
  federated learning and optimization,'' \emph{IEEE Internet Things J.}, Jan.
  2024, doi: 10.1109/JIOT.2024.3352280.

\end{thebibliography}

\end{document}